%% file: main.tex
\documentclass[11pt]{article}
\usepackage{graphicx} 
\usepackage{amsthm}
\usepackage{amsmath}
\usepackage{amsfonts}
\usepackage{algorithm}
\usepackage{algpseudocode}
\usepackage{listings}           
\usepackage{fullpage}
  \usepackage[multiple]{footmisc}
\usepackage{thmtools}
\usepackage{todonotes}
\usepackage{cleveref}
\usepackage{wrapfig}

\newtheorem{definition}[equation]{Definition}

\newtheorem*{theorem*}{Theorem}
\newcommand{\whp}{whp}

\newcommand{\myparagraph}[1]{\vspace{.05in}\noindent {\bf #1.}}
\newcommand{\rlfmis}{R-LFMIS}

\title{Parallel Batch-Dynamic Maximal Independent Set}
\date{}

\newtheorem{theorem}{Theorem}[section]
\newtheorem{corollary}{Corollary}[theorem]
\newtheorem{lemma}[theorem]{Lemma}

\newcommand{\LFMIS}[1]{M({#1})}

\newcommand{\InfluenceMIS}{I}
\newcommand{\enew}{E_{\text{new}}}
\newcommand{\gnew}{G_{\text{new}}}
\newcommand{\depth}{depth}

\author{Guy Blelloch\footnotemark[1] \\ guyb@cs.cmu.edu \and Andrew Brady\footnotemark[1]  \\ acbrady@andrew.cmu.edu \and  Laxman Dhulipala\footnotemark[2] \\ laxman@umd.edu \and Jeremy Fineman\footnotemark[3] \\ jf474@georgetown.edu \and   Jared Lo\footnotemark[4] \\ jaredlo@hawaii.edu}

\begin{document}

\maketitle

\renewcommand{\thefootnote}{\fnsymbol{footnote}}

\footnotetext[1]{Carnegie Mellon University, Pittsburgh, PA }
\footnotetext[2]{University of Maryland, College Park, MD}
\footnotetext[3]{Georgetown University, Washington, DC}
\footnotetext[4]{University of Hawaii, Honolulu, Hawaii}

\setcounter{footnote}{0}
\renewcommand{\thefootnote}{\arabic{footnote}}

\input{0_abstract}

\input{1_intro}

\input{2_prelims}

\input{3_influence}
\input{alt_algocode}

\input{5_algorithm}
\input{7_conclusion}

\myparagraph{Acknowledgements} This material is based upon work performed while attending the AlgoPARC Workshop on Parallel Algorithms and Data Structures at the University of Hawaii at Manoa, in part supported by the National Science Foundation under Grant CCF2452276. 
This work was also supported by NSF grants CCF2403235, CNS231719, CCF2119352,  CCF1919223, CCF1918989, and CCF2106759.

\bibliographystyle{plain}
\bibliography{bibliography/strings,bibliography/main}

\include{8_appendix}

\end{document}

%% file: 0_abstract.tex
\begin{abstract}

    We develop the first theoretically-efficient algorithm for maintaining the maximal independent set (MIS) of a graph in the parallel batch-dynamic setting. In this setting, a graph is updated with batches of edge insertions/deletions, and for each batch a parallel algorithm updates the maximal independent set to agree with the new graph.   A batch-dynamic algorithm is considered efficient if it is work efficient (i.e., does no more asymptotic work than applying the updates sequentially) and has polylogarithmic depth (parallel time).
    In the sequential setting, the best known dynamic algorithms for MIS, by Chechik and Zhang (CZ) [FOCS19] and Behnezhad et al. (BDHSS) [FOCS19], take $O(\log^4 n)$ time per update in expectation.
    For a batch of $b$ updates, our algorithm has $O(b \log^3 n)$ expected work and polylogarithmic depth with high probability (whp). 
    It therefore outperforms the best algorithm even in the sequential dynamic case ($b = 1)$.

    As with the sequential dynamic MIS algorithms of CZ and BDHSS, our solution maintains a lexicographically first MIS based on
    a random ordering of the vertices.    Their analysis relied on a result of Censor-Hillel, Haramaty and Karnin [PODC16]
    that bounded the ``influence set" for a single update, but surprisingly, the influence of a batch is not simply the union of the influence of each update therein.  We therefore develop a new approach to analyze the influence set for a batch of updates.
    Our construction of the batch influence set is natural and leads to an arguably simpler analysis than prior work. 
     We then instrument this construction to bound the work of our algorithm. To argue our depth is polylogarithmic, we prove that the number of subrounds our algorithm takes is the same as depth bounds on parallel static MIS.

    \end{abstract}

%% file: 1_intro.tex
\section{Introduction}

A maximal independent set (MIS) of an undirected graph is a maximal set of (pairwise) non-adjacent vertices. The MIS is one of the most fundamental graph structures, with many applications and extensive theoretical study \cite{Luby86,alon1986fast,BDHSS19,chechik19fully,CHK16,bateni2023optimal,BFS12,FN20,bernstein26adversary,ailon2008aggregating,daum2012leader,linial1987distributive,ghaffari2024near,nguyen2008constant}. 
In the sequential static setting,\footnote{By static, we mean that given a (single) graph, output a response; in contrast, the dynamic setting is one where the graph changes.} one can find an MIS on $n$ vertices and $m$ edges in $O(n+m)$ time in the following way. First, pick any permutation $\pi$ of the vertices. Then iterate through the vertices in $\pi$ order, skipping the vertices that are eliminated,
and otherwise adding the vertex to the MIS and eliminating it and its neighbors from the graph.
An MIS generated in this way is called a lexicographic-first MIS (LFMIS).

In the parallel static setting Karp and Wigderson~\cite{karp84fast} described deterministic and randomized polylog-depth parallel algorithms for finding an MIS, but their algorithms require at least $\Omega(n^2)$ work.  Luby~\cite{Luby86} improved the results with a pair of simple and influential randomized polylog-depth algorithms that require just linear work.\footnote{As Luby described the algorithms they use $O(m \log n)$ work, but they are easily modified to run in $O(m)$ work.  Alon, Babai, and Itai~\cite{alon1986fast} independently developed a very similar algorithm a few months later.}  All of these algorithms are based on finding an independent set, eliminating the neighborhood of this set, and repeating.  
Generating the LFMIS for a fixed order is known to be NC$^1$-complete for P~\cite{Cook85,Luby86}, and hence unlikely to have a polylogarithmic-depth solution. 
Blelloch, Fineman, and Shun~\cite{BFS12}, however, showed that on average, across
all orderings, finding the LFMIS is highly parallel; specifically, the longest chain of  dependencies, called the dependence depth, is polylogarithmic with high probability (whp)\footnote{Formally, we say that $f(n,c) = O(g(n))$ with high probability if there exists $c_0,k$ such that for $c \ge c_0$, there exists $n_0$ such that for $n \ge n_0$, $f(n,c) \le ck g(n)$ with probability at least $1-\frac{1}{n^c}$. }, and hence a natural parallelization takes polylog rounds whp.   
Fischer and Noever~\cite{FN20} then improved the bound to $O(\log n)$ depth whp.  These results imply another work-efficient polylog-depth randomized algorithm~\cite{BFS12}---pick a random $\pi$ ordering, 
and then in rounds add a vertex to the MIS as soon as it has no earlier uneliminated neighbor,
and eliminate it from the graph as soon as it has an earlier neighbor in the MIS.  Henceforth, we will use \rlfmis{} to indicate an LFMIS in random order.

In the sequential dynamic setting, Assadi, Onak, Schieber, and Solomon~\cite{assadi2018fully} present the first non-trivial
fully dynamic algorithm supporting updates in $O(m^{3/4})$ time (deterministic, worst case).  This was then
improved to $\tilde{O}(n^{2/3})$~\cite{du2018improved,gupta21simple}, and then to $\tilde{O}(m^{1/3})$ time, using randomization and assuming an oblivious
adversary~\cite{assadi2019fully}. The first sequential dynamic algorithms with polylog
time were developed independently by Chechik and Zhang (CZ)~\cite{chechik19fully} and by Behnezhad et al. (BDHSS)~\cite{BDHSS19}.  These algorithms
are randomized and both run in $O(\log^4 n)$ time assuming an oblivious adversary.\footnote{More precisely, BDHSS achieves a $O(\log^2 n \log^2 \Delta)$ bound, where $\Delta$ is the maximum degree of a vertex in the graph.  Das and Kuszmaul~\cite{das2026history} conjectured that by replacing the binary search tree data structure with a van Emde Boas tree, the runtime of BDHSS, at least for lexicographic-first maximal matching, can be improved to $O(\log^3 n \log \log n)$.} They are the current state
of the art.    Both these algorithms are based on the \rlfmis{} and rely on two crucial properties of a \rlfmis{}. The first, by Censor-Hillel, Haramaty and Karnin (CHK)~\cite{CHK16}, is that if an edge or vertex is inserted/deleted, the expected change in the MIS (the \emph{recourse}) is constant, and in fact at most 1.  
The second is that when processing the $i^{th}$ vertex in the $\pi$ order, the remaining
graph has degree $O(\frac{n \log n}{i})$ whp~\cite{BFS12}. BDHSS also relies on the property that the
\rlfmis{} dependence graph is shallow whp~\cite{BFS12,FN20}. 

A nice property of the CZ and BDHSS algorithms is that they are history independent---no matter the history of updates, the MIS depends only on the current graph and the permutation order, which is chosen at the start.
Note that the oblivious adversary assumption is necessary; an adaptive adversary would effectively know the random permutation and could create pathological cases accordingly. 

In this paper, we present results for MIS in the parallel batch-dynamic setting.   In this setting, one maintains a data structure supporting batches of updates that are processed in parallel.   One motivation of this setting is the desire to achieve good performance
on applications with high update rates, where processing updates sequentially could be too slow.
There has been extensive recent work on graph algorithms in this setting~\cite{simsiri2016work,AABD19,MDKLSW24,tseng2018batch,AABDW20, AB24,man26ufo,ABT20,TDS22,GT24, BB25,GK25sparsifiers,HM25,GK25coloring,liu2022parallel,GK25coreness,DLSY21}.  
Such parallel batch-dynamic algorithms have also been used to develop efficient static parallel graph algorithms~\cite{AB23,GGQ23,BGJV25}. We note there has been experimental interest in implementing theoretically efficient batch-dynamic graph algorithms \cite{IkramBAB25,MDKLSW24,man26ufo}. There has also been work in implementing batch-dynamic MIS specifically, though without theoretical guarantees \cite{trivedi24fast,nijhara25fast}. 
Typically, the goal of the batch-dynamic setting is for performance to scale linearly with the number of updates (i.e., if the best sequential dynamic algorithm has $f(n)$ work, the goal would be to achieve roughly $b f(n)$ work on a batch of size $b$), while also achieving polylog depth.  However, some results in this setting are actually able to achieve asymptotically less work per update by removing redundancies~\cite{IkramBAB25, man26ufo, AABDW20, tseng2018batch} in large batches. 
We know of no prior theoretically-efficient work in the batch-dynamic setting for the maximal independent set problem.
Our main result is the following.

\begin{theorem*}[Informal]
There exists an algorithm that maintains a maximal independent set under batch insertions and deletions of edges against an oblivious adversary, where for a size-$b$ batch of edge updates, the work is $O(b \log^3 n)$ in expectation and the depth is $O(\mathrm{polylog}(n))$ whp.
\end{theorem*}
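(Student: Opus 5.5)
Fix a uniformly random priority order $\pi$ on the vertices once, at initialization; the oblivious adversary cannot see it. The maintained set is always the \rlfmis{} $\LFMIS{G}$ of the current graph under $\pi$, so the output depends only on the current graph (history independence). The argument then splits into three parts: bound the amount of the MIS that a batch can disturb, give an update procedure whose work is charged to that disturbance, and bound the depth via static dependence depth.

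\emph{Step 1 (batch influence set).} Let $G$ be the graph before the batch and $\gnew$ the graph after it. Define the influence set $\InfluenceMIS$ of the batch as the set of vertices whose status (in or out of the MIS, together with their earliest MIS neighbor) differs between $\LFMIS{G}$ and $\LFMIS{\gnew}$. The CHK single-update bound does not compose across a batch, so I would instead build $\InfluenceMIS$ directly by a process in $\pi$ order:
\begin{itemize}
\item A vertex becomes influenced if it is an endpoint of an updated edge whose status could change, or if its earliest MIS neighbor (or its own membership) changes because of an earlier influenced vertex.
\end{itemize}
I would prove $\mathbb{E}[|\InfluenceMIS|]=O(b)$ with a charging argument. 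Each influenced vertex is charged to the update that first triggered its chain. A CHK-style swap argument on $\pi$ then shows that each update is charged an expected $O(1)$ vertices, even when chains from several updates interleave. To make this work I would reveal $\pi$ lazily and argue that the first time a chain reaches a new vertex, that vertex is equally likely to be any of a set of candidates. This gives the constant expected branching needed.

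\emph{Step 2 (work).} The algorithm processes $\InfluenceMIS$ in $\pi$ order. Every vertex stores its neighbors in a priority-ordered structure and keeps a count of its earlier MIS neighbors. When an influenced vertex $v$ changes status, it notifies only neighbors later than $v$ whose status could flip. The cost per influenced vertex is its degree in the residual graph at its rank. By the BFS12 bound this residual degree is $O(n\log n/i)$ at rank $i$. Averaging over the random rank of an influenced vertex contributes a $\log n$ factor, and the search-tree or hashing operations contribute the other polylog factors. Together with Step 1 this should give $O(b\log^3 n)$ expected work. I expect this step to be the main obstacle: charging each neighbor scan to influence without a double count across the updates in the batch. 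I would first attempt an accounting in which a neighbor $u$ of $v$ is examined only if $u$ is itself influenced or $v$ ranks below $u$'s earliest MIS neighbor, so that the total number of examinations is bounded by the sum over influenced $v$ of the residual degree.

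\emph{Step 3 (depth).} Run the processing in rounds, as in the parallel static \rlfmis{} algorithm. In each round, every influenced vertex whose earlier influenced neighbors are all settled settles its own status. The number of rounds is at most the length of the longest dependence chain in $\gnew$ under $\pi$. By BFS12/FN20 that length is $O(\log n)$ whp. Each round uses parallel primitives (semisort, priority-ordered search trees) of polylog depth, so the total depth is polylogarithmic whp.
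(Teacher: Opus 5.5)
Your proposal has a genuine gap at its core, in Step 1. Charging each influenced vertex to ``the update that first triggered its chain'' and then running a CHK-style swap argument is exactly the method that fails for batches. The CHK argument is all-or-nothing: for a single update, the influence set is either a fixed set $S'$ (obtained by moving the later endpoint to the front of $\pi$) or empty, and it is nonempty with probability about $1/|S'|$. In a batch, some updates fire and others do not, and the ones that fire can jointly influence vertices that none of them influences alone. So there is no per-update set distributed as in the single-update case. Moreover, the same $\pi$ decides both the chains and the ranks, so ``first trigger'' charging does not decouple across updates. ``Constant expected branching'' does not rescue this either. A picked undecided vertex can make arbitrarily many alive neighbors uncertain, and a chain process with constant mean offspring $\ge 1$ has no bounded total size. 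What you need is an exact cancellation, not a branching bound.

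The missing idea is a potential argument on an influence-analysis process run in $\pi$ order on $G\cup E_+$. The $b$ updated edges start \emph{undecided}. A picked undecided vertex joins $I$ and makes all its neighbors undecided. A picked decided vertex eliminates its neighbors across decided edges and makes its neighbors across undecided edges undecided. Under this process:
\begin{itemize}
\item a decided edge between an undecided vertex and an alive decided vertex is equally likely to add or to remove one undecided vertex;
\item an undecided edge disappears when either endpoint is picked, adding at most one undecided vertex.
\end{itemize}
Hence $\mathbb{E}[|V^?_{i+1}|+|E^?_{i+1}|]\le |V^?_i|+|E^?_i|-|V^?_i|/(n-i+1)$, and $|V^?_i|/(n-i+1)$ is exactly the probability that step $i$ adds to $I$. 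Induction gives $\mathbb{E}|I|\le b$, and a separate induction over $\pi$ gives $M(G)\triangle M(G_{new})\subseteq I$. Your $I$ also includes vertices whose earliest MIS neighbor changes. Nothing in such an argument bounds those by $O(b)$, since they are driven by residual degrees of flipped vertices; they should be charged to work instead.

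Steps 2 and 3 rest on claims that need their own lemmas. In Step 2, the rank of an influenced vertex is neither uniform nor independent of its being influenced. Also, a flip at rank $i$ is not one-hop: each residual neighbor that loses its eliminator must scan its own residual neighborhood for a replacement. The natural cost is therefore $\min(\Delta^2,(n\log n/i)^2)$, which averages to $\Theta(\Delta\log n)$ over a uniform rank.

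The fix is to instrument the same process. Take $V^?_1=\emptyset$ and $|E^?_1|=b$. Decided edges have zero net effect on $|V^?|$, and undecided edges add at most $2b/(n-i+1)$ per step. So $\mathbb{E}|V^?_j|\le 4bj/n$ for $j<n/4$, and step $j$ contributes an influenced vertex with probability at most $8bj/n^2$. The factor $j$ cancels one power of $n\log n/j$, giving $O(b\log^2 n\log\Delta)$.

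Your per-vertex count of earlier MIS neighbors is also incompatible with residual-degree cost. Keeping counts correct forces each flip to touch every later neighbor in the full graph, up to $\Delta$ of them. This is why the paper orients edges by (elimination time, rank) and buckets in-neighbors by doubling elimination-time ranges.

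In Step 3, certifying that all earlier influenced neighbors of $v$ are settled requires knowing which earlier neighbors are influenced. But influence is discovered on the fly: an untouched earlier neighbor can still be reached by a chain from earlier vertices. So this certification is essentially sequential. In addition, a rule that waits on all earlier influenced neighbors is governed by monotone paths inside $I$, not by the AE-paths that BFS12/FN20 bound.

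The paper instead does the following:
\begin{itemize}
\item it propagates optimistically, letting vertices flip more than once;
\item it proves that a vertex finalizing in round $r$ ends an AE-path of length at least $r-4$ in the new graph, so each shell takes $O(\log n)$ rounds whp;
\item it processes $O(\log^3 n)$ shells with growth ratio $1+1/\log^2 n$, so a vertex has a same-shell residual neighbor with probability $O(1/\log n)$ and repeated processing does not inflate the expected work.
\end{itemize}
Your proposal has no mechanism that reconciles $\pi$-order processing (needed for work) with round-parallel processing (needed for depth).
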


We note that even when interpreted as a sequential algorithm, where there is only one update at a time, our result improves
the current state of the art~\cite{chechik19fully,BDHSS19}.
As with CZ and BDHSS, our results are based on \rlfmis{} and are history independent.  
To work in the batch setting, however, we need to develop a new approach to generalize the recourse bounds of CHK.  In particular,
the CHK bounds rely on the definition of an influence set for a single update---i.e., a set of vertices that could change from being
in or out of the MIS due to a single update. They show the size of the influence set to be at most one in expectation. 
Unfortunately, their definition of influence does not compose across a batch of updates---that is, contrary to intuition, the union of the influence sets for two or more updates made in any order does not necessarily cover the influence set for the updates made together (we show an example in the next section).   We therefore develop a significantly different technique and proof for bounding the size of a batch influence set. 

Towards this end, we define the influence analysis version of a problem. In this influence analysis, some input vertices and edges are initially set to undecided, meaning that their presence in the graph is unknown. As output, one wants to determine the full set of vertices whose MIS status is unknown (or influenced) as a consequence of the unknown inputs. 
We then show that for the influence analysis version of \rlfmis{}:
if $b$ input edges and vertices are undecided (can either be in or out of the graph), then the number of influenced outputs (vertices we are not sure
are in or out of the MIS) is $\leq b$ in expectation.   
This, in turn, implies the same bound on the number of vertices that can change
state (in and out of MIS) with a mix of $b$ edge insertions and deletions.
The approach we take for this proof is very different than in CHK due to the issue mentioned above.

We then develop an algorithm that has some similarities to CZ and BDHSS, but is designed for the parallel setting. 
In parallel, we process a subset of vertices $A$ whose status has changed (entered or left the MIS) to fix their neighbors as necessary. If we were to choose $A$ to be the  unprocessed influenced vertex with minimum permutation time, we would have a sequential algorithm. To process a vertex $v$ at permutation position $i$, we need to visit its two-hop neighborhood in the graph that remains at step $i$ of \rlfmis{}. 
We perform constant work for each of these visited vertices. 
We show that
the probability that \rlfmis{} step $i < n/4$ contributes a vertex to the influence set is at most $\frac{2bi}{n^2}$.
Along with
the $O(\frac{n \log n}{i})$ high probability bound on the degree, this leads
almost immediately to a bound of $O(\log^3 n)$ expected sequential time per update.

However, through a careful correctness argument and use of data structures, our algorithm also supports processing many vertices in parallel, even when this involves processing later permutation time vertices before earlier permutation time vertices. 
This comes with the drawback that we may require processing the same vertex multiple times, thereby increasing the work. 
Our parallel algorithm mitigates this issue, and achieves work-efficiency with our sequential variant, by processing vertices in a sequence of growing intervals of the permutation (which we call shells), similar to~\cite{BFS12}.
We also need to bound the number of parallel rounds (or dependence depth) of our algorithm, which does not immediately follow from the bound of the number of rounds of parallel \rlfmis{}.

\section{Technical Overview}

We begin by discussing the challenges with adapting the previous work (CHK~\cite{CHK16}, CZ~\cite{chechik19fully}, BDHSS~\cite{BDHSS19}) to the parallel batch-dynamic setting.\footnote{A reader interested in understanding our work without relation to prior work can skip to Section~\ref{sec:ourApproach}.}

\myparagraph{Challenge: Sequential influence set can iterate many rounds} 
The influence set (developed by CHK~\cite{CHK16}) is a superset of the vertices that actually change mark (actually enter or leave the \rlfmis{}). 
The influence set is defined iteratively. 
The base set $S_0$ is either empty if the \rlfmis{} is valid following the edge update $(u,v)$, or, if the \rlfmis{} is no longer valid, $S_0$ contains only the endpoint $v$ that is later in the permutation~$\pi$. The $r$th set $S_r$ includes all of $S_{r-1}$, plus the following vertices: the marked vertices that have at least one earlier neighbor in $S_{r-1}$ (these vertices may leave the MIS) and the unmarked vertices for which every marked neighbor is in the influence set (these vertices may join the MIS).
The influence set grows across iterations in this way until $S_r=S_{r+1}$ (until it stabilizes). We refer to the number of rounds of this propagation, i.e., the minimum $r$ for which $S_r = S_{r+1}$ as the \emph{influence propagation depth}. For an example of an influence set, see Figure~\ref{fig:exampleInfluence}.

At first glance, CHK's influence propagation process seems parallel, with parallel depth proportional to the number of rounds of influence set propagation, especially since the influence propagation depth is shown to be $O(1)$ in expectation. One would therefore hope to leverage known parallel \rlfmis{} upper bounds to get high probability bounds on the propagation depth. 
But even though parallel static \rlfmis{} finishes in $O(\log n)$ rounds whp, the depth of influence set propagation is not in fact always bounded by the depth of parallel \rlfmis{}: there exist graphs and permutations where parallel static LFMIS would take $O(\log n)$ rounds, but influence set propagation would take $\Theta(n)$ rounds.

Figure \ref{fig:sierra} illustrates such an example, where the influence propagation depth is very high but the dependence depth of LFMIS is low.  In particular, consider the simple path of vertices with permutation values $2$ to $n$ (for even $n$), where each vertex $i$ is connected to vertex $i+1$. (The figure illustrates $n=12$.) Also connect vertex 2 to vertex 4, and vertex 5 to every even vertex with permutation value at least $8$. In this graph, vertex $2$ and every odd vertex numbered at least 5 are marked (in the MIS). Now consider what happens when the edge $(1,2)$ is inserted. 
A static parallel LFMIS would take 3 rounds on the graph without the edge $(1,2)$, and it would take 4 rounds on the graph that includes that extra edge.  
But the influence set propagation depth here is $n-3$. 

The reason that the influence set propagation can take much longer than parallel \rlfmis{} on certain graphs and permutations is that influence set propagation must wait for all eliminators of a vertex to disappear before adding an initially unmarked vertex to the influence set, whereas dependence depth propagation is based on the earliest eliminator of a vertex. 
It is still theoretically possible that influence set propagation has a $O(\text{polylog}(n))$ whp bound, but this does not follow from prior work, and we do not attempt to show this. We instead opt to propagate in a different manner designed to inherit the round upper bound from the dependency depth of the graph.

\begin{figure}
\includegraphics[width=.4\textwidth]{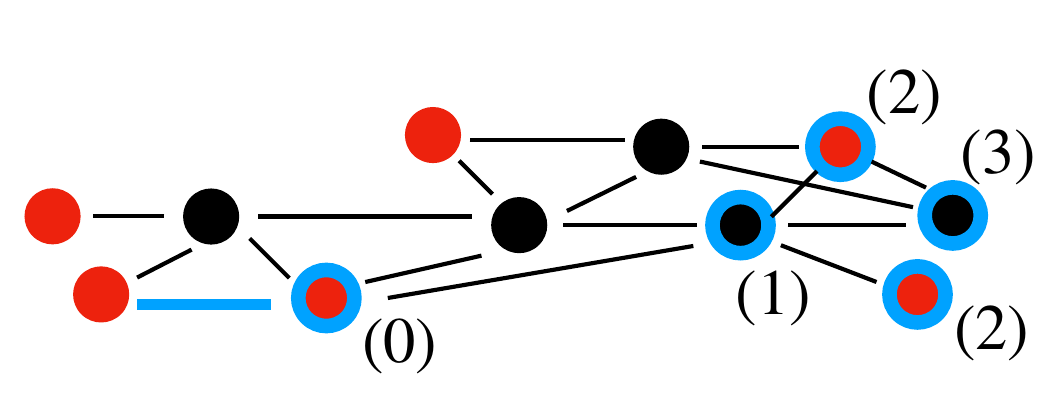}
\hspace{.3in}
\includegraphics[trim = {.8cm 1.5cm .8cm 1.2cm}, clip, width=.5\textwidth]{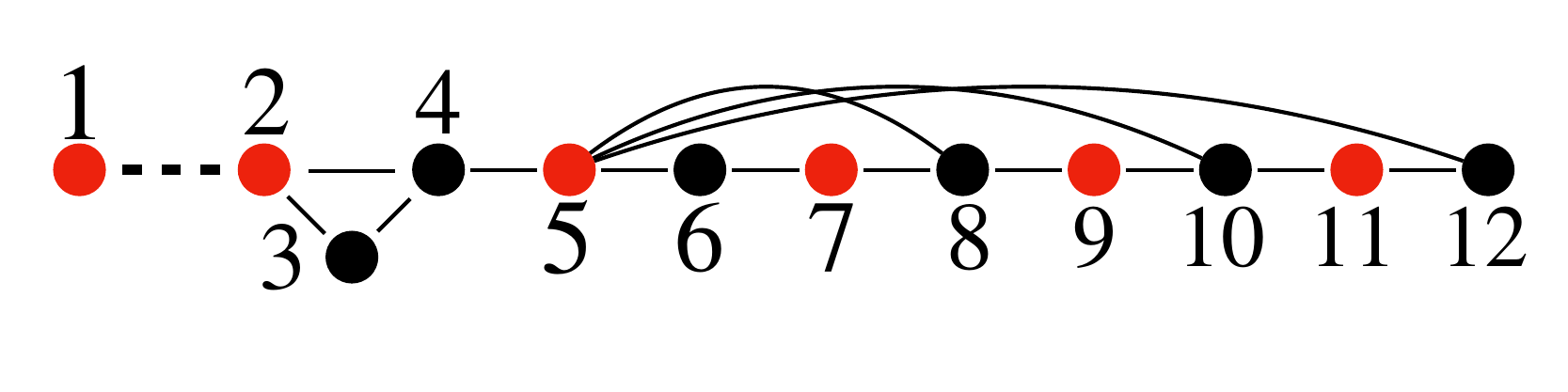}

\caption{Example of (sequential) influence set from CHK~\cite{CHK16}. The MIS of the graph (before the edge update) is marked in red. Here, the blue edge is the edge being inserted. Vertices circled in blue are influenced. In parenthesis is the round when they join the influence set. We have drawn vertices in $\pi$ order from left to right (the leftmost vertex is the earliest in the ordering).  \label{fig:exampleInfluence} }
 \caption{Example graph where influence round propagation time is greater than dependency depth. The vertices are drawn from $\pi$-value left to right (the red vertices thus have permutation values 1,2,5,7,9,11). The red vertices are the marked vertices before the edge update. Consider inserting the dashed edge $(1,2)$ to the graph. The resulting MIS would be $\{1,3,5,7,9,11\}$. The influence set would be $\{2,3,\ldots,12\}$. Specifically, $S_0=\{2\}$,$S_1=\{2,3,4\}$, $S_2=\{2,3,4,5\}$,$S_3=\{2,3,\ldots,6\}$, \ldots, $S_9 = \{2,3,\ldots,12\}$. Thus, influence set propagation takes $9$ rounds. Before the edge update, parallel static \rlfmis{} would take 3 rounds, and after the edge update, parallel static \rlfmis{} would take 4 rounds. By extending the length of the path (as described in main text), the influence set propagation can take $n-3$ rounds, whereas the parallel static \rlfmis{} would continue to take 4 rounds. \label{fig:sierra}}
\end{figure}

\myparagraph{Challenge: Constructing a batch influence set} When trying to construct a batch influence set, a natural proposal is to simply change the base case. That is, start with $S_0$ equal to the set of all conflicts immediately caused by the batch insert, but keep the same definitions for $S_r$ for $r \ge 1$. Note that for $b=1$, this definition is equivalent to CHK's definition. Surprisingly, however, the batch influence set on a batch $B$ of edge updates can be greater than the union of the influence sets of applying these updates one at a time sequentially to the graph. Thus, CHK's proof does not bound the size of this batch influence set. 

For an example of this issue, see Figure \ref{fig:batch-bloat}. Note that the influence set is a superset of the vertices that actually change mark: the influence set can make ``mistakes," which then cause additional vertices to mistakenly join the influence set. When performing updates one at a time, the mistakes from the influence set of one update are erased before the next update begins. However, in a batch update, these mistakes can compound and cause even more vertices to join the influence set. Figure \ref{fig:batch-bloat} is a small example of such a compounded mistake: the triangles cause the individual mistakes, which combine to influence the starred vertex. 

Moreover, their proof method critically relies on there being only a single edge update. In their proof, on an edge update $(u,v)$ where $u < v$, they consider the influence set $S'$ that would occur if $v$ were moved to the front of the permutation. Then, they argue that either $S=S'$ or $S$ is empty, and that with (informally) $\frac{1}{|S|}$ probability, $S$ is empty. This all-or-nothing approach fails in the batch setting, because some edges could succeed to cause influence, other edges could fail, and the edges that succeed could combine forces to cause influence neither could cause individually.

\begin{figure}
\includegraphics[trim={.5cm .2cm .5cm .5cm},clip,width=\linewidth]{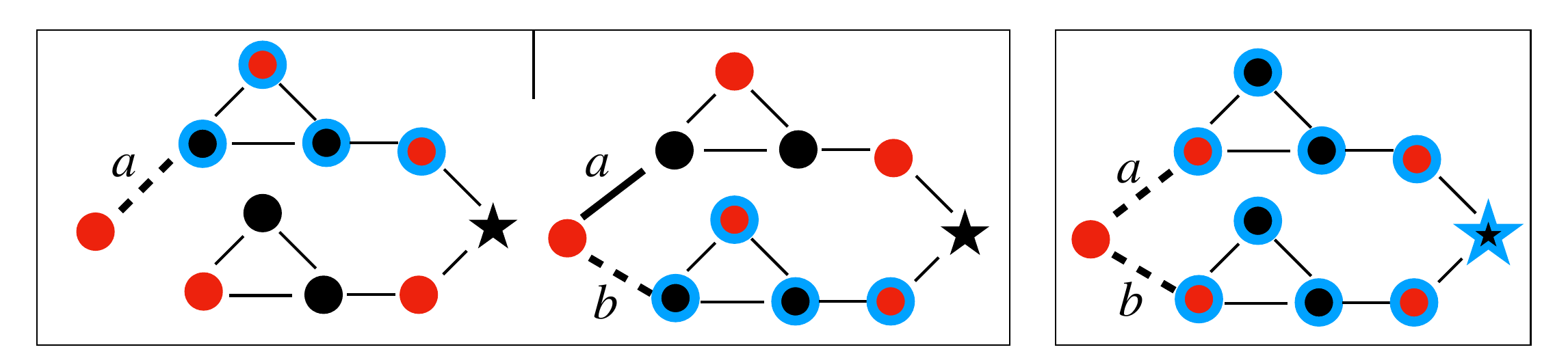}
\caption{\label{fig:batch-bloat}Consider the above graph, where vertices that are marked (in the old graph, before the batch update) are red. Consider inserting edges $a$ and $b$. In the left rectangle, the edges are inserted one at a time, first $a$, and then $b$. The influence set for each insert is highlighted in blue. Note that the star vertex is not influenced when $a$ and $b$ are inserted separately, but is influenced when $a$ and $b$ are inserted together. }
\end{figure}

\myparagraph{Challenge: Bounding the runtime} The runtime arguments of CZ and BDHSS do not transfer to the batch setting for technical reasons. We include a discussion in Section \ref{sec:transferHard}. 

\subsection{Our approach \label{sec:ourApproach}}

Our approach involves (1) defining an influence-analysis version of \rlfmis{}, (2) proving that for $b$ unknown
inputs this version has a batch influence set of expected size at most $b$, (3) relating this directly to the vertices that can possibly change mark in
a set of $b$ edge insertions and deletions, (4) developing a generic algorithm that operates either sequentially
or in parallel for propagating changes through the graph, (5) bounding the work of this propagation, and (6) bounding the depth.
We outline these steps here.

(1) We first define an ``influence analysis'' version of \rlfmis{} where one is given a graph along with an arbitrary subset of its vertices and edges marked as ``undecided'', indicating that we do not know if they are present or not.
It outputs a set of ``influenced'' vertices for which we consequently might not know if they are in the MIS or not.
The set of influenced vertices can be identified using a modified version of the greedy \rlfmis{} algorithm. 
As in LFMIS, the algorithm iterates through the vertices in $\pi$ order, skipping vertices
that are eliminated.
When processing a vertex $v$ that is known to be in the graph (not in the undecided set, and hence in the MIS),
the algorithm eliminates any neighbors of $v$ across decided edges (as with \rlfmis{}), and 
marks any neighbors of $v$ across undecided edges as undecided (we do not know if they are eliminated).
When the algorithm processes a vertex $v$ that is undecided, it does not know if $v$ exists 
and hence whether it is in the MIS. Therefore, it adds $v$ to the influence set
and marks all its neighbors as undecided (since, again, we do not know if they are eliminated).
The vertex $v$ and incident edges are eliminated (as with \rlfmis). See Figure \ref{fig:ourInfluence} for an example of how a graph changes in response to a step of influence analysis.  We only use this algorithm for analysis, we never run it.

(2) To analyze the size of the influenced set, it is relatively straightforward to show that
any expansion of uncertainty to neighbors is offset in expectation by the consumption of undecided edges or the elimination of other undecided edges and vertices. 
We then use this conservation to
show that for the $b$ undecided edges and vertices at the start, the expected total size of the influence set $I$
at the end is $\leq b$ (roughly, each step never grows the expected sum of $b +|I|$ and $b = 0$ at the end).
We believe that even in the single update case, this influence analysis proof is arguably simpler and more intuitive than the original proof.

(3) Similar to CHK, we note that for a vertex to begin marked and end unmarked, an earlier neighbor (or new neighbor from insertion) must end marked, and for a vertex to begin unmarked and end marked, all earlier marked neighbors must have unmarked (or have been disconnected via deletion). If the influenced set is a superset of all vertices before $v$ that change mark, $v$ will also join the influence set if necessary; therefore by induction the influence set includes the earlier neighbors that change mark. 

\begin{figure}
\centering
\includegraphics[trim = {0cm 4cm 0cm 0cm}, clip, width=.5\linewidth]{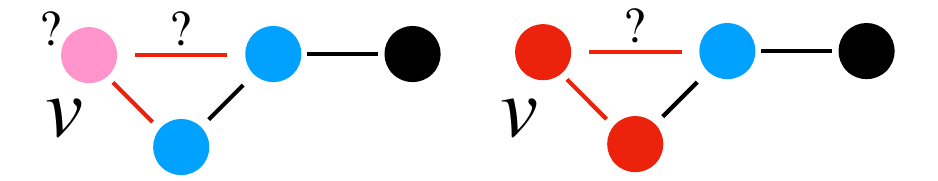}
    \caption{Our batch influence set propagation. The question marks indicate undecided vertices and edges before processing the next vertex~$v$. Both images show different cases for processing vertex $v$ depending on whether it is undecided (left) or not (right). The colors indicate the states after processing $v$: pink means influenced, red means eliminated, and blue means undecided. 
The left shows the case where $v$ is an undecided vertex; when processing $v$, $v$ joins the influence set (pink), its two incident edges are removed (red), and both its neighbors become undecided (blue). The right shows the case where $v$ is not undecided. Here its bottom neighbor is eliminated because it is connected by a not-undecided edge, but its right neighbor becomes undecided because the connecting edge is undecided. 
\label{fig:ourInfluence}}
\end{figure}

(4) Although the batch influence set is a great analysis tool, using it as our algorithm would be too slow. For our actual algorithm, we assign to each vertex an elimination time, which is the permutation time of its earliest marked neighbor, or its own permutation time if no earlier neighbor is marked. Our generic parallel algorithm then operates by repeating the following: find all conflicts in the graph (wherever there is a vertex with an incorrect elimination time), choose a subset of the vertices causing these conflicts, and fix the conflicts caused by that subset. We get a work-depth tradeoff based on how we choose the subset.

The correctness of this algorithm follows from the fact that we keep iterating until all conflicts are removed, that a set of marked vertices without conflicts is necessarily the LFMIS, and that our algorithm makes progress on every step and therefore terminates. 

(5) At a high level, the efficiency argument proceeds as follows. First, we show that the work of the algorithm is bounded by the remaining 2-hop neighborhood of each vertex in the influence set (at the permutation time of that vertex). This corresponds to the neighbors (and neighbors of neighbors) we access looking for a replacement edge when a vertex's eliminator leaves, as well as the cost of updating our out-edge structure. 
Then, using our batch influence set, we will show that the sum of the 2-hop neighborhood sizes is $O(b \log^2 n \log \Delta)$. 

In order to bound our work by the 2-hop neighborhood size of influence set vertices, we process influenced vertices in $O(\log^3 n)$ shells (permutation time ranges). By fully processing earlier vertices before later vertices, we ensure that we do not access many out-neighbors of any vertex in later shells. We use $O(\log^3 n)$ shells so that each vertex has a low probability of sharing a live neighbor in its shell. 
Note that influence set vertices can be repeatedly marked and unmarked (see Figure \ref{fig:repeat_process}), and that a given vertex can be processed a superconstant number of times. 
However, if we process by shells, and a vertex has no neighbors in its shell, we will only process that vertex once. 
Thus, we will only redundantly process a few vertices, fixing our work bounds.

\begin{figure}
\centering
    \includegraphics[trim={0cm 2.5cm 0cm 0cm}, clip, width=.7\linewidth]{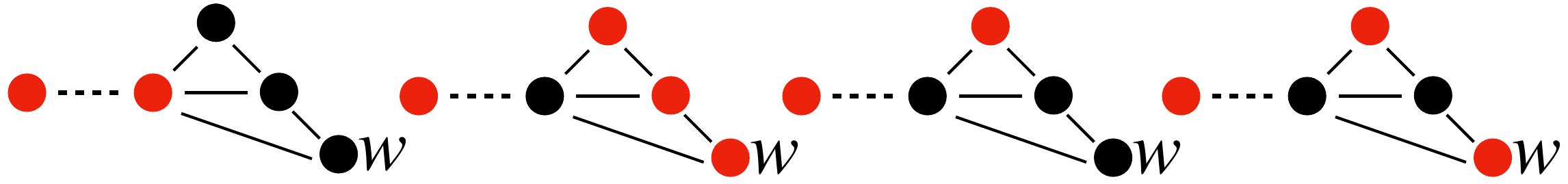}
    \caption{Example of repeating processing of a vertex. In response to the edge insert (dotted), we show selected snapshots from the graph as propagation occurs. Note that vertex $w$ will be repeatedly marked and unmarked as we update its neighbors. In some graphs, we could process the same vertex $O(\log n)$ times. To avoid this when possible, we separate vertices into shells, and process different shells separately. }
    \label{fig:repeat_process}
\end{figure}

To show that the expected sum of the size of the 2-hop neighborhoods is $O(b \log^2 n \log \Delta)$, we first prove that a step $i < n/4$ of \rlfmis{} has roughly a $\frac{bi}{n^2}$ chance of contributing a vertex to the influence set, which will then pay $O((\frac{n \log n}{i})^2)$ cost. Then we have expected $O(\frac{b \log^2 n}{i})$ contribution per step, which leads to the desired bound. Note that steps where $i \ge n/4$ are inexpensive because the degree of the surviving graph is $O(\log n)$ whp. 

The proof that each step $i < n/4$ has a low probability of contributing to the influence set informally proceeds as follows. 
As before, the expansion of uncertainty to neighbors is offset by the maybe vertices being eliminated. 
Because there is a (roughly) $\frac{b}{n}$ chance a batch edge will be selected, causing a vertex to join the undecided vertex set, the expected size of the undecided vertex set is roughly upper bounded by $\frac{bi}{n}$. Since an (alive) undecided vertex has a $\frac{1}{n-i}$ chance of being chosen in step $i$, we get the probability $\frac{bi}{n^2}$. 

(6) For our depth bound, we argue that the number of rounds we need to process a shell is bounded by the dependence depth of the graph, which is $O(\log n)$ whp. Our overall depth then becomes $O(\log^5 n)$ by (pessimistically) multiplying all of the following: we have $O(\log^3 n)$ shells (empty shells are skipped, but for the depth bound we apply the pessimal bound); each shell takes $O(\log n)$ rounds to process; each round uses parallel primitives whose depth is $O(\log n)$ even for large batches.

%% file: 2_prelims.tex
\section{Preliminaries}

\myparagraph{Parallel model} We assume the binary fork-join model \cite{blelloch2019optimal}. In this model, a process can spawn two new processes with a fork, which can operate independently, and later join together when finished. 
The model permits concurrent writes; in particular, if two threads try to write $a$ and $b$ concurrently to the same variable, an arbitrary choice of $a$ and $b$ will be written.
Nested parallelism is permitted: these new processes can themselves spawn more processes. 
The work of a computation is the total number of operations required across all processors, and the depth is the longest chain of dependent instructions.

We permit the atomic operations test and set (TS) and compare and swap (CAS), which are only used in our parallel primitives. 
We note that the work is robust across models: permitting randomization, the work of an algorithm in the binary fork-join model is the same asymptotically as the work in the arbitrary way fork-join model and as the work in the CRCW PRAM, and the depth of an algorithm is within a $O(\log n)$ factor of the depths in the arbitrary way fork-join and CRCW PRAM \cite{blelloch2019optimal}. We thus care about optimizing the work: optimizing the depth is not a focus in this work.

To simplify our pseudocode, the pseudocode includes an atomic write minimum operation. This can be easily implemented under the hood with a semisort, for more details see Section \ref{sec:atomics}.

\myparagraph{Parallel primitives} We will use various standard parallel operations. Given an array, a scan (also known as prefix sum), will return a running sum of the associative operator $f$ on the first $i$ elements of the array. A scan can be used to implement a filter, which given a boolean function and an array, returns an array of the elements for which the function returns true. A scan can also be used to give a flatten, which given an array of arrays, returns a single array of all of the elements. Scan, filter, and flatten all take $O(n)$ work and $O(\log n)$ depth \cite{jaja1992parallel}.

Given an array of elements, a semisort will return an array such that elements with the same key are consecutive (though elements with different keys not sorted from least to greatest). Semisort requires $O(n)$ expected work and $O(\log n)$ depth whp \cite{blelloch2019optimal,gu2015top}. 

A bag data structure supports appending elements, deleting elements (given a pointer), returning the number of elements present, and outputting $b$ (arbitrary elements) of a container. A bag does not support membership queries and is unsorted. Recent work developed a bag that supports a batch of $b$ element updates in $O(b)$ work and $O(\log b)$ depth \cite{blelloch26faster}, and that can output $b$ elements in $O(b)$ work and $O(\log b)$ depth.

\myparagraph{Batch-Dynamic model} In the batch-dynamic model, edge updates arrive in discrete (integer) time steps. At time $t$, a batch $B_t$ of edge insertions and deletions arrive and are applied to the old graph $G_{t-1}$, yielding the new graph $G_t$. In response to this batch of edges, we must update the MIS. At time $t$, we must have an MIS for the graph at $G_t$: each vertex must know whether it is marked or unmarked, and we must hold a bag containing all marked vertices. 

The edge updates are being given by an oblivious adversary, which knows our algorithm but not our randomness, and must commit to an entire update sequence at the beginning of time. The update sequence is revealed to the algorithm by time step (at time $t$, the algorithm does not know what the edge updates for time $t+1$ will be). We note that the oblivious adversary is a standard assumption for dynamic MIS \cite{chechik19fully,BDHSS19}.

\myparagraph{MIS and LFMIS}
    Formally, the set $M$ is a maximal independent set (MIS) iff the following conditions hold: 

    \begin{enumerate}
        \item If $u \in M$, then $N(u) \cap M=\emptyset$ (no adjacent marked neighbors), where $N(u)$ is the neighbors of $u$. 
        \item $u \not\in M$, that there exists $v \in M \cap N(u)$. 
        
    \end{enumerate}
    We call vertices in the MIS marked, and vertices not in the MIS unmarked. 

A natural static algorithm for finding an MIS is the lexicographically first, or greedy, approach. The vertices are given an arbitrary order $\pi$. We call $\pi(v)$ the {\em permutation time} (or $\pi$-value) of $v$. We iterate through this order. If a vertex has been eliminated, we continue iterating, and do not add the vertex to the MIS. Otherwise, we add the vertex to the MIS, then eliminate it and its neighbors. 
This is called the lexicographically first MIS (LFMIS). When $\pi$ is chosen uniformly at random, the resulting MIS is called the \rlfmis{}. 
Given a set $M$, one can verify that it is the LFMIS without iterating, by verifying that every marked vertex has no marked neighbors and every unmarked vertex has an earlier marked neighbor, for details see Section \ref{sec:lfmisVerify}.

\myparagraph{Dependency Graph}
Traditionally in an LFMIS, let the deletion time be the permutation time at which a vertex marks or is eliminated \cite{BFS12}. Then, the dependence graph is an orientation of the graph by the deletion time (low to high). In the event of tied deletion time, a marked vertex points to an unmarked one. If both are unmarked, then the vertex with lower permutation time points to the higher. Note that a vertex $v$'s out neighbors are those present when $v$ is eliminated from the graph. In this way, a dependency graph nicely captures how a graph shrinks over time.
Note that the dependency graph is {\it not} an orientation by $\pi$-ordering. If a vertex with late $\pi$ value is eliminated early, it will have many out neighbors, as shown in Figure \ref{fig:dependency}. 

\begin{figure}

\centering
    \includegraphics[trim={.5cm 2cm 0cm 0cm}, clip,width=.5\linewidth]{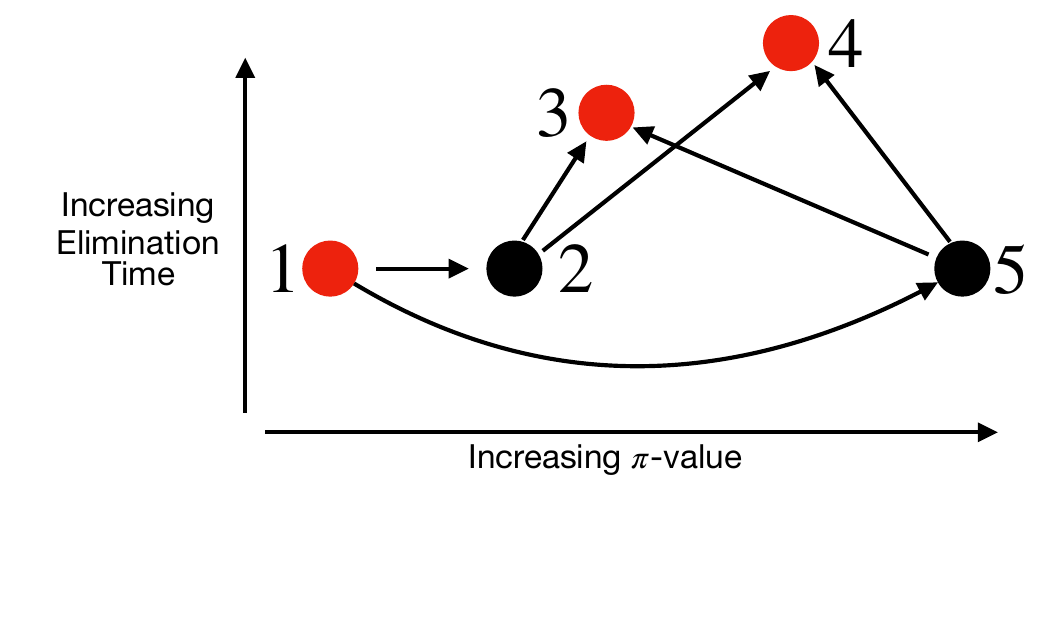}

    \caption{Example graph, marked vertices are shaded red. In this example, vertex 5 has later $\pi$-value than vertex 3 but earlier deletion time, so we orient the edge from 5 to 3, not the other way around. \label{fig:dependency}}
\end{figure}

\myparagraph{Parallel static MIS} Note that the LFMIS can be computed in parallel, where each vertex that has the local highest priority of its surviving neighbors marks and eliminates its neighbors. 
 Blelloch, Fineman, and Shun ~\cite{BFS12} showed that when iterating the \rlfmis{} and removing eliminated neighbors, the graph shrinks in degree quickly, as expressed by the following lemma.\footnote{By iterating, we mean stepping through the vertices and adding them to the LFMIS if possible. Concretely, suppose we have processed the first $i-1$ vertices, and removed all the neighbors of vertices we marked. BFS showed that the maximum degree in this remaining graph is small whp.} 

\begin{lemma}
    Let $\pi$ be a random order of the vertices, and suppose we have iterated through the first $i-1$ vertices (marking if alive, and eliminating neighbors of marked vertices). Then the remaining graph has max degree $O(\frac{n \log n}{i})$ whp \cite{BFS12}. \label{lem:degreeBound}
\end{lemma}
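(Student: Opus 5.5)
We prove Lemma~\ref{lem:degreeBound} with the standard principle of deferred decisions, as in \cite{BFS12}. Generate $\pi$ one position at a time: at step $j$, the vertex placed at position $j$ is uniform among the vertices not yet placed. Call a vertex \emph{alive} before step $j$ if it has not been placed and is not adjacent to a vertex already marked. The graph remaining after $i-1$ steps is the subgraph induced by the alive vertices. So it suffices to show: for each fixed vertex $v$ and threshold $d = c\,\frac{n\ln n}{i}$, the probability that $v$ is alive after step $i-1$ and has more than $d$ alive neighbors is at most $n^{-c'}$. A union bound over the $n$ vertices then finishes the proof.

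The key step is a per-step charging argument. Fix $v$ and consider any step $j \le i-1$ at which $v$ is still alive with more than $d$ alive neighbors. The vertex chosen at step $j$ is uniform over the at most $n$ unplaced vertices. With probability at least $d/n$ it is one of $v$'s alive neighbors. Such a neighbor is alive, so it gets marked, which eliminates $v$. Hence, conditioned on any history in which the bad event is still ``pending'' at step $j$, $v$ is killed at step $j$ with probability at least $d/n$.

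Alive neighbor counts only decrease over time, since vertices die and never revive. So if $v$ has more than $d$ alive neighbors after step $i-1$, it had more than $d$ alive neighbors at every earlier step $j \le i-1$ as well. Multiplying the conditional survival probabilities over the $i-1$ steps gives
\[
\Pr[\text{bad}] \le \left(1-\frac{d}{n}\right)^{i-1} \le \exp\!\left(-\frac{d(i-1)}{n}\right) = n^{-\Omega(c)}.
\]
This holds for $i \ge 2$. For $i=1$ the bound $O(n\log n)$ is trivial.

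The main point needing care is the conditioning. The event ``$v$ alive with more than $d$ alive neighbors at step $j$'' depends on the history, so we phrase it as a chain: let $E_j$ be the event that the condition holds at the start of step $j$. The events are nested, $E_{j+1} \subseteq E_j$, and $\Pr[E_{j+1}\mid E_j] \le 1 - d/n$, because the step-$j$ choice is uniform given the history. This conditional bound is legitimate because the permutation is revealed incrementally and the next vertex is independent of the past beyond which vertices remain unplaced. Choosing the constant in $d$ large enough relative to the desired $c$, and union bounding over $v$ (and over $i$ if we want the bound simultaneously for all steps), gives max degree $O(\frac{n\log n}{i})$ with probability at least $1-n^{-c}$.
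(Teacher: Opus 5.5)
Your argument is correct and is essentially the standard proof from \cite{BFS12}; the paper cites that result rather than proving it. While $v$ is alive with more than $d$ alive neighbors, each step picks one of them (marking it and eliminating $v$) with probability at least $d/n$, so your nested events $E_{j+1}\subseteq E_j$ give $\Pr[\text{bad}] \le (1-d/n)^{i-1} \le n^{-\Omega(c)}$, and a union bound over $v$ (and over $i$ if needed) finishes the proof.
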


The number of rounds required by parallel \rlfmis{} is equal to the length of the longest alternating eliminator path (AE-path).  An {\it alternating eliminator (AE)-path} is a marked-unmarked alternating path of increasing $\pi$ where the marked-unmarked edges are eliminator-target pairs.\footnote{
An AE-path was called a dependency path in Fischer-Noever. Note that we differ from Fischer-Noever in allowing the path to start with an unmarked or a marked vertex; note that this alters the path length by at most 1, and so their results still apply within constant factors. We prefer the name AE-path to clarify that we are not trying to bound the length of any path in the dependency graph, but only paths that cause a bottleneck for \rlfmis{}. In particular, unmarked vertices can have harmless directed edges between them, and these will not appear in an AE-path} Blelloch, Fineman, and Shun showed that the longest AE-path has $O(\log^2 n)$ length whp \cite{BFS12}, which was improved to $O(\log n)$ whp by Fischer and Noever \cite{FN20}.

%% file: 3_influence.tex
\section{Influence Analysis of MIS \label{sec:influence}}

\begin{table}
\centering
    \begin{tabular}{c|c}
    Symbol & Meaning \\
    \hline
    $V^?$ & undecided vertices \\
    $E^?$ & undecided edges \\
    $V_i^?$ & undecided vertices, before iteration $i$ \\
    $E_i^?$ & undecided edges, before iteration $i$ \\
    $\pi$ & chosen permutation on the vertices \\
    $V_i^l$ & vertices not in $\pi[1:i)$ \\
    $F$ & shorthand for \mbox{InfluenceMIS} \\
    $M$ & the \rlfmis{}, before the edge updates \\
    $V_i$ & alive vertices before iteration $i$ \\
    $M(G)$ & The LFMIS for $G$ (with order $\pi$) \\
    \end{tabular}
    \caption{Notation for influence analysis.}
\end{table}

The influence analysis version of a function determines which outputs
are undecided given that some inputs are undecided.  More formally, for
deterministic problems (i.e. functions) taking sets to sets, we define it as follows.

\begin{definition} 
Consider a deterministic problem $P : \mathcal{P}(\mathcal{I}) \rightarrow \mathcal{P}(\mathcal{O})$.
The \emph{influence analysis} problem for $P$ is
given an input $I \subseteq \mathcal{I}$ and an undecided subset $I^? \subseteq I$, to determine
the undecided (influenced) outputs
$$O_? = \{o \in \mathcal{O} \mid \exists_{I^x \subseteq I^?} \mbox{ s.t. } o \in (P(I \setminus I^x) \bigtriangleup P(I))\}.$$
\end{definition}

Looking forward to its application to dynamic algorithms, note that influence analysis makes no distinction between insertion and deletion of elements: the set $I$ includes all input (i.e. the input with the inserts added but without the deletes removed), and $I^?$ will include both inserted and deleted elements. 

A \emph{conservative solution} to an influence analysis problem is one that    
reports back all the undecided output elements, but might include some extra
elements.
Note that returning all elements is a vacuously conservative solution; we want to find a conservative solution with small size. 
We note that influence analysis is loosely related to the sensitivity analysis of boolean functions and stability analysis \cite{acar04dynamizing,odonnell14analysis}.  
                                                                        
In our case, the source problem $P$ that we care about is LFMIS.  The input of the influence
analysis problem for LFMIS is a     
graph $G = (V,E)$ and an undecided set of vertices $V^? \subseteq V$ and edges $E^? \subseteq E$.  If a vertex is undecided, it means that it together with its incident edges in $G$ are either present or not.  The problem
is to determine which outputs (members of the MIS) are influenced by the
undecided vertices and edges.
Figure~\ref{algo:influence} presents a variant of greedy MIS that conservatively solves the          
influence analysis problem for LFMIS. Throughout the paper, we will use $G_i = (V_i,E_i)$, $V_{i}^?$, and $E_{i}^?$ to
indicate the values of $G = (V,E)$, $V^?$, and $E^?$ at the beginning of step~$i$. Note that counting the iteration number $(i)$ will become helpful in proving the efficiency of our parallel algorithm in Section \ref{sec:efficiency}, but is not needed for the results in this section: from the view of this section, we could have defined influence propagation in purely recursive terms.

\begin{figure}
\begin{lstlisting}[numbers=left,firstnumber=1,xleftmargin=2em,basicstyle=\small\sffamily, keywordstyle=\bfseries, mathescape=true, escapeinside={@}{@}, morekeywords={parfor,for,if,else,foreach,return,while}, columns=flexible]
InfluenceMIS($G = (V,E)$, $V^? \subseteq V$, $E^? \subseteq E$, $i$)
  if $V^? = \emptyset$ and $E^? = \emptyset$ : return $\emptyset$
  $u = \pi(i)$
  if $u \notin V$ : 
    return InfluenceMIS$(G, V^?, E^?, i+1)$
  if $u \in V^?$
    foreach $v \in N_G(u)$ : add $v$ to $V^?$
    delete $u$ from  $G, V^?, E^?$
    return $\{u\} \cup \mbox{InfluenceMIS}(G, V^?, E^?, i+1)$
  else
    foreach $v \in N_G(u)$ :
      if $(u,v) \in E^?$ : add $v$ to $V^?$
      else : delete $v$ from $G, V^?, E^?$
    delete $u$ from $G, V^?, E^?$
    return InfluenceMIS$(G, V^?, E^?, i + 1)$
\end{lstlisting}
\caption{Conservative algorithm for the influence analysis problem for LFMIS. The algorithm returns a (possible overestimate) of all vertices that might be influenced by the undecided inputs. $V_1$ is the original set of vertices in some ordering $\pi$. Eliminating a vertex from a graph removes it and its incident edges. Removing $u$ from $E^?$ removes any edges from $E^?$ that include $u$.}
\label{algo:influence}
\end{figure}

\begin{lemma} If $u \in V^?$, then $u \in V$ (all undecided vertices are alive). 
\end{lemma}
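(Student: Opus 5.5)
The statement is an invariant of the recursive procedure InfluenceMIS in Figure~\ref{algo:influence}. I would prove it by induction on the recursion (equivalently on the step index $i$), showing that at the start of every call the invariant $V^? \subseteq V$ holds. For the base case I take the initial call. The input specification of the influence analysis problem for LFMIS requires the undecided vertices to satisfy $V^? \subseteq V$, so the invariant holds at $i=1$.

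For the inductive step, I assume $V^? \subseteq V$ at the start of a call and check each line of the procedure that modifies $V$ or $V^?$. In the branch $u \notin V$, nothing is modified, so the invariant carries over to the recursive call unchanged. In the branch $u \in V^?$, the procedure first adds every $v \in N_G(u)$ to $V^?$. Since $N_G(u)$ is taken in the current graph $G=(V,E)$, each such $v$ lies in $V$, and the invariant is preserved. The procedure then deletes $u$ from $G$ and from $V^?$ simultaneously, so removing $u$ from $V$ cannot leave $u$ stranded in $V^?$. Deleting $u$ also removes its incident edges, but this does not remove any other vertex from $V$.

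In the remaining branch, $u \in V \setminus V^?$. For each neighbor $v$, either $v$ is added to $V^?$ (and $v \in V$ because it is a neighbor in the current $G$), or $v$ is deleted from $G$ and from $V^?$ together. Finally $u$ is deleted from both $G$ and $V^?$. Every removal of a vertex from $V$ is therefore paired with its removal from $V^?$, and every addition to $V^?$ is a vertex currently in $V$, so $V^? \subseteq V$ holds at the recursive call.

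The only subtle point is the order of operations inside the loop in the last branch. A vertex $v$ added to $V^?$ early in the loop must not later be removed from $V$ without also being removed from $V^?$. The pairing of deletions handles this, since every deletion in the procedure is ``from $G, V^?, E^?$'' jointly. I would state explicitly that this joint-deletion convention is what the proof relies on. With that, the lemma follows directly by induction.
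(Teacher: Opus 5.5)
Your proof is correct and is the same ``by construction'' argument the paper gives: every deletion in InfluenceMIS removes a vertex from $G$, $V^?$, and $E^?$ jointly. Your version is more careful than the paper's one-line proof in two ways: you also check that each addition to $V^?$ is a neighbor in the current graph and hence alive, and you state explicitly the direction actually needed, namely that removal from $V$ forces removal from $V^?$.
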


\begin{proof}
    By construction. Each place in our code where we remove a vertex $v$ from $V^?$, we also remove $v$ from the graph. Therefore, any undecided vertex is still alive.  
\end{proof}

\begin{lemma}
The algorithm InfluenceMIS conservatively solves the influence analysis problem for LFMIS based on the ordering
given by $\pi$.
\label{lemma:batchinfluenceworks}
\end{lemma}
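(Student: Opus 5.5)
We need to show the following. For every $I^x \subseteq V^? \cup E^?$, let $G' = G \setminus I^x$ be the graph with the chosen undecided vertices (and their incident edges) and undecided edges removed. Then every vertex $w$ with $w \in \LFMIS{G'} \bigtriangleup \LFMIS{G}$ is returned by InfluenceMIS. The plan is to fix an arbitrary such $I^x$ and run three processes in lockstep over $i = 1,\dots,n$: the greedy LFMIS on $G$, the greedy LFMIS on $G'$, and InfluenceMIS. We then maintain by induction on $i$ an invariant relating their states at the beginning of step $i$.

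The invariant we propose is as follows. Let $G_i=(V_i,E_i)$, $V_i^?$, $E_i^?$ be the InfluenceMIS state, and let $R_i$ be the set of vertices returned so far.
\begin{enumerate}
\item Every vertex $v$ not in $V_i$ and not in $R_i$ has the same status in both runs. That is, $v \in \LFMIS{G} \iff v \in \LFMIS{G'}$. In particular, any vertex already eliminated by InfluenceMIS and not reported is either marked in both runs, or eliminated in both runs by step $i$.
\item Every vertex $v \in V_i \setminus V_i^?$ is alive in both greedy runs at step $i$, meaning not yet eliminated or processed.
\item Every edge in $E_i \setminus E_i^?$ between alive vertices is present in both $G$ and $G'$ (decided edges are never in $I^x$).
\item Every vertex in $V_i^?$ is present in $G$ and may or may not be alive in either run.
\end{enumerate}
Initially $V_1 = V$, so all of this holds, because $I^x \subseteq V^?\cup E^?$. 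Undecided vertices are allowed arbitrary behaviour. Decided vertices behave identically in both greedy runs, since no undecided element has touched them.

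For the inductive step, take $u=\pi(i)$ and split into the three branches of the code.
\begin{itemize}
\item If $u \notin V_i$, then by the invariant $u$ was already eliminated, marked, or reported, and nothing changes.
\item If $u \in V_i^?$, then $u$ is reported. The status of $u$ may differ between the runs, but it is covered. Its neighbors, whose elimination may now differ, are moved into $V^?$, which restores the invariant for them. Deleting $u$ and its edges is consistent because $u$ is finished in both runs after step $i$.
\item If $u \in V_i \setminus V_i^?$, then by (2) $u$ is alive in both runs. We need $u$ to be marked in both. Here one must check that $u$ has no earlier marked neighbor in either run. Any earlier marked neighbor $w$ in either run was processed at an earlier step. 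If $w$ was decided and marked, the edge $(w,u)$ was either decided, in which case $u$ would have been deleted, or undecided, in which case $u$ would be in $V^?$. If $w$ was undecided, then $u$ would have been added to $V^?$. Either way this contradicts $u\in V_i\setminus V_i^?$. So $u$ is marked identically in both runs. Its neighbors across decided edges are eliminated in both runs and are deleted, which is consistent with (1). Its neighbors across undecided edges may be eliminated in only one run, so they are placed in $V^?$.
\end{itemize}

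At the end, $V_{n+1}$ is empty, since every vertex is eventually processed or eliminated. So by (1), every vertex whose status differs lies in $R$, which proves conservativeness.

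The main obstacle is the third case: formalising that a decided alive vertex cannot have been eliminated in either greedy run without InfluenceMIS noticing. I would handle it by strengthening invariant (2) to state explicitly that for every decided alive $v$ and every earlier vertex $w$ marked in $\LFMIS{G}$ or in $\LFMIS{G'}$, no edge $(w,v)$ exists in the respective graph. This is then checked at each step where a vertex gets marked. A second, smaller point is the early return when $V^?=E^?=\emptyset$. There all remaining vertices are decided with decided edges, so by the invariant the two greedy runs coincide from step $i$ onward, and returning nothing is correct.
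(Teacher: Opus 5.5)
Your proof is correct. It uses the same three-way case split on $u=\pi(i)$ as the paper, but it organizes the induction differently.

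The paper inducts on the recursion. Its hypothesis is that the lemma already holds for the recursive instance $F(G',V'^?,E'^?,i+1)$, and it argues informally that each step turns uncertainty about $u$ into uncertainty about $u$'s neighbors. You instead fix one realization $I^x$ and run greedy LFMIS on $G$ and on $G\setminus I^x$ in lockstep with InfluenceMIS, maintaining an explicit invariant that compares the two concrete runs.

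Your route buys rigor in two places where the paper is terse:
\begin{itemize}
\item In this proof the paper only asserts that a decided surviving vertex ``will always be added to the MIS.'' Your third case actually proves it, by the same argument the paper gives later in Lemma~\ref{lem:lfmisg_markings}.
\item Invoking the lemma on the recursive instance tacitly requires that instance to over-approximate the remaining uncertainty, and its all-present baseline need not match the original one. For example, a neighbor $v$ of an undecided $u$ is present in the recursive baseline, yet in the original baseline $v$ is eliminated whenever $u$ is marked there. The paper's argument closes only via the triangle inequality $P(A)\bigtriangleup P(B)\subseteq (P(A)\bigtriangleup P(C))\cup(P(B)\bigtriangleup P(C))$. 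Your direct comparison of the two runs never needs it.
\end{itemize}
The paper's version, in turn, is shorter and has the same recursive shape as the expectation argument of Lemma~\ref{lem:influenceBound}.

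Two small loose ends in your write-up:
\begin{itemize}
\item In the third case, also handle an earlier neighbor $w$ that was deleted before its own step. By invariant (1) it is unmarked in both runs, so it cannot eliminate $u$.
\item A decided-edge neighbor that lies in $I^x$ is absent from $G\setminus I^x$ rather than ``eliminated.'' It still satisfies $v\in M(G)\iff v\in M(G\setminus I^x)$, so invariant (1) survives, but the invariant should be worded to allow this.
\end{itemize}
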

\begin{proof}

The proof is by induction on graph size.
The inductive hypothesis is that the lemma holds on the smaller graph, assuming that $V \subseteq \{V_{\pi}(j) : j \in \{i...n\}\}$, i.e., we will visit all vertices.
We note that all steps maintain $V \subseteq \{V_{\pi}(j) : j \in \{i...n\}\}$ since advancing $i$ always removes $V_{\pi}(i)$
from the graph.
For the base case where nothing is undecided, it correctly returns the empty set.    For the inductive case, we consider the subcases.
We can ignore steps where $u \notin V_i$ since they do not do anything beyond advancing $i$.   If $u$ is
undecided, then it may or may not appear in the MIS depending on whether it is present or not, and we therefore
include it in the influenced outputs. Furthermore, since we do not know whether $u$ is there, we also do not know
whether it will eliminate its neighbors in \rlfmis{}. Therefore, the neighbors become undecided (note that
some could already be undecided).   No other vertices become undecided, so when making the recursive call and
assuming its result is correct by induction, it will return all influenced vertices.   If $u$ is not undecided,
then it will always be added to the MIS and therefore not returned as influenced. Furthermore, since $u$ is
added to the MIS, any neighbors through decided edges will be eliminated, and any neighbors through undecided
edges might or might not be eliminated.   Again, assuming that the recursive call is correct by induction, the
final result will include all influenced vertices.   
\end{proof}
We note that the algorithm might return vertices that will always be in or out for all possible settings of
$V^?$ and $E^?$. In particular, if a vertex $v$ gets added to $V^?$ twice (or more), then it is possible
that the two vertices that add $v$ always have the opposite parity (one is in if the other is out), and hence $v$ will
never be in the MIS (one such example is vertex $5$ in Figure \ref{fig:sierra}).   Furthermore, if $v$ adds an adjacent $u$ to $V^?$, the vertex $u$ could be returned
as influenced even though $u$ will always be in the MIS.   In the following, however, we show
that the influence set is small in expectation. We note that the influence set of CHK also contained vertices that would be definitely in or out of the \rlfmis{}. 

We add some notation. Let $\Pi(V_1)$ be the set of all possible permutations of all vertices. Let $\Pi_i(V_1)$ be the set of all possible permutations, given that the vertices in permutation positions $[1, i)$ are fixed to some value. Let $V_i^l$ denote the vertices for which permutation values have not yet been assigned, before a vertex has been chosen for permutation position $i$. Let $V_i^?$ denote the undecided vertices after iteration $i$, and $E_i^?$ the undecided edges after iteration $i$. We use $F$ as a shorthand for $\mbox{InfluenceMIS}$.

\begin{lemma} \label{lem:expected_change}
    Let $G,V_1^?,E_1^?$ be the starting graph and undecided vertices and edges.
    Let $G,V^?,E^?$ be a possible state of the graph and undecided sets immediately before iteration $i$. Let $\pi[1:i)$ be a possible prefix of a permutation that would lead to this graph state. 
    Consider $F(G,V^?,E^?,i)$, and let $F(G',V'^?,E'^?,i+1)$ be the resulting recursive call. 
    Let $\Pi_i$ be the set of permutations that could result in $G,V^?,E^?$ from the given starting state, given that $\pi[1:i)$ is fixed. 
    Draw $\pi$ uniformly at random from $\Pi_i$.
    Then $E_{\pi \sim \Pi_i}[|V'^?|+|E'^?|] \le |V^?|+|E^?|-\frac{|V^?|}{n-i+1}$. 
\end{lemma}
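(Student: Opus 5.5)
Conditioned on $\pi[1:i)$, the vertex $u=\pi(i)$ is uniform over the $n-i+1$ unassigned vertices $V_i^l$. Since every vertex of $V_i^l\setminus V$ leaves the state unchanged, it suffices to show
\[
\sum_{u\in V_i^l}\bigl(\Delta(u)\bigr)\le -|V^?|,\qquad \Delta(u)=|V'^?|+|E'^?|-|V^?|-|E^?| \text{ when } \pi(i)=u .
\]
Dividing by $n-i+1$ then gives the lemma. I would prove this with a charging argument that splits $V$ according to whether $u$ is undecided.

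\emph{Case $u\in V^?$.} The vertex $u$ leaves $V^?$, contributing $-1$. Each neighbor $v\notin V^?$ is added to $V^?$, contributing $+1$. Every undecided edge incident to $u$ is deleted, contributing $-1$ each. The troublesome term is the $+1$ from a decided edge $(u,v)$ with $v$ decided; whether $v$ is already undecided matters too.

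\emph{Case $u\in V\setminus V^?$.} The vertex $u$ is removed but was not undecided, contributing $0$. A neighbor $v$ across an undecided edge becomes undecided, contributing at most $+1$, and the edge $(u,v)$ is removed, contributing $-1$; the net is at most $0$. A neighbor $v$ across a decided edge is deleted: if $v\in V^?$ this contributes $-1$, together with $-1$ for every undecided edge at $v$.

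Summing over all $u$, I would pair terms per edge. A decided edge $(u,v)$ with $u\in V^?$ and $v\notin V^?$ contributes $+1$ when $u$ is chosen (since $v$ becomes undecided) and $-1$ when $v$ is chosen (since $v$ is decided and eliminates the undecided $u$). Such edges cancel. An undecided edge $(u,v)$ contributes $-1$ whenever either endpoint is chosen, and this dominates the at most $+1$ gain from marking the other endpoint. Decided edges between two undecided vertices give only negative terms. Each undecided vertex $u$ also carries its own $-1$ from the choice $\pi(i)=u$, and summing these gives exactly $-|V^?|$. Every other term is nonpositive, so the total is at most $-|V^?|$.

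The main obstacle is the bookkeeping, since several charges hit one vertex at once. When $u$ is chosen, a single neighbor $v$ may be reached by both kinds of edge, or may already be undecided, so its added $+1$ must be counted once, not per edge. Multiple deletions can likewise double-count $-1$'s. I would handle this by bounding each positive term by the count $d^{dec}(u)$ of decided edges from $u\in V^?$ to decided vertices, plus the undecided edges at $u$. I would keep each negative term only as a lower bound on the actual decrease, taking care that no removed edge is counted twice. The symmetric pairing then goes through with inequalities.
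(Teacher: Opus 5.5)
Your proposal is correct and follows essentially the same edge-centric accounting as the paper. Both arguments condition on the prefix so that $\pi(i)$ is uniform over the $n-i+1$ unassigned vertices, and both cancel the $+1/-1$ of each decided edge between $V^?$ and $V\setminus V^?$, let each undecided edge's removal offset the endpoint it may add, and keep the $-1$ of each chosen undecided vertex to obtain the $-|V^?|/(n-i+1)$ term; your explicit treatment of the dropped negative terms from eliminated vertices is slightly more careful than the paper, which writes the sum as an equality before bounding it.
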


\begin{proof}
    
Consider the choice of $u=\pi(i)$. Note that for any permutation $X \in \Pi_i$, any ordering of the vertices in $V_i^l$ would have been in $\Pi$. Therefore, the choice of $u$ is uniformly random among the unchosen vertices. Note that if $u \not\in V$, then we iterate $i$ without changing the graph.

Instead of taking a vertex-centric perspective, viewing the neighbors that are added or removed from the undecided set due to picking a particular vertex, we take an edge-centric view on the change. 
We break down the change in $|V^?|+|E^?|$ to three factors: that a chosen vertex in $V^?$ removes itself, the removal of undecided edges when an endpoint is picked, and that an edge can remove/add one of its endpoints. We thus have that 

\begin{align*}
E[|V'^?|+|E'^?|-|V^?|-|E^?|]
  &= -\sum_{u \in V^?} \frac{1}{n-i+1}
     - \sum_{e \in E^?} \frac{2}{n-i+1} \\
  &\quad + \sum_{\substack{(u,v) \in E^? \\ u,v \not\in V^?}} \frac{2}{n-i+1} + \sum_{\substack{(u,v) \in E^? \\ u \in V^?, v \not\in V^?}} \frac{1}{n-i+1}  \\
  &\le \frac{-|V^?|}{n-i+1}
\end{align*}

We explain this equation in more detail. 
First, the expected effect on $|V'^?|+|E'^?|$ of a vertex removing itself is $\frac{-|V^?|}{n-i+1}$. Similarly, the expected change in the size of $E^?$ is $\frac{-2 |E^?|}{n-i+1}$. For two vertices $u,v$ not in the undecided set where $(u,v) \in E^?$, there is a $\frac{2}{n-i+1}$ chance of selecting one and adding the other to the undecided set. For two vertices $u,v$ where $u \in V^?$ but $v \not\in V^?$ and $(u,v) \in E^?$, there is a $\frac{1}{n-i+1}$ chance of picking $u$, causing $v$ to be added to $V'^?$. 
Note that a decided edge with only one endpoint in $V^?$ has a $\frac{1}{n-i+1}$ chance of adding one vertex to $V^?$, and a $\frac{1}{n-i+1}$ chance of removing one vertex from $V^?$, for a net zero change in $|V^?|+|E^?|$.

Thus the overall expected change in $|V^?|+|E^?|$ (compared to $|V^?|+|E^?|$) is upper bounded by $\frac{-|V^?|}{n-i+1}$. Note that when there are undecided edges between undecided vertices, the expected change in $|V^?|+|E^?|$ is at most $\frac{-|V^?|}{n-i+1}$.
\end{proof}

In the above lemma, we have shown that the total number of undecided vertices and edges shrinks over time in expectation. In the next lemma, we will show that the growth of the influence set is  balanced by the shrinking of the undecided set, and so the size of the influence set is nicely bounded.

\begin{lemma}
\label{lem:influenceBound}
For $\pi$ drawn uniformly at random from all permutations of $V$, we have that:
\[ E_{\pi \sim \Pi(V)}[|\mbox{InfluenceMIS}(G, V^?, E^?,i)|] \leq |V^?| + |E^?| \]
\end{lemma}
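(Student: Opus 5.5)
We prove a slightly stronger, conditional statement by backward induction on the step index $i$ (equivalently, on the number of remaining recursive calls). For any reachable state $(G,V^?,E^?)$ before iteration $i$, with a fixed prefix $\pi[1:i)$ and $\pi$ drawn uniformly from $\Pi_i$, we will show
\[ E_{\pi\sim\Pi_i}\bigl[|F(G,V^?,E^?,i)|\bigr] \le |V^?|+|E^?|. \]
The lemma is the case $i=1$, where $\Pi_1=\Pi(V)$. This is a potential-function argument with potential $\Phi=|V^?|+|E^?|$. Each call adds at most one vertex to the output, and it does so only when $u=\pi(i)\in V^?$. By Lemma~\ref{lem:expected_change}, $\Phi$ then drops in expectation by at least $\frac{|V^?|}{n-i+1}$. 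Since $u$ is uniform over the $n-i+1$ unassigned vertices, and every undecided vertex is alive and hence unassigned, this drop is exactly $\Pr[u\in V^?]$, which is the expected growth of the output. So $E[|F|+\Phi]$ is non-increasing along the recursion.

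\textbf{Base cases.} If $V^?=E^?=\emptyset$, the output is empty and the bound is $0\le 0$. The recursion also terminates once $i>n$. At that point every vertex has been processed and removed, so $V^?=\emptyset$ by the earlier lemma that undecided vertices are alive, and $E^?=\emptyset$ because edges are removed with their endpoints.

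\textbf{Inductive step.} Write $F(G,V^?,E^?,i)=[u\in V^?]\cdot\{u\}\cup F(G',V'^?,E'^?,i+1)$, using linearity of expectation and conditioning on the choice of $u=\pi(i)$. Given $u$, the remaining permutation is uniform over $\Pi_{i+1}$ for the extended prefix, and the resulting state is reachable. The induction hypothesis therefore gives $E[|F(G',V'^?,E'^?,i+1)|\mid u]\le |V'^?|+|E'^?|$. Taking expectation over $u$ and applying Lemma~\ref{lem:expected_change},
\[ E[|F|]\le \frac{|V^?|}{n-i+1} + E\bigl[|V'^?|+|E'^?|\bigr] \le \frac{|V^?|}{n-i+1}+|V^?|+|E^?|-\frac{|V^?|}{n-i+1}. \]
This simplifies to $|V^?|+|E^?|$.

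\textbf{Expected obstacle.} The delicate point is the conditioning. We need the conditional distribution of $\pi$ given the prefix $\pi[1:i]$ to be exactly uniform over the completions, so that the induction hypothesis applies to the state $(G',V'^?,E'^?)$. We also need $\Pr[u\in V^?]=|V^?|/(n-i+1)$ to match the term in Lemma~\ref{lem:expected_change}.

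Both facts hold because the state is a deterministic function of $\pi[1:i)$. Once we condition on a prefix rather than on the state, the unassigned vertices are uniformly permuted, as noted at the start of the proof of Lemma~\ref{lem:expected_change}. The induction is then well-founded because $i$ strictly increases and the hypothesis is stated for every prefix.
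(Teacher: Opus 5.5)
Your proof is correct and matches the paper's argument: induction over a fixed prefix $\pi[1:i)$, conditioning on $u=\pi(i)$, applying the hypothesis to the resulting state, and cancelling the expected output growth $\frac{|V^?|}{n-i+1}$ against the drop guaranteed by Lemma~\ref{lem:expected_change}. The only difference is cosmetic: your terminating base case is $i>n$ (where $V^?=E^?=\emptyset$), whereas the paper uses $|V|\le |V^?|+|E^?|$, and the two serve the same purpose.
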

\begin{proof} We proceed with induction.

\noindent\textbf{Base Case 1:} Suppose that for some $i$, we have that $|V| \le |V^?| + |E^?|$. By construction, the amount that can still be added to the influence set is no more than $|V|$, so $F(G,V^?,E^?,i) \le |V^?| + |E^?|$.

\noindent\textbf{Base Case 2:} Suppose that for some $i$, $V^?$ and $E^?$ are empty. Then no more is added to the influence set, so $F(G,V^?,E^?,i) \le 0 = |V^?| + |E^?|$.

\noindent\textbf{Inductive Hypothesis:}
Let $j \ge i+1$.
Let $\pi[1:j)$ be any prefix of a permutation that would yield $G,V^?,E^?$ after $j-1$ iterations.
Let $\Pi_j$ be the set of possible permutations of $V$ with $\pi[1:j)$ fixed, and suppose that $\pi \sim \Pi_j$.
Then we have that $E_{\pi \sim \Pi_j}[F(G,V^?,E^?,j)] \le |V^?| + |E^?|$.

\noindent\textbf{Inductive Step:}
Now consider $E_{\pi \sim \Pi}[F(G,V^?,E^?,i)]$.

Let $G',V'^?,E'^?$ denote the graph and undecided sets resulting from choosing a vertex $u$.
We condition the expectation on the choice of the next vertex $u$.
Observe that
\begin{align*}
E_{\pi \sim \Pi_i}[F(G,V^?,E^?,i)]
  &= \sum_{u \in V_i^l} \frac{1}{n-i+1} E_{\pi \sim \Pi_i | \pi(i) = u}[F(G',V'^?,E'^?,i+1)] + \frac{|V^?|}{n-i+1}.
\end{align*}
Because we have fixed a prefix of the permutation of length $i$, we can apply the inductive hypothesis to get that $E_{\pi \sim \Pi_{i+1}}[F(G',V'^?,E'^?,i+1)] \le |V'^?| + |E'^?|$. Plugging back into our main equation yields
\begin{align*}
E_{\pi \sim \Pi_i}[F(G,V^?,E^?,i)]
  &= \frac{|V^?|}{n-i+1} + \sum_{u \in V_i^l} \frac{1}{n-i+1} (|V'^?| + |E'^?|) \\
  &= \frac{|V^?|}{n-i+1} + E_{u \in V_i^l}[|V'^?| + |E'^?|]
\end{align*}

which by Lemma~\ref{lem:expected_change} is upper bounded by $|V^?| + |E^?| -\frac{|V^?|}{n-i+1} + \frac{|V^?|}{n-i+1} = |V^?| + |E^?|$ as desired.
\end{proof}

\subsection{Connecting influence analysis to \rlfmis{}}

We now want to relate the influence analysis problem for LFMIS to the
dynamic algorithm for the problem.  In this paper, we only care about adding and removing edges. 
As before, we define $V_i$ and $V_i^?$ to be the states of $V$ and $V^?$ at the beginning of the $i^{th}$ call to \mbox{InfluenceMIS}. Let $M(G)$ be the LFMIS on a graph $G$ for a permutation $\pi$ that is clear from context. Let $\enew = E \cup E_+ \setminus E_-$, $\gnew = (V,\enew)$, $E_{all}=E \cup E_+$, $G_{all}=(V,E_{all})$, and $I=
\mbox{InfluenceMIS}((V,E_{all}), \emptyset, E_+ \cup E_-,1)$

\begin{lemma} \label{lem:remove_everywhere}
    If vertex $u=\pi(i) \in V_j^?$ for some $j\le i$, then either $u \in V_i^?$ or $u \notin V_i$.
\end{lemma}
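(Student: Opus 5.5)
The plan is to argue by induction on the iteration index $k$, running from $j$ up to $i$. The invariant to maintain is: \emph{either $u \in V_k^?$, or $u \notin V_k$}.

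The base case $k=j$ is the hypothesis $u \in V_j^?$.

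For the inductive step, assume the invariant holds before iteration $k$ with $j \le k < i$, and examine what iteration $k$ of InfluenceMIS can do to $u$. The vertex processed at step $k$ is $w=\pi(k)$. Since $k<i$ and $\pi$ is a permutation, $w \neq u$, so $u$ is never the vertex processed in these steps. There are three cases.

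\emph{Case $w \notin V_k$.} Nothing changes, so the invariant carries over.

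\emph{Case $w \in V_k^?$.} The algorithm only adds neighbors of $w$ to $V^?$ and then deletes $w$. So $u$ is never removed from $V^?$ without also being removed from $V$.

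\emph{Case $w$ decided.} Each neighbor $v$ is either added to $V^?$ or deleted from $G$, $V^?$ and $E^?$ simultaneously. Then $w$ itself is deleted.

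The key observation is that every line of the algorithm that removes a vertex from $V^?$ also removes it from $G$. This is the same fact used in the earlier lemma that undecided vertices are alive. Hence the transition "in $V^?$ but alive" can only go to "in $V^?$" or "not in $V$".

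Finally, the invariant must survive once $u$ has been deleted. Vertices are never re-added to $V$, so once $u \notin V_k$ we also have $u \notin V_{k'}$ for all $k' \ge k$. Moreover, additions to $V^?$ only ever target neighbors in the current graph $G$, so a deleted vertex is never reinserted into $V^?$. Thus the disjunction is preserved, and taking $k=i$ gives the claim.

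I expect no real obstacle here. The only care needed is to check that $w \ne u$ for all steps $k<i$, and that no operation removes $u$ from $V^?$ while keeping it in $V$. Both follow by inspection of the pseudocode in Figure~\ref{algo:influence}.
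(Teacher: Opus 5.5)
Your proof is correct and follows the same approach as the paper. The paper's proof simply notes that a predecessor $w=\pi(k)$ with $k<i$ can remove $u$ from $V^?$ only by also deleting $u$ from $V$. Your induction on $k$ with the invariant ``$u \in V_k^?$ or $u \notin V_k$'' is a careful formalization of that observation, together with the fact that deleted vertices never return to $V$. Your range $j \le k < i$ is also the right one under the convention that $V_j^?$ is the state before iteration $j$.
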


\begin{proof}
    The claim is true if no predecessors modify $u$'s placement after iteration $j$. Suppose $w=\pi(k), j<k<i$ is a predecessor of $u$. The only way for $w$ to remove $u$ from $V^?$ is to also remove $u$ from $V$.
\end{proof}

\begin{lemma} \label{lem:lfmisg_markings} 
    Let $G=(V,E)$ be a graph, and suppose we run \mbox{InfluenceMIS} on an input graph $G_{all}=(V, E \cup E_+)$ and $E^?$ so that $E^?$ contains the batch updates. If vertex $u=\pi(i)$ and $u \in V_i\setminus V_i^?$, then $u \in M(G)$. If $u=\pi(i)$ and $u \notin V_i$, then $u \notin M(G)$.
   
\end{lemma}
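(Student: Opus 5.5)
We prove both claims together by strong induction on the permutation position $i$. The hypothesis for all $j<i$ is: if $\pi(j)\in V_j\setminus V_j^?$ then $\pi(j)\in M(G)$, and if $\pi(j)\notin V_j$ then $\pi(j)\notin M(G)$. Here $M(G)$ is the LFMIS of the old graph $G=(V,E)$ with order $\pi$. Recall that $E$ contains the deleted edges $E_-$ and not the inserted edges $E_+$. The run of InfluenceMIS is on $G_{all}=(V,E\cup E_+)$ with $E^?\supseteq E_+\cup E_-$, so every edge of $G_{all}$ that is not in $E$ lies in $E^?$. Throughout we use the usual characterization: $u\in M(G)$ iff no earlier $G$-neighbor of $u$ is in $M(G)$.

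\emph{Case $u=\pi(i)\notin V_i$.} The vertex $u$ was deleted from the graph at some step $k<i$. The only deletions of a not-yet-processed vertex happen on the ``else'' branch. There $w=\pi(k)\in V_k\setminus V_k^?$ is processed and deletes each neighbor $v$ with $(w,v)\notin E^?_k$. First, by the inductive hypothesis, $w\in M(G)$. Second, the edge $(w,u)$ was present in $G_{all}$ and was not in $E_k^?$. Undecided edges are only removed from $E^?$ when an endpoint is deleted, and neither endpoint was deleted before step $k$. Hence $(w,u)$ was never in $E^?$, so it is not a batch edge, and therefore $(w,u)\in E$. So $u$ has an earlier $G$-neighbor in $M(G)$, and $u\notin M(G)$.

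\emph{Case $u\in V_i\setminus V_i^?$.} Suppose for contradiction that some $G$-neighbor $w=\pi(k)$ with $k<i$ lies in $M(G)$. By the contrapositive of the second inductive claim, $w\in V_k$. Since $u$ is still alive at step $k$, the edge $(w,u)$ is still present in the current graph then. We split on how $w$ is processed.
\begin{itemize}
\item If $w\in V_k^?$, then processing $w$ adds $u$ to $V^?$. By Lemma~\ref{lem:remove_everywhere}, $u$ is then either in $V_i^?$ or not in $V_i$, which contradicts the case assumption.
\item If $w\notin V_k^?$ and $(w,u)\in E_k^?$, then again $u$ is added to $V^?$, giving the same contradiction.
\item Otherwise $u$ is deleted from the graph at step $k$, contradicting $u\in V_i$.
\end{itemize}
Hence no earlier $G$-neighbor of $u$ is in $M(G)$, and $u\in M(G)$.

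The main obstacle is the bookkeeping showing that a decided edge is genuinely an edge of $G$. This needs the invariant that undecided edges leave $E^?$ only together with the deletion of an endpoint, so that batch edges are never silently ``decided''. Once that invariant is in place, the rest is the direct case analysis above.
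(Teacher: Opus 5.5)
Your proof is correct and follows the paper's argument essentially step for step. Both use induction on the $\pi$-position, the same case split on how an earlier $G$-neighbor $w$ is processed (undecided vertex, decided vertex with an undecided edge, decided vertex with a decided edge), and Lemma~\ref{lem:remove_everywhere}. You differ only cosmetically: you phrase the first claim as a contradiction via the contrapositive of the inductive hypothesis, and you spell out the invariant that edges leave $E^?$ only when an endpoint is deleted, which the paper uses silently when it asserts that a decided edge $(w,u)$ lies in $G$.
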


\begin{proof}
    By induction on the $\pi$ value of $u$. The base case is that $u=\pi(1)$. Vertex $u$ will always be in $V_1$ and marked regardless of the arrangement of edges. 
    
    Assume now that it is true for all vertices in $\pi[1,i-1]$. Let $\pi(u)=i$. Suppose $u \in V_i \setminus V_i^?$, and consider a predecessor $w=\pi(j)$ of $u$ in $G$. 
    
    Suppose for contradiction that $w \in V_j \setminus V_j^?$. Then $w$ either removes $u$ from $V$ (if $(w,u) \not\in E_j^?$) or adds $u$ to $V_j^?$ (if $(w,u) \in E_j^?$). The first case contradicts that $u \in V_i$. In the second case, by Lemma~\ref{lem:remove_everywhere}, we have that either $u \in V_i^?$ or $u \notin V_i$, contradicting $u \in V_i \setminus V_i^?$. Therefore, $w \not\in V_j \setminus V_j^?$. 

    Suppose for contradiction that $w \in V_j^?$. Then $w$ adds $u$ to $V^?$, and by Lemma~\ref{lem:remove_everywhere}, we have that $u \in V_i^?$, contradicting $u \notin V_i^?$. Therefore, $w \notin V_j^?$.

    Thus $u$ only has predecessors (in $G$) of the form $w=\pi(j)$ such that $w \notin V_j$. By the inductive hypothesis, $w \notin M(G)$. Therefore, $u$ is not eliminated by any predecessor in $G$, and is in $M(G)$.

    Suppose now that $u \notin V_i$. Then $u$ is eliminated from $V$ by some predecessor $w=\pi(j) \in V_j \setminus V_j^?$. Furthermore, $(w,u)$ must exist in $G$ because $(w,u)$ is not an undecided edge.

    By the I.H., $w \in M(G)$, thus $u \notin M(G)$.
\end{proof}

\begin{lemma} \label{lem:in_intersection}
    If $u=\pi(i) \in M(G)$ and $u \in V_j^?$ for some $j<i$, then $u \in V_i^?$. 
\end{lemma}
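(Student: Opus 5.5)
The plan is to reduce the statement to Lemma~\ref{lem:remove_everywhere} and Lemma~\ref{lem:lfmisg_markings}. Since $u=\pi(i)$ was placed in $V_j^?$ at some iteration $j<i$, Lemma~\ref{lem:remove_everywhere} gives a dichotomy: either $u\in V_i^?$, or $u\notin V_i$. It therefore suffices to exclude the second alternative.

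So suppose for contradiction that $u\notin V_i$. By the second half of Lemma~\ref{lem:lfmisg_markings}, applied to the run of \mbox{InfluenceMIS} on $G_{all}$ with $E^?$ containing the batch updates, a vertex $u=\pi(i)$ with $u\notin V_i$ satisfies $u\notin M(G)$. This contradicts the hypothesis $u\in M(G)$. Hence $u\in V_i$, and the dichotomy forces $u\in V_i^?$.

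A point to check is that Lemma~\ref{lem:lfmisg_markings} is stated for an arbitrary graph $G$ whose edges, together with $E_+$, form $G_{all}$, with the updates undecided. The same reasoning therefore applies to whichever graph $M(\cdot)$ refers to here, whether the old graph or $\gnew$. The lemma's argument only uses that an eliminating edge from a decided predecessor is decided, hence present in every realization. I must make sure the present lemma is invoked in that same setting, so that the $M(G)$ in the hypothesis is the one covered by Lemma~\ref{lem:lfmisg_markings}.

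The only real obstacle is this bookkeeping about which graph $M(G)$ refers to. Once it is aligned with Lemma~\ref{lem:lfmisg_markings}, the proof is a two-line combination of the two earlier lemmas. No induction is needed beyond what those lemmas already contain.
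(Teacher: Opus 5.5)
Your proposal is correct and matches the paper's proof: the paper likewise uses the contrapositive of Lemma~\ref{lem:lfmisg_markings} to get $u \in V_i$, then observes that $u$ cannot leave $V^?$ without also leaving $V$, which is exactly the content of Lemma~\ref{lem:remove_everywhere} that you cite. Your bookkeeping worry does not arise, because here $M(G)$ is the old graph, which is exactly the setting in which Lemma~\ref{lem:lfmisg_markings} is stated.
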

\begin{proof}
    The contrapositive of Lemma~\ref{lem:lfmisg_markings} indicates that $u \in M(G) \implies u \in V_i$. The vertex $u$ cannot be removed from $V^?$ during some iteration $k>j$ without also permanently removing $u$ from $V$, so it must be the case that $u$ remains in $V^?$ until the beginning of iteration $i$.
\end{proof}

\begin{lemma}
Consider LFMIS applied to graphs $G = (V,E)$ and $G_{new}=(V,\enew)$. We have
\[M(V,E) \bigtriangleup M(V, \enew) \subseteq \mbox{InfluenceMIS} ((V, E_{all}), \emptyset, E_+ \cup E_-,1) \]
\end{lemma}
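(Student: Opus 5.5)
The plan is to reduce to Lemma~\ref{lem:lfmisg_markings}. That lemma was stated for an arbitrary graph $G=(V,E)$ whose edge set differs from $E_{all}$ only on undecided edges. Both $(V,E)$ and $(V,\enew)$ have this form: each is obtained from $E_{all}=E\cup E_+$ by deleting some subset of the undecided set $E^?=E_+\cup E_-$. For $(V,E)$ we delete $E_+\setminus E$; for $\gnew$ we delete $E_-$, together with $E_+\cap E_-$ if that is nonempty. Nothing in the proof of Lemma~\ref{lem:lfmisg_markings} uses which undecided edges are present. It only uses that decided edges of $G_{all}$ are present in $G$ and that every edge of $G$ lies in $E_{all}$. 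So I would first state that Lemma~\ref{lem:lfmisg_markings}, and hence Lemma~\ref{lem:in_intersection}, hold verbatim for any $G_X=(V,E_{all}\setminus X)$ with $X\subseteq E^?$.

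Fix $u=\pi(i)$ with $u\notin I$; I will show $u$ has the same status in $M(V,E)$ and $M(V,\enew)$. Before iteration $i$, exactly one of three states holds: $u\in V_i\setminus V_i^?$, $u\notin V_i$, or $u\in V_i^?$. In the first state, Lemma~\ref{lem:lfmisg_markings} applied to both graphs gives $u\in M(V,E)$ and $u\in M(V,\enew)$. In the second state, it gives $u\notin M$ for both. In the third state, $u$ is alive and undecided when $\pi(i)=u$ is processed, so line 6 of InfluenceMIS adds $u$ to the returned set, contradicting $u\notin I$. Therefore every vertex outside $I$ has equal status in both MISs, which is the contrapositive of $M(V,E)\bigtriangleup M(V,\enew)\subseteq I$.

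Alternatively, the same conclusion follows from the definition of influence analysis together with Lemma~\ref{lemma:batchinfluenceworks}. Take $I^x=E_+\setminus E$ and $I^x=E_-$ as subsets of $I^?$. Each of $M(V,E)$ and $M(V,\enew)$ then differs from $M(G_{all})$ only on $I$, so their symmetric difference also lies in $I$. I would mention this as a one-line cross-check.

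The main obstacle is justifying that Lemma~\ref{lem:lfmisg_markings} applies to an arbitrary undecided-edge subset rather than to one fixed $G$. Its statement is phrased loosely, and its proof implicitly treats ``a predecessor in $G$'' as an edge of $E_{all}$. I would therefore re-read that proof and check each step, in particular the claim in the eliminated case that ``$(w,u)$ must exist in $G$ because it is not undecided.'' That claim holds exactly when decided edges are never dropped, which is our situation.
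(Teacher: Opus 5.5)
Your proof is correct, and it takes a different route from the paper's.

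\textbf{The paper's route.} The paper inducts on $\pi(u)$ and splits on the direction of change.
\begin{itemize}
\item For $u\in M(G)\setminus M(\gnew)$, it finds either a predecessor that flipped to marked, or a newly inserted eliminator. A flipped predecessor is in $I$ by the inductive hypothesis, so it pushes $u$ into $V^?$, and Lemma~\ref{lem:in_intersection} keeps $u$ there.
\item For $u\in M(\gnew)\setminus M(G)$, it shows by contradiction that $u$ is still alive at iteration $i$.
\end{itemize}
Throughout, it applies Lemma~\ref{lem:lfmisg_markings} only to the old graph $G$. It reaches $\gnew$ only through the inductive hypothesis.

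\textbf{Your route.} You note that the proof of Lemma~\ref{lem:lfmisg_markings} only uses $E_{all}\setminus(E_+\cup E_-)\subseteq E(G)\subseteq E_{all}$. It also uses that $E^?$ never grows. Hence an edge between two alive vertices that is decided at iteration $j$ was never undecided, which is exactly what the step ``$(w,u)$ must exist in $G$'' needs. You therefore apply the lemma to both graphs, and the three-way split on $u$'s state at iteration $i$ finishes the proof with no further induction.

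The obstacle you flag is in fact milder than you fear. Since $E_-\subseteq E_{all}$, we have $E_{all}=\enew\cup E_-$. So $\gnew$ meets the lemma's hypotheses literally, with insertions and deletions swapped and the same $E^?$.

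\textbf{What each approach buys.} Yours is shorter and symmetric, and it proves more: every graph $(V,E_{all}\setminus X)$ with $X\subseteq E_+\cup E_-$ has the same LFMIS outside $I$. That is a rigorous form of Lemma~\ref{lemma:batchinfluenceworks} for edge-only undecided inputs. Your one-line cross-check through that lemma is valid, but it inherits the informality of that lemma's proof, so keeping it secondary is right. The paper's route uses Lemma~\ref{lem:lfmisg_markings} exactly as stated and traces the cause of each change, in the style of the CHK argument.

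\textbf{One shared convention.} Like the paper, you treat $V_i$ and $V_i^?$ as defined for every $i$, even after the early return on line~2. This is harmless: continuing the process once $V^?=E^?=\emptyset$ adds nothing to $I$.
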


{

\begin{proof}
    By induction on the $\pi$ values of each vertex. We claim that for vertex $u$ with $\pi(u)=i$, if $u \in M(G) \bigtriangleup M(G_{new})$,  then $u \in 
    \InfluenceMIS$.

    Base case: For $\pi(u) = 1$, $u$ will always be marked regardless, thus $u$ is not in the left-hand side. 

    Inductive case: Assume that the claim holds for all vertices $\in \pi[1,i-1]$. We want to show that the claim holds for $\pi(u)=i$.
    
    \begin{enumerate}
        \item[Case 1:] $u=\pi(i)$ is a vertex in $M(G) \setminus M(\gnew)$. Either a predecessor $w$ of $u$ in $G_{new}$ flips to marked (and $(w,u)$ exists in both the old and new graph), or a new predecessor $w$ of $u$ is connected to $u$ via insertion so that $w$ is marked in $G_{new}$ and eliminates $u$.
        \begin{enumerate}
            \item[Case 1a:] Some predecessor $w=\pi(j)$ of $u$ in $G$ flips to marked. By the I.H.,
            $w \in \InfluenceMIS$, which moreover implies that $w \in V_j^?$.
            It must be the case, then, that $w$ adds $u$ to $V^?$.
            By Lemma~\ref{lem:in_intersection}, $u \in V_i^?$, and thus adds itself to InfluenceMIS during iteration $i$.

            \item[Case 1b:] No predecessor of $u$ in $G$ that flips to marked, and instead a new predecessor $w=\pi(j)$ of $u$ is connected via insertion so that $w$ is marked in the new graph and eliminates $u$. If $w \in M(G)$, then $w \in V_j$ by Lemma~\ref{lem:lfmisg_markings}. Since $(w,u) \in E_j^?$, $u$ is added to $V_j^?$.

            On the other hand, if $w \notin M(G)$, then $w \in \LFMIS{\gnew} \setminus \LFMIS{G}$, and by the inductive hypothesis, $w \in \InfluenceMIS$, i.e., $w \in V_j^?$. Because we run \texttt{InfluenceMIS} on $G_{all}$ consisting of all edges in either $G$ or $G_{new}$, $w$ adds its new neighbor $u$ to $V^?$. 
            
            In either case, because $u$ is added to $V^?$ by some vertex with an earlier permutation time than $u$, by Lemma~\ref{lem:in_intersection}, $u \in V_i^?$, which results in $u$ adding itself to InfluenceMIS during iteration $i$. 
        \end{enumerate}

        \item[Case 2:] $u=\pi(i)$ is a vertex in $\LFMIS{G_{new}} \setminus \LFMIS{G}$. Because $u \notin \LFMIS{G}$, by Lemma~\ref{lem:lfmisg_markings}, $u \notin V_i \setminus V_i^?$. Therefore, to guarantee that $u \in \InfluenceMIS$, we need only show that $u \in V_i$. We prove this is true by contradiction. Suppose $u \notin V_i$. Then there must be a predecessor $\pi(j)=w \in V_j \setminus V_j^?$ that eliminates $u$ from $V$ during the influence analysis. By Lemma~\ref{lem:lfmisg_markings}, $w \in \LFMIS{G}$. If $(w,u)$ is an undecided edge, then $w$ adds $u$ to $V^?$ and fails to remove $u$ from $V$. Otherwise, because $u \in \LFMIS{G_{new}}$, we have $w \notin \LFMIS{G_{new}}$, so $w \in \LFMIS{G} \setminus \LFMIS{G_{new}}$. By the I.H., $w \in \InfluenceMIS$, which can only occur if $w \in V_j^?$, which contradicts that $w \in V_j \setminus V_j^?$. Because both cases lead to contradiction, $u \in V_i$, hence $u \in \InfluenceMIS$.
        
    \end{enumerate}
\end{proof}
}

\begin{corollary}
For a graph $G$, the expected recourse for a set of $b$ edge updates (mix of insertions and deletions) is at most $b$.
\end{corollary}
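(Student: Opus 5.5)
The corollary follows by combining the previous lemma with Lemma~\ref{lem:influenceBound}. Define the recourse of a batch as $|M(G) \bigtriangleup M(\gnew)|$: the number of vertices whose mark differs between the LFMIS of the old graph $G=(V,E)$ and the new graph $\gnew=(V,\enew)$ under the same permutation $\pi$. The batch has $b$ updates, split into insertions $E_+$ and deletions $E_-$ with $|E_+|+|E_-| \le b$. (If an edge appears in both, we drop it as a no-op, or simply count it twice; this only loosens the bound.)

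The first step is a pointwise containment. For every fixed $\pi$, the last lemma gives
\[ M(G) \bigtriangleup M(\gnew) \subseteq \mbox{InfluenceMIS}((V,E_{all}), \emptyset, E_+ \cup E_-, 1), \]
so the recourse is at most $|\mbox{InfluenceMIS}((V,E_{all}), \emptyset, E_+\cup E_-,1)|$ for every $\pi$.

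The second step takes expectations over a uniformly random $\pi$. Apply Lemma~\ref{lem:influenceBound} to the graph $G_{all}=(V,E_{all})$ with $V^?=\emptyset$, $E^?=E_+\cup E_-$ and $i=1$. Here $\Pi_1$ is the set of all permutations, so the lemma's expectation is exactly over uniform $\pi$. The undecided set is a valid input because $E_+ \cup E_- \subseteq E_{all}$: the deleted edges lie in $E$ and the inserted ones in $E_+$. The lemma then yields
\[ E_\pi\bigl[|\mbox{InfluenceMIS}(\cdot)|\bigr] \le |V^?|+|E^?| = |E_+\cup E_-| \le b. \]
Monotonicity of expectation, applied to the pointwise containment, finishes the proof.

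The main subtlety is the randomness model, namely that the expectation is taken over $\pi$ with the batch held fixed. This is justified because the adversary is oblivious: the update sequence, and hence $G$, $E_+$ and $E_-$, is committed independently of $\pi$. Conditioned on the batch, $\pi$ is therefore still uniform, and the lemma applies. Averaging over the batch distribution preserves the bound $b$.
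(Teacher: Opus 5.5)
Your proposal is correct and follows the paper's own proof: for every $\pi$ the symmetric difference $M(G) \bigtriangleup M(\gnew)$ is contained in the influence set computed on $G_{all}$ with $V^?=\emptyset$ and $E^?=E_+\cup E_-$, and Lemma~\ref{lem:influenceBound} bounds the expected size of that set by $|E_+\cup E_-| \le b$. Your remarks on the oblivious adversary (which keeps $\pi$ uniform once the batch is fixed) and on degenerate updates simply spell out what the paper leaves implicit.
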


\begin{proof}
Note that the recourse in batch-dynamic MIS is the symmetric difference of the MIS before and after the edge update. 
    By the above lemma, the symmetric difference of the \rlfmis{} before and after the edge updates is bounded by the size of the influence set, which by Lemma \ref{lem:influenceBound} is bounded by $b$.
\end{proof}

%% file: alt_algocode.tex
\begin{figure}
\begin{lstlisting}[basicstyle=\small\sffamily, keywordstyle=\bfseries, mathescape=true, escapeinside={@}{@}, morekeywords={parfor,for,if,else,foreach,return,while}, columns=flexible]
In the following pseudocode, the vertex names (e.g. $u,v,w$) indicate their ordering in the permutation.
The variable $l(u)$ maintains the deletion time of vertex $u$: $\text{if}$ $l(u) =_\pi u$ then $u$ is in the MIS
The variable $O(u)$ is true $\text{if}$ $u$ used to be in MIS (one round before)
Assume edges are directed, where $u \rightarrow v$ $\text{if}$ $(l(u),u) <_\pi (l(v),v)$.
$N^h(u)$: out edges of form $u \rightarrow v$ where $u <_\pi v$
$N^l(u)$ : out edges of form $u \rightarrow v$ where $u >_\pi v$
$N^+(u) = N^h(u) \cup N^l(u)$
$N^-(u)$ : the in edges stored as $\log n$ bags, the $i^{th}$ bag stores $\pi$-values in range $[2^i,2^{i+1})$
$N^-(u,v)$ : fetch the in edges $w$ of $u$ such that $w \geq_\pi v$ 
\end{lstlisting}

\noindent\begin{minipage}{.46\textwidth}
\begin{lstlisting}[numbers=left,firstnumber=1,xleftmargin=2em,basicstyle=\small\sffamily, keywordstyle=\bfseries, mathescape=true, escapeinside={@}{@}, morekeywords={synchronize,parfor,for,if,else,foreach,return,while}, columns=flexible]
lower $v$ to $u$:
  $N \leftarrow N^-(v,u)$
  writeMin$(l(v), u)$
  if $(l(v) = u)$ :
      parfor $w \in N$ : orient$(v,w)$
      if $O(v)$ : enque$(Q, (v,-))$

raise $v$ from $u$:
  $N \leftarrow N^l(v) \cup N^h(v)$
  $t(v) \leftarrow \mbox{findReplacement}(v)$
  $l(v) \leftarrow t(v)$
  parfor $w \in N$ :  orient$(v,w)$
  if $l(v) =_\pi v$ : enque$(Q, (v,+))$

findReplacement$(v)$ :
  $R \leftarrow \{w \in N^l(v) \mid l(w) =_\pi w\}$
  if $R = \emptyset$ : return $v$ // no replacement
  else return $\min(R)$

orient$(v,w)$ :
  if $(l(v),v) <_\pi (l(w),w)$: $v \rightarrow w$
  else: $w \rightarrow v$
\end{lstlisting}
\end{minipage}
\hspace{1em}
\begin{minipage}{.46\textwidth}
  \begin{lstlisting}[numbers=left,firstnumber=23,xleftmargin=2em,basicstyle=\small\sffamily, keywordstyle=\bfseries, mathescape=true, escapeinside={@}{@}, morekeywords={parfor,for,if,else,foreach,return,while}, columns=flexible]
propagate($E_+$, $E_-$) :
  empty $Q$
  $L \gets \{ \{u,v\} \in E_+,u<_\pi v \mid O(u) \wedge l(v) >_\pi u \}$
  parfor $\{u,v\} \in L$: lower $v$ to $u$

  $L \gets \{ \{u,v\} \in E_-, u <_\pi v \mid O(u) \land l(v) =_\pi u \}$
  parfor $\{u,v\} \in L$: raise $v$ from $u$

  while $Q \neq \emptyset$ :
    $A \subseteq Q$, $Q = Q \setminus A$
    parfor $(u,+) \in A, $ :  // lowering stage
      if $l(u) =_\pi u$ :
        $O(u) \leftarrow$ true
        parfor $v \in N^h(u)$ : lower $v$ to $u$ 

    parfor $(u,-) \in A$ :  // raising stage
      if $l(u) \neq_\pi u$ :
        $O(u) \leftarrow$ false
        $L \gets \{N^h(u) \mid l(v)=_\pi u \}$
        parfor $v \in L$ :  raise $v$ from $u$

\end{lstlisting}
\end{minipage}
\caption{Our Parallel Change Propagation Algorithm. Given a set of edges $E_+$ to insert and a set of edges $E_-$ to delete, the algorithm updates the \rlfmis{}. The runtime efficiency of the algorihtm depends on how $A$ is picked. If $A$ is the single minimum $\pi$-value vertex in the queue, then we get a sequential algorithm. If we pick $A=Q$, we get a maximally parallel algorithm, with lower depth but slightly higher work. If pick all up to some cutoff (a single shell), then we get our main algorithm.
 \label{alg:main}}

\end{figure}

%% file: 5_algorithm.tex
\section{Algorithm \label{sec:algorithm}}

\begin{table}
\centering
    \begin{tabular}{c|c}
    Symbol & Meaning \\
    \hline
    $u$ & the vertex $u$, also its permutation time $\pi(u)$ \\
    $l(u)$ & elimination time of vertex $u$ \\
    $<_\pi$ & comparator on permutation time \\
    $Q$ & conflict queue \\
    $A$ & subset of conflict queue \\
   
    \end{tabular}
    \caption{Notation used in Section \ref{sec:algorithm}. }
\end{table}

\subsection{Data Structures}
We will store the graph oriented by elimination time. In this section, we will write $u$ to denote the vertex $u$, as well as its permutation time $\pi(u)$. We use the comparator $<_\pi$ to emphasize that we are comparing permutation times: $u <_\pi v$ means that $\pi(u) < \pi(v)$. We write $l(u)$ for the elimination time of a vertex. The graph will be oriented from least to greatest, according to the value of the pair $(l(u),u)$, as shown in Figure \ref{fig:dependency}. When we compare $(l(u),u)$ to $(l(v),v)$ we first compare $l(u)$ and $l(v)$, then tiebreak on $u$ and $v$ if $l(u)=l(v)$. 

We will keep track of the following quantities per vertex. Each vertex has three types of neighbors: \emph{in neighbors}, \emph{out higher neighbors}, and \emph{out lower neighbors}. For an edge $(u,v)$, $u$ is an in neighbor of $v$ if $(l(u),u) < (l(v),v)$. The vertex $u$ is an out (higher) neighbor of $v$ if $(l(u),u) > (l(v),v)$ and $u > v$. The vertex $u$ is an out (lower) neighbor of $v$ if $ (l(u),u) > (l(v),v)$ and $u < v$. For a vertex $v$, we denote the in neighbors by $N^-(v)$, the out higher neighbors by $N^h(v)$, and the out lower neighbors by $N^l(v)$. 

In neighbors are sorted into $\log n$ bags of doubling range (of elimination times), and unsorted within each bag. The $i^{th}$ bag contains the in neighbors with elimination times in range $[2^i,2^{i+1})$. Out lower, and out higher, neighbors are unsorted.

A vertex holds the in-edge bags in an array of length $O(\log n)$. The first time that we insert an edge adjacent to endpoint $v$, we initialize this array, which takes $O(\log n)$ work. 

\subsection{Algorithm}

We give pseudocode for our algorithm in Figure \ref{alg:main}. Recall that our program is operating in synchronized lock step (i.e. joins after each line).

\myparagraph{Helper functions}
Given a set of pairs $(j,i)$, the lower $j$ to $i$ routine (lines 1-6) will do the following (to all pairs, in parallel). The new elimination time (level) for $j$ will be set to the minimum of the $i$'s that are lowering it (line 3). Then, some of $j$'s edges will be flipped to correspond to the new elimination time (line 5). If $j$ was previously marked, then we add it to the queue, so that it can propagate changes to its neighbors (line 6).

Given a set of vertices $(j)$, the raise $j$ from $u$ routine (lines 8-13) will do the following (in parallel lock step). The vertex $j$ will search for a replacement eliminator (earlier marked vertex) (line 10). If one is found, the minimum such vertex (line 18) will be assigned as $j$'s new elimination time (line 11). Otherwise, $j$ will mark (line 11), and join the queue to propagate changes to its neighbors (line 13). Some of $j$'s edges are flipped to account for its new elimination time (line 12).

\myparagraph{Main Algorithm}
Given an insert edge batch $E_+$ and a deletion edge batch $E_-$, we do the following. For the insertion batch, we find all adjacent marked neighbors, and lower the later neighbor (lines 25-26). For the deletion batch, we find all update edges where the earlier end is marked and eliminates the later unmarked endpoint, and we raise the later endpoint (lines 28-29).

Our queue\footnote{Technically, the vertices that we are processing are stored in an array of bags, one per shell, not a queue. We refer to this structure as a queue for intuition, because in the sequential case it would be a priority queue.} may now contain vertices that must trigger changes in their neighbors. While the queue is nonempty, we take a subset $A$ of the queue to process. 
We remove $A$ from the queue. Then, for the vertices which previously marked, we lower their neighbors (lines 33-36), and for the vertices that previously unmarked, we raise their eliminated neighbors (lines 38-42).

Regardless of the subset we pick, our algorithm is correct (see Section \ref{sec:correctness}). However, the subset we choose affects our work and depth bounds. Roughly speaking, the more vertices we choose, the more parallel our algorithm is, but we may do redundant work as a result. Note that we get a sequential algorithm at a high-level similar to BDHSS when we choose a single vertex, the minimum permutation time vertex, to process each round.

\subsection{Correctness \label{sec:correctness}}

In this subsection, we will show the correctness of our algorithm. First, we will show that we correctly maintain the oriented graph on the edges. Assuming that our orientation is correct, we show that our queue will contain all of the information necessary to fix our \rlfmis{}. We then show that we make progress in every iteration, and so our algorithm is correct. 

\myparagraph{Maintaining a correct orientation}
Recall that 
    we say that a graph orientation is correct according to $\pi$ if for all edges $(u,v)$, we have that $u \rightarrow v \iff (l(u),u) <_\pi (l(v),v)$.
In this subsection, we will show that we maintain a correct graph orientation, between each lowering and raising stage. 
Whenever we call the orient function (lines 20-22), we correctly fix the edge called on. 
For efficiency reasons, we do not call orient on every edge incident on a vertex when we change its elimination time, only edges that may actually flip direction. We show that we call orient on the necessary edges to fix the orientation.

\begin{lemma}
    Consider a set of calls lower$(v_i,u_i)$, where $u_i < v_i$. These calls maintain a correct graph orientation. 
\end{lemma}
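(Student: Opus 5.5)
Assume the orientation is correct before the calls. The argument works edge by edge: only edges incident to a vertex whose key $(l(\cdot),\cdot)$ changes can need a flip, and I will show that each such edge has orient called on it after all writes to $l$ are finished.

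\textbf{Step 1: only lowered vertices change key.} A call lower$(v_i,u_i)$ changes nothing except $l(v_i)$. By writeMin it sets $l(v_i)$ to the minimum over all $u_i$ lowering $v_i$ and its old value. After line 3 synchronizes, $l(v)$ is final for every $v$ touched in this batch, and it can only have decreased. Since orient reads the final values, which are written before the parfor on line 5, each call of orient that happens sets its edge correctly.

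\textbf{Step 2: which edges can flip.} Fix a lowered vertex $v$ with old key $(l_0,v)$ and new key $(l_1,v)$, where $l_1\le l_0$. Take an edge $(v,w)$.
\begin{itemize}
\item If $w$ was an out-neighbor of $v$, so $(l_0,v)<(l(w),w)$, then the smaller key $(l_1,v)$ still satisfies $(l_1,v)<(l(w),w)$, provided $w$'s key did not also change.
\item If $w$ was an in-neighbor, so $(l(w),w)<(l_0,v)$, the edge may flip. It stays as is when $(l(w),w)<(l_1,v)$, that is, when $l(w)<l_1$, or when $l(w)=l_1$ and $w<_\pi v$.
\end{itemize}
So the candidates are in-neighbors $w$ with $(l(w),w)\ge(l_1,v)$. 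These are in particular in-neighbors with $l(w)\ge_\pi u$ for $u=l_1$, which is exactly the set $N^-(v,u)$ fetched on line 2. The call on line 2 is made with the caller's own $u$, and $u\ge l_1$. So I need care to ensure the fetch by the winning caller, the one with $l(v)=u$ on line 4, covers everything. That caller's $u$ equals $l_1$, and only it executes line 5, so $N=N^-(v,l_1)$.

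\textbf{Step 2b: fetching before the write is sound.} The fetch on line 2 happens before the writeMin. This is fine because the in-bag structure of $v$ is indexed by the unchanged keys of its neighbors.

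\textbf{Step 3: both endpoints lowered.} This is the main obstacle. Suppose $v$ and $w$ are both lowered. Then the edge changes iff the comparison of the new pair differs from the old one, so I must show some endpoint calls orient on it.
\begin{itemize}
\item If $w$ was an in-neighbor of $v$ with old $l(w)\ge l_1(v)$, then $v$'s fetch includes it.
\item Otherwise old $l(w)<l_1(v)\le l_0(v)$. Lowering $w$ only makes $l(w)$ smaller, so the edge keeps direction $w\to v$.
\end{itemize}
The symmetric case, with $v$ an in-neighbor of $w$, is handled the same way from $w$'s side. Since orient recomputes from final values, concurrent calls on the same edge from both endpoints agree.

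Finally I confirm that edges not incident to lowered vertices are untouched, which completes the induction over batches.
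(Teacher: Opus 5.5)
Your proposal is correct and follows essentially the same route as the paper. Edges to in-neighbors whose original elimination time is at least the new $l(v)$ are reoriented by the fetch $N^-(v,l(v))$ issued by the writeMin winner; edges where $v$ was an in-neighbor of a concurrently lowered $w$ are reoriented by $w$'s own call; every other edge keeps its direction. You are more explicit than the paper on three points: tie-breaking between equal elimination times, the fact that only the winning caller executes line 5, and the fact that the line-2 fetch reads pre-write keys because execution is lock-step.
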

\begin{proof}

Within the call to \texttt{lower}, each $v : (u,v)$ (where $u <_\pi v$) has its elimination time potentially modified. Of the edges $(u_i,v)$, the smallest $u_i$ will win the writeMin and set the elimination time of $v_i$ to $u_i$.
    Denote a vertex's original elimination time by $l(x)$ and new elimination time by $l'(x)$.
    
    We must show that all neighbors of $v$ with new elimination time in the range $[u_i,l(v)]$ have an orient call. 

    First, note that lower $v$ to $u_i$ will call orient on all edges with neighbor originally in range $[u_i,l(v)]$. 

    Now consider a neighbor $w$ of $v$ with $l'(w) \in [u_i,l(v)]$ but $l(w) \not\in [u_i,v]$. Since lower only reduces elimination time, we have $l(w) > l(v)$, and so $v$ was an in-neighbor of $w$. Then $w$ had its own lower call to a value $l'(w) \in [u_i,v]$. Since $w$'s lower call will check all in-neighbors with elimination time originally in range $[w',w]$ and $l(v) \in [w',w]$, $w$'s lower call will orient $(v,w)$. Therefore, all in neighbors of $v$ in range $[u_i,l(v)]$ have an orient call. 
\end{proof}

\begin{lemma}
    Consider a set of calls $raise(v_i,u_i)$, where $u_i < v_i$. These calls maintain a correct graph orientation. 
\end{lemma}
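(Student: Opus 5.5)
To prove the statement, I would mirror the argument for \texttt{lower}. First I would fix notation. Write $l(x)$ for the elimination time of $x$ before the batch of raise calls and $l'(x)$ after. The calls are executed in lock step: all calls first read their neighbor sets $N \leftarrow N^l(v)\cup N^h(v)$, then compute replacements, then write $l$, and only then orient.

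Raising only increases elimination times: $l'(v_i) \ge_\pi l(v_i)$, since the old eliminator $u_i$ is no longer marked and any replacement is still earlier than $v_i$. An edge $(v,w)$ can change orientation only if at least one endpoint changed its elimination time. So it suffices to show that every edge incident to a raised vertex $v$ whose direction may flip receives an orient call after all new values $l'$ are in place.

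I would split by the old orientation of the edge $(v,w)$.

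\emph{Case 1: $v \to w$ was an out-edge of $v$, i.e.\ $w \in N^l(v)\cup N^h(v)$.} Then $v$'s own raise call collects $w$ in $N$ and calls orient$(v,w)$. Because orient is invoked on line 12 after all writes to $l$ in the lock-step stage, it compares the final values $(l'(v),v)$ and $(l'(w),w)$ and fixes the edge correctly, whether or not $w$ was also raised.

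\emph{Case 2: $w \to v$ was an in-edge of $v$, i.e.\ $(l(w),w) <_\pi (l(v),v)$.} If $w$ is not raised, then $l'(w)=l(w)$ and $l'(v)\ge l(v)$. Hence $(l'(w),w) <_\pi (l'(v),v)$ still holds (with the tiebreak unchanged), and the edge needs no fix. If $w$ is also raised, then $v$ is an out-neighbor of $w$. By Case 1 applied to $w$, $w$'s call orients the edge using the final values.

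Thus every potentially flipped edge is oriented correctly. All other edges have both endpoints unchanged, so they stay correct.

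The main obstacle is the synchronization subtlety: making sure the neighbor lists are read before any orient calls rewrite them, and that findReplacement reads consistent $l$-values. findReplacement reads $l(w)$ for $w\in N^l(v)$, and a concurrently raised $w$ might have $l(w)$ changing. I would argue that a raised $w$ was unmarked before the call ($l(w)=u<_\pi w$), so reading either its old or new value does not matter for correctness of the orientation lemma itself. The orientation claim only depends on orient using the final $l'$ values, which the lock-step ordering (line 12 after line 11) guarantees.
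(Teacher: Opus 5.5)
Your proof is correct and follows the paper's argument: raise only increases elimination times, so the only edges that can flip are the former out-edges in $N^+(v)$, and raise orients these after the lock-step write of $l$. Your case split on in-edges (unchanged if $w$ is not raised, handled by $w$'s own call if it is) just spells out why the paper's ``it suffices'' holds.

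One small fix concerns your justification of $l'(v) >_\pi l(v)$. It does not follow from the replacement being earlier than $v$. It follows because findReplacement only searches $N^l(v)$, and any marked $w$ there satisfies $(w,w) >_\pi (l(v),v)$ with $w <_\pi v$, which forces $w >_\pi l(v)$. If $R=\emptyset$, then $l'(v)=v >_\pi l(v)$ directly.
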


\begin{proof}
    Note that raise can only increase the elimination time of $v_i$, and thus it suffices to only \texttt{orient} every (formerly) outgoing edge $(v,w) \in N^+(v)$, which raise does by orienting every edge to a neighbor in $N$ (line 9).
\end{proof}


\begin{lemma}
At the beginning of each lowering/raising stage, the graph is oriented correctly.  
\end{lemma}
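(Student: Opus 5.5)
We argue by induction on the number of stages executed by \texttt{propagate}, where a stage is one of the following: the initial insertion step (lines 25--26), the initial deletion step (lines 28--29), or a lowering or raising stage inside an iteration of the while loop.

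\textbf{Base case.} Before \texttt{propagate} is called, the structure stores the previous graph oriented correctly with respect to the previous elimination times. Inserting the edges of $E_+$ and deleting the edges of $E_-$ has to place each new edge in the correct orientation. Each inserted edge $\{u,v\}$ is oriented by comparing $(l(u),u)$ and $(l(v),v)$ at insertion time, which is the same test \texttt{orient} performs. Deleted edges are simply removed from their bags. So once the batch is physically applied, the orientation is correct with respect to the current values of $l$.

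\textbf{Inductive step.} Every change to an elimination time happens inside a \texttt{lower} or \texttt{raise} call, and these calls are issued in parallel batches:
\begin{itemize}
\item line 26 and the lowering stage each issue a batch of \texttt{lower} calls only;
\item line 29 and the raising stage each issue a batch of \texttt{raise} calls only.
\end{itemize}
Each call has the form lower $v$ to $u$ or raise $v$ from $u$ with $u<_\pi v$. This holds at line 26 because $u<_\pi v$ is required there, at line 36 because $v\in N^h(u)$, and at lines 29 and 41 by the same reasoning. So each batch meets the hypotheses of the two preceding lemmas, and those lemmas show the orientation is correct after the batch. Chaining the batches in program order, the orientation is correct whenever one batch ends and the next begins, and in particular at the start of every lowering and raising stage.

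\textbf{Main obstacle.} The argument relies on each stage being homogeneous: no single synchronized step may mix \texttt{lower} and \texttt{raise} calls, because the two lemmas are only stated for pure batches. I would check this explicitly against the pseudocode:
\begin{itemize}
\item the lowering loop (lines 33--36) only calls \texttt{lower}, and the raising loop (lines 38--42) only calls \texttt{raise};
\item these two loops are separated by a synchronization point;
\item lines 26 and 29 are likewise separated by a synchronization point.
\end{itemize}

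There is a second, subtler point. Inside a lowering stage, a vertex $w$ may have $l(w)$ lowered at the same time as its neighbor $v$ is lowered. The proof of the \texttt{lower} lemma already handles this case: the edge is reoriented by $w$'s own call. The remaining worry is that neighbor sets are read and edges rewritten within one step. Because line 2 computes $N^-(v,u)$ before the writeMin on line 3, all reads use the orientation from the start of the stage. I would state this snapshot semantics explicitly, so that the lemma's reasoning about the original values $l(\cdot)$ applies directly.
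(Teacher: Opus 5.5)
Your proof is correct and is essentially the paper's argument. It is an induction over stages: the base case is that no elimination times or orientations have been modified before the first \texttt{parfor} of \texttt{lower} calls, and the inductive step applies the two preceding lemmas to each batch of only \texttt{lower} or only \texttt{raise} calls, relying on the lock-step separation between elimination-time writes and \texttt{orient} calls. Your extra checks (every call has $u <_\pi v$, lowering and raising never share a synchronized step, inserted batch edges start correctly oriented) just spell out what the paper's short proof leaves implicit.
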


\begin{proof}

 Note that whenever we call \texttt{orient} on an edge, we correctly direct the edge based on the elimination time. Furthermore, note that we change elimination times, then sync, then call orient (no interleaving of calls to orient and changes of elimination time).

  If we called orient on every edge where at least one vertex changed elimination time, then the result would follow immediately. However, for efficiency reasons, calling orient on all such edges is wasteful. What we need to show is that if we change the elimination time of $u$ from $t$ to $t'$, that we call orient on all neighbors of $u$ with elimination time in range $t$ to $t'$. 

  After the step of initializing $L$ and directly before using \textbf{parfor} to lower vertices in parallel, no vertices have had their elimination times nor edge orientations modified yet, so the dependency DAG is correct. By the above lemmas, each lower and raise maintains the graph orientation. Because all elimination time changes are expressed with a lower or raise, the graph orientation is correct between each raising and lowering stage by induction. 
\end{proof}

\myparagraph{Maintaining the conflict queue}
Suppose that we have assigned elimination times to every vertex (where $l(u) \le u$ for all $u$). Recall that $u$ is marked iff $l(u)=u$.

There are two types of conflicts:
\begin{enumerate}
\item Too early conflicts: $l(u) < u = l(x)$ ($x$'s elimination time is too early).
\item Too late conflicts: $l(u) = u < l(x)$ ($x$'s elimination time is too late).
\end{enumerate}

Note that if there are no conflicts, then we have found the \rlfmis{}. 
Thus, conflicts are pairs $(u,x)$ (which are necessarily also edges in the graph). Note that $x$ could have multiple too late conflicts, but only one too early conflict. Note that $x$ could have a too early and too late conflict at the same time (which is fine because we do not raise and lower concurrently).
In this subsection, we will show that our queue captures all conflicts currently present in our graph. Note that in our queue, we do not store $(u,x)$, but only $(u,+)$ (if $u$ had marked) or $(u,-)$ (if $u$ had unmarked); we will show that this is all we need to store.

\begin{lemma}
    Suppose that $(u,x)$ is a conflict where $l(u) < u = l(x)$. Then calling \texttt{raise} $x$ from $u$ will resolve this conflict. 

    Similarly, suppose that $(u,x)$ is a conflict where $l(u) = u < l(x)$. Then calling \texttt{lower} $x$ to $u$ will resolve this conflict. \label{lem:adjustFixes}
\end{lemma}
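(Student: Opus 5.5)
We argue both parts directly from the pseudocode of \texttt{raise} and \texttt{lower} in Figure~\ref{alg:main}. The orientation is correct at the start of each stage by the preceding lemma, so out-neighbor sets can be read off from orientation. A conflict $(u,x)$ is resolved exactly when, afterwards, the pair no longer satisfies the defining inequality of that conflict type. We only need the specific conflict pair $(u,x)$ to disappear; new conflicts elsewhere are handled by the queue in later lemmas.

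\emph{Too early conflict.} Here $l(u) <_\pi u = l(x)$. Since $l(x)=u$, the vertex $u$ is not marked, so it cannot serve as $x$'s eliminator. \texttt{raise} $x$ from $u$ sets $l(x) \leftarrow \mbox{findReplacement}(x)$.

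\begin{itemize}
\item If the replacement set $R$ is empty, then $l(x)=x$. We have $x >_\pi u$ because $l(x)=u\le_\pi x$, and $x \neq u$ since $u$ is not marked. So $l(x)=x \neq u$.
\item Otherwise $l(x)=\min(R)$, where every $w\in R$ satisfies $l(w)=w$. Since $u$ has $l(u)\neq u$, we have $u\notin R$, so again $l(x)\neq u$.
\end{itemize}

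In either case the equality $u=l(x)$ fails, so the pair is no longer a too early conflict. A too late conflict on the same pair would require $l(u)=u$, which still fails because $l(u)$ is untouched. The only care needed is that \texttt{raise} changes only $l(x)$ and orientations, not $l(u)$.

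\emph{Too late conflict.} Here $l(u)=u<_\pi l(x)$. \texttt{lower} $x$ to $u$ performs writeMin$(l(x),u)$, so afterwards $l(x)\le_\pi u$. Other concurrent lower calls can only decrease $l(x)$ further, since writeMin is monotone. Thus $u<_\pi l(x)$ fails. The pair cannot become a too early conflict, because $l(u)=u$ is unchanged: \texttt{lower} on $x$ does not touch $l(u)$, and in a lowering stage $u$ itself is only lowered if some earlier marked neighbor exists.

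\emph{Main obstacle.} The delicate point is concurrency. Several \texttt{raise} or \texttt{lower} calls run in lock step, so we must check that the value read for $u$, namely $l(u)$, is the same one used by the conflict definition. For the raise case this means we should not worry that $u$ is simultaneously being re-marked in the same stage. We handle this by noting that raising and lowering stages are separated and synchronized. In the raising stage nothing becomes marked except through raise of other vertices. If $u$ were simultaneously raised to marked, then $l(x)=u$ could reappear only via a later lowering, which is a new conflict rather than the old one. We state the lemma relative to the values at the start of the stage and resolve this by that snapshot argument.
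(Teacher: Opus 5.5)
Your proof is correct and takes essentially the paper's approach. \texttt{raise} moves $l(x)$ off $u$: the paper says it strictly increases past $u$, and your observation that $u\notin R$ and $x\neq u$ is an equally valid justification. Likewise \texttt{writeMin} forces $l(x)\le_\pi u$ even with concurrent lowers.

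One caveat: your side claims that the pair $(u,x)$ cannot turn into a conflict of the opposite type are unnecessary, and they do not hold in general. The reason is that $u$ itself may be concurrently raised to marked in the same raising stage, or lowered in the same lowering stage. This does not affect the proof, because the lemma only requires the given inequality to fail, and any such new conflict is caught by the queue via Lemma~\ref{lem:conflictsAdded}.
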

\begin{proof}
    Raising $x$ will increase its elimination time, so that $u < l(x)$, fixing the conflict. Note that only one call to raise $x$ can happen concurrently, so this will fix. 

    Lowering $x$ will reduce $l(x)$ at least to $u$, if not lower (because of concurrent lower calls). Thus, $l(x) \le u$, so the conflict is fixed after the lower call.
\end{proof}

Once we have marked $u$, if we need to propagate changes to $u$'s neighbors, we want to add $(u,+)$ to the queue. Similarly, if we unmark $u$ and need to propagate changes, we want to add $(u,-)$ to the queue. In this next lemma, we show that we actually add these values to the queue when necessary.

\begin{lemma}
    Consider the conflicts created by a lower $v$ to $u$ call (the pairs $(v,w)$ where $l(v) < v = l(w)$). If any such conflicts occur, then $(v,-)$ is added to the queue. 

    Similarly, consider the conflicts created by a raise $v$ from $u$ call. If any such conflicts occur, then $(v,+)$ is added to the queue. 

    Therefore, any conflicts created by a lower or raise call correspond to an addition to the queue. \label{lem:conflictsAdded}
\end{lemma}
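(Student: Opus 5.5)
The argument is a direct case analysis on the code of \texttt{lower} (lines 1--6) and \texttt{raise} (lines 8--13). The key observation is which conflicts a single elimination-time change can create. A conflict involving $v$ as the ``cause'' is a pair $(v,w)$ with $v <_\pi w$ where either $l(v) < v = l(w)$ (too early) or $l(v) = v < l(w)$ (too late). A change to $l(v)$ can only create such a pair by changing whether $v$ is marked, i.e.\ whether $l(v) =_\pi v$. Pairs in which $v$ plays the role of $x$ (the target) are not new conflicts attributable to $v$'s own status. Such a pair is either fixed by the call itself (Lemma~\ref{lem:adjustFixes}) or stems from the status of the other endpoint, which is accounted for by that endpoint's own queue entry. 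So I would first record that the relevant conflicts are exactly those with $v$ in the first coordinate, and then show that the marked-status flip of $v$ triggers the matching enqueue.

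\textbf{Lowering.} Suppose a call \texttt{lower} $v$ to $u$ creates a conflict $(v,w)$ with $l(v) < v = l(w)$. Then before the call $l(w) = v$ held while $w$ was not yet in conflict. The only way $l(w)$ equals $v$ is that $v$ was recorded as $w$'s eliminator, which happens only through \texttt{lower} $w$ to $v$ or \texttt{findReplacement} choosing $v$. Each of these happens only when $v$ had been processed as marked, i.e.\ $O(v)$ is true (line 35, or the $O(u)$ test on line 25). Alternatively, $v$ was marked at the start with $O(v)$ true. In particular, after the call $l(v) \neq v$ and $v$ was previously marked with $O(v)$ true. Lowering strictly decreases $l(v)$ below $v$ only when the writeMin succeeds, so the test on line 4 holds for the winning value. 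Line 6 then enqueues $(v,-)$ because $O(v)$ is true. The subtle point is that concurrent \texttt{lower} calls on $v$ all compete in the writeMin. At least the call whose value equals the final $l(v)$ passes the check on line 4, so the enqueue happens at least once.

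\textbf{Raising.} Suppose a call \texttt{raise} $v$ from $u$ creates a conflict $(v,w)$ with $l(v) = v < l(w)$. Then after the call $v$ is marked, which means \texttt{findReplacement} returned $v$. Line 13 then enqueues $(v,+)$ unconditionally. If instead a replacement $t(v) <_\pi v$ was found, $v$ stays unmarked, and no conflict with $v$ as the marked cause arises. Only one raise of $v$ happens in a stage, as noted in Lemma~\ref{lem:adjustFixes}, so no race occurs here. The final sentence of the lemma is the union of the two cases.

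The main obstacle I expect is the lowering case. There I must justify the invariant that $O(v)$ is true whenever some vertex currently has $l(w) = v$. I would prove this invariant by induction over stages. Eliminators are assigned only by lower calls issued from vertices processed in a lowering stage, where line 35 sets $O$ to true. They are also assigned by \texttt{findReplacement}, which selects only vertices with $l(w) = w$, and these vertices, if newly marked, have $O$ set when they are dequeued. Vertices with $O$ still false that are marked but unprocessed would break the argument, so I would handle that subcase carefully. The saving observation is that such a vertex is still in the queue as $(v,+)$. Its lowering stage will re-check its status and do nothing harmful, and any conflict it caused is then covered by its pending entry.
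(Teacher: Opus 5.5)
Your case split is the paper's: you split on \texttt{lower} and \texttt{raise}, use the enqueues on lines 6 and 13, and let the writeMin winner pass line 4. The raise case is fine. The gap is the lowering step you flag yourself, and it does not close. It is false that $l(w)=v$ can arise only when $O(v)$ holds. \texttt{findReplacement} (line 16) tests $l(x)=_\pi x$, not $O(x)$. So it can pick a vertex $v$ that was raised to marked but whose entry $(v,+)$ has not yet been dequeued, in which case $O(v)$ is still false. Your proposed rescue also fails. A pending $(v,+)$ only addresses too-late conflicts. Once $v$ has been lowered, processing $(v,+)$ fails the test on line 34 and does nothing, so nothing ever enqueues $(v,-)$, the only entry that would raise $w$. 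The paper's proof has the same hole: it simply asserts that if $v$ was marked before the call then $O(v)$ is true.

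This subcase really occurs, so the lemma is false for the pseudocode under the minimum-$\pi$ policy the paper endorses. Take $\pi$ the identity on $\{1,\dots,5\}$ and old edges $\{1,4\},\{2,3\},\{2,5\},\{3,4\},\{4,5\}$, so the old MIS is $\{1,2\}$. The batch inserts $\{1,2\}$ and deletes $\{1,4\}$.
\begin{itemize}
\item The initial phase lowers $2$ to $1$, enqueuing $(2,-)$. It also raises $4$ to marked, enqueuing $(4,+)$ while $O(4)$ stays false.
\item Processing $(2,-)$ raises $3$, which becomes marked and enqueues $(3,+)$. It also raises $5$, whose \texttt{findReplacement} returns the marked $4\in N^l(5)$, so $l(5)=4$.
\item Processing $(3,+)$ calls \texttt{lower} $4$ to $3$. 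This creates the too-early conflict $(4,5)$ with $l(4)=3<4=l(5)$. Since $O(4)$ is false, $(4,-)$ is not enqueued.
\item Processing $(4,+)$ is then a no-op. The algorithm halts with $\{1,3\}$, which is not even maximal; the correct answer is $\{1,3,5\}$.
\end{itemize}
The shell policy behaves the same way whenever vertex $4$ lies in a later shell than vertex $3$. So no argument can prove the statement for arbitrary $A$. It does hold for $A=Q$: by lock-step, every newly marked vertex has its $(v,+)$ processed, setting $O(v)$, before any later \texttt{findReplacement} reads it. Otherwise the algorithm must change, e.g.\ \texttt{lower} should enqueue $(v,-)$ whenever $v$ was marked immediately before the call.
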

\begin{proof}
    Consider the lower $v$ to $u$ call. Reducing $v$'s elimination time could cause later neighbors of $v$ to have an incorrect (too early) elimination time. For this to occur, $v$ would need to be marked before the lower call, in which case $O(v)$ would be true, and so $v$ would be enqueued. Note that any neighbor $w$ of $v$ which now has an incorrect elimination time has a later permutation time than $v$, so we have $l(v) < v = l(w)$. When there are concurrent lower calls, each of which would enqueue $v$, the winner (minimum $u$) will enqueue $v$.

    Consider a raise $v$ from $u$ call. Increasing $v$'s elimination time could cause a later neighbor $w$ of $v$ to have an incorrect (too late) elimination time. For this to occur, $v$ would need to mark (raise to $l(v)=v$), which would cause $v$ to enqueue. Note that $l(v)=v < l(w)$. 
\end{proof}

\begin{lemma}
    Processing $(u,+)$ fixes all too-late conflicts involving $u$, and processing $(u,-)$ fixes all too-early conflicts involving $u$. \label{lem:conflictFixed}
\end{lemma}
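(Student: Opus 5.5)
Looking at the pseudocode, the plan is to check the two stages separately. In each, the aim is to show that every conflict of the relevant type with $u$ as the ``cause'' receives the corresponding \texttt{lower}/\texttt{raise} call. Then Lemma~\ref{lem:adjustFixes} gives that the call resolves the conflict. Throughout, we use that the orientation is correct at the start of each lowering/raising stage (the orientation lemma above), so that $N^h(u)$ is exactly the set of neighbors $v$ with $v >_\pi u$ and $(l(v),v) >_\pi (l(u),u)$.

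For $(u,+)$: a too-late conflict involving $u$ is a pair $(u,x)$ with $l(u)=u<l(x)$. Processing happens only when $l(u)=_\pi u$ still holds (line 34). If $u$ is no longer marked, no too-late conflict has $u$ as its first coordinate, so there is nothing to fix. Otherwise, I will argue $x\in N^h(u)$. Since $l(x)>u\ge l(u)$, we have $(l(u),u)<(l(x),x)$, so the edge is oriented $u\to x$. Moreover, $x>_\pi u$ because $l(x)\le x$ always holds and $l(x)>u$. Hence $x\in N^h(u)$, and line 36 issues \texttt{lower} $x$ to $u$. By Lemma~\ref{lem:adjustFixes}, afterwards $l(x)\le u$.

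One subtlety is that other concurrent lowerings may set $l(x)$ to something smaller. That only helps, since the conflict condition $u<l(x)$ is then false.

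For $(u,-)$: a too-early conflict is $(u,x)$ with $l(u)<u=l(x)$. If $l(u)=u$ at processing time, $u$ is marked and no too-early conflict exists, matching the guard on line 39. Otherwise, for such $x$ we have $l(x)=u>l(u)$, so the edge is oriented $u\to x$. Also $x>_\pi u$, because $x$ is eliminated by $u$ and so $x\neq u$ (if $x=u$ then $l(u)=u$, a contradiction) and $x\ge l(x)=u$. Hence $x\in N^h(u)$ with $l(x)=_\pi u$, so $x\in L$ on line 41, and \texttt{raise} $x$ from $u$ is called. Lemma~\ref{lem:adjustFixes} then fixes the conflict: \texttt{findReplacement} returns either $x$ itself or a marked earlier neighbor in $N^l(x)$. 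That neighbor is not $u$, because $u$ is unmarked, so $l(x)\neq u$ afterwards.

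The main obstacle is making sure the sets read ($N^h(u)$ and the values $l(\cdot)$) are consistent at the moment of processing, given that many elements of $A$ are processed in parallel within one stage. I would handle this by noting that within the lowering stage only \texttt{lower} calls occur. These never invalidate membership of a conflicting $x$ in $N^h(u)$ before it is read, since reads precede writes in lock step. Symmetrically, within the raising stage, raises of other vertices cannot make a too-early conflict with $u$ disappear unnoticed: they either fix it or leave $l(x)=u$ with $x$ in the list. The stage-level orientation lemma guarantees correctness at the start of the stage, which is when $N^h(u)$ is read.
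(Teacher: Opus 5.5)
Your proposal is correct and follows the same route as the paper. In each stage you show that the conflicting $x$ lies in $N^h(u)$ (with $l(x)=_\pi u$ in the raising case), so the corresponding \texttt{lower}/\texttt{raise} call is issued and Lemma~\ref{lem:adjustFixes} resolves the conflict. You are more explicit than the paper in checking both the orientation $u\to x$ and $x>_\pi u$ via the invariant $l(x)\le_\pi x$, and in your lock-step concurrency remarks, but the argument is the same.
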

\begin{proof}

  We will want to show that processing $(u,+)$ fixes all too late conflicts, and processing $(u,-)$ fixes all too early conflicts.

  First, consider $(u,+) \in A$, and let $(u,x)$ be a too-late conflict, where $l(u)=u < l(x)$. Note that $u$ passes the $l(u)=u$ check, and will call lower on all neighbors in $N^h(u)$, and since $u < x$ we have that $x \in N^h(u)$, so we do call lower $x$ to $u$. By Lemma \ref{lem:adjustFixes}, this does fix the conflict.

  Next, consider $(u,-) \in A$, and let $(u,x)$ be a too-early conflict, where $l(u) < u = l(x)$. Since $u=l(x)$ and $u < x$, we have that in the raising round, $x \in N^h(u)$ and so we will raise $x$ from $u$. By Lemma \ref{lem:adjustFixes}, this does fix the conflict.
\end{proof}

\begin{lemma}
    At the beginning of the while loop and between each mark and unmark stage, for all too-early conflicts $(u,x)$ we have $(u,-) \in Q$ and for all too-late conflicts $(u,x)$ we have $(u,+) \in Q$. 
\end{lemma}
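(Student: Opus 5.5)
We argue by induction over the sequence of stages (initial lowering, initial raising, then each lowering stage and raising stage of each while-loop iteration). The invariant is that every too-early conflict $(u,x)$ has $(u,-)\in Q$ and every too-late conflict $(u,x)$ has $(u,+)\in Q$. We take as given that, before the batch is applied, the stored elimination times describe the LFMIS of the old graph with the old orientation, so there are no conflicts and $Q$ is empty.

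\textbf{Base case.} Inserting or deleting an edge $\{u,v\}$ with $u<_\pi v$ can only create conflicts on that edge.
\begin{itemize}
\item An inserted edge creates a too-late conflict exactly when $u$ is marked ($O(u)$) and $l(v)>_\pi u$. These are precisely the pairs in line 25, and they are repaired by calling lower $v$ to $u$ (Lemma~\ref{lem:adjustFixes}).
\item A deleted edge leaves a stale too-early dependence exactly when $u$ was $v$'s eliminator ($l(v)=u$, $u$ marked). Line 28 selects exactly these and calls raise.
\end{itemize}
By Lemma~\ref{lem:conflictsAdded}, any new conflict produced by these lower/raise calls has its source enqueued with the correct sign. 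Hence at the start of the while loop every conflict is witnessed in $Q$. One point needs care here: a lower call may create a too-early conflict $(v,w)$ only if $v$ was marked before the call. We must check that the $O(\cdot)$ flags agree with the pre-batch marking at this moment, and they do.

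\textbf{Inductive step (one stage).} Consider a conflict present after the stage and split on whether it existed before the stage.
\begin{itemize}
\item \emph{The conflict existed before the stage.} By hypothesis its witness $(u,\pm)$ was in $Q$. Either the witness was not selected into $A$, in which case it is still in $Q$, or it was selected, in which case Lemma~\ref{lem:conflictFixed} says processing it removed all conflicts of that type at $u$.
\item \emph{The conflict was created during the stage.} The only mutations of $l(\cdot)$ occur inside lower and raise calls. A new conflict $(y,x)$ must therefore arise because either $l(y)$ or $l(x)$ changed.
\begin{itemize}
\item If $l(y)$ changed, Lemma~\ref{lem:conflictsAdded} gives the enqueue of $(y,\pm)$.
\item If only $l(x)$ changed, we get something of the form $l(y)=y<_\pi l(x)$ after a raise of $x$, or $l(x)=y$ after a lower.
\end{itemize}
\end{itemize}

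\textbf{Main obstacle.} The delicate part is this last subcase: conflicts created on the target side of a call, together with interference between concurrent calls in the same parallel stage.

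For the target side, a raise sets $l(x)$ via findReplacement to the minimum marked lower neighbor. We must argue this cannot exceed any marked earlier neighbor $y$, because such a $y$ lies in $N^l(x)$ under a correct orientation (the orientation lemmas above). A lower sets $l(x)$ to a marked $u$ and so creates no conflict with $x$ as target.

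The concurrency issue is that a stage's raises and lowers read $l$ values that other calls may change simultaneously. Here we will lean on the synchronous structure: lowers occur only in lowering stages and raises only in raising stages, with writeMin resolving concurrent lowers. We then check that the newly marked vertex $x$ enqueues $(x,+)$ at line 13 whenever it becomes marked. Since $O(\cdot)$ is updated only when a vertex is dequeued, a vertex that flips repeatedly is still enqueued on each relevant flip.

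I expect most of the effort to go into this concurrency bookkeeping, namely showing that no conflict's witness is lost when the same vertex is both a target and a source within one stage.
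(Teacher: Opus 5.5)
\textbf{There is a genuine gap, and it cannot be closed for the code as written.} Your outline follows the same route as the paper: a stage-by-stage induction in which pre-existing conflicts are handled by Lemma~\ref{lem:conflictFixed} and newly created ones by Lemma~\ref{lem:conflictsAdded}. But the step you defer as the ``main obstacle'' is where the argument breaks, and the justification you sketch for it is false. You argue that since $O(\cdot)$ changes only when an entry is dequeued, a vertex that flips repeatedly is enqueued on every relevant flip. That holds for unmarked-to-marked, because \texttt{raise} enqueues $(v,+)$ unconditionally. It fails for marked-to-unmarked. \texttt{lower} enqueues $(v,-)$ only if $O(v)$, and if $v$ was marked by a \texttt{raise} whose entry $(v,+)$ has not yet been dequeued, $O(v)$ is still false. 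In that window a later neighbor $x$ can be raised with \texttt{findReplacement} returning $v$, so $l(x)=v$. A subsequent \texttt{lower} of $v$ then creates the too-early conflict $(v,x)$ with no $(v,-)$ in $Q$ or $A$. Invoking Lemma~\ref{lem:conflictsAdded}, as the paper's proof does, does not help: its proof simply asserts that $v$ being marked before the lower call makes $O(v)$ true, which is exactly what fails here.

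Here is a concrete counterexample. Let $\pi$ be the identity on vertices $1,\dots,9$, with old edges $\{4,8\},\{5,7\},\{6,9\},\{7,8\},\{8,9\}$; the old LFMIS is $\{1,\dots,6\}$, with $l(7)=5$, $l(8)=4$, $l(9)=6$. Insert $\{1,4\},\{2,5\},\{3,6\}$ and use the sequential policy ($A$ is the minimum entry of $Q$).
\begin{enumerate}
\item The initial lowers give $Q=\{(4,-),(5,-),(6,-)\}$.
\item Dequeuing $(4,-)$ raises $8$. Its only lower out-neighbor is $7$ (as $(4,8)<_\pi(5,7)$), which is unmarked, so $8$ marks and $(8,+)$ is enqueued while $O(8)$ stays false.
\item Dequeuing $(5,-)$ raises and marks $7$.
\item Dequeuing $(6,-)$ raises $9$. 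Now $8\in N^l(9)$ and $l(8)=8$, so $l(9)\gets 8$.
\item Dequeuing $(7,+)$ lowers $8$ to $7$ and enqueues nothing.
\end{enumerate}
At this point $l(8)=7<8=l(9)$ is a too-early conflict while $Q=\{(8,+)\}$, so the invariant fails. Dequeuing $(8,+)$ then does nothing, and the algorithm halts with $9$ unmarked although both its neighbors are unmarked; the correct LFMIS is $\{1,2,3,7,9\}$.

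So the statement is false for the pseudocode as written, where $O$ changes only at lines 35 and 40 (the reading you adopt). It can be rescued by having \texttt{raise} also set $O(v)\gets\mathrm{true}$ when $v$ marks, or by having \texttt{lower} test whether $v$ was marked at the start of the stage. With that repair your case analysis can be completed, after two smaller fixes:
\begin{itemize}
\item In your target-side argument for \texttt{raise}, a marked $y$ with $y<_\pi$ the old $l(x)$ is an in-neighbor, not in $N^l(x)$. This is harmless, since that conflict pre-existed and falls under your first case.
\item At the checkpoint between the two stages, a witness $(u,-)\in A$ has left $Q$ but is not yet processed. Your first case should therefore read ``in $Q$ or pending in $A$''.
\end{itemize}
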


\begin{proof}

    At the beginning of propagate, the queue is empty, and only later batch endpoints can have conflicts. Running lower on the batch insertions will resolve all too-late conflicts involving inserted edges via Lemma \ref{lem:conflictFixed}.  By Lemma \ref{lem:conflictsAdded}, lowering (some of) the batch endpoints will correctly add the subset of these batch endpoints that cause conflicts to the queue.

    At this point in time, since edge deletion cannot create too-late conflicts, note that there are no more too-late conflicts in the graph. There are two (possibly overlapping) types of too early conflicts: those incident on an edge deletion (where the earlier endpoint was the old eliminator), and those adjacent to a batch insertion endpoint (where the batch insertion endpoint was the old eliminator). The latter type of conflict is reflected already in the queue. 

    Let $(u,v)$ be an edge deletion with a too early conflict. Necessarily $u$ was marked. Then this edge $(u,v)$ will join $L$, and so we will call raise on this vertex, fixing the conflict (by Lemma \ref{lem:conflictFixed} and adding $(v,+)$ to the queue if more conflicts are created (by Lemma \ref{lem:conflictsAdded}).

    Note that raising vertices cannot create too-early conflicts. All too-early conflicts caused by edge deletions were processed, and all of the other too-early conflicts were already present in the queue. Therefore, all too-early conflicts in the graph are present in the queue.
    
    Note that the only too-late conflicts that can be (currently) present are those caused by raising the later endpoints of deleted edges, which were added to the queue by Lemma \ref{lem:conflictsAdded}. Therefore, all too-late conflicts created were added to the queue, and so the queue is up to date with all conflicts.

    Now we will proceed inductively. 
    Consider a vertex $(u,+) \in A$ participating in the lowering. By Lemma \ref{lem:conflictFixed}, all too-late conflicts involving $u$ are fixed, and by Lemma \ref{lem:conflictsAdded}, all conflicts created by the lower calls are added to the queue. 
    Similarly, for a vertex $(u,-) \in A$ participating in the raising stage, by Lemma \ref{lem:conflictFixed}, all too-early conflicts involving $u$ are fixed, and by Lemma \ref{lem:conflictsAdded}, all conflicts created by the raise call are added to the queue. 
\end{proof}

\myparagraph{Proving Correctness} Knowing that our queue maintains the conflicts correctly, we can quickly conclude the correctness of our algorithm. 

\begin{lemma}
    Given a strategy that selects at least one vertex every round, our algorithm terminates.
\end{lemma}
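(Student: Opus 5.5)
We prove termination with a potential argument based on permutation order: every processed queue entry changes elimination times only of vertices later than it, and each vertex's elimination time settles once everything earlier has settled. The induction is on $\pi$-position. Claim: for every $k$, after finitely many rounds the elimination times $l(\pi(1)),\dots,l(\pi(k))$ are final and no entry $(\pi(j),\pm)$ with $j\le k$ is enqueued again.

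\emph{Base case.} The vertex $\pi(1)$ can never be lowered, since lowering needs an earlier marked neighbor. It can never be raised past itself, since $l(u)\le u$ always holds. Hence $l(\pi(1))=\pi(1)$ throughout. Vertex $\pi(1)$ enters the queue only through a raise or lower call, and both calls target a strictly later vertex (we always have $u<_\pi v$). So $(\pi(1),\pm)$ is enqueued only through the initial edge-update stage, at most once, and it leaves the queue after it is selected once.

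\emph{Inductive step.} Assume the first $k-1$ vertices are stable from some round $R$ on. After $R$, the only calls that can modify $v=\pi(k)$ are lower $v$ to $u$ and raise $v$ from $u$ with $u<_\pi v$. Both are triggered by processing $(u,\pm)$, and no such entry is enqueued after round $R$. Only finitely many entries sit in $Q$ at round $R$, and each selected one is removed. Since every round selects at least one vertex, all these entries are processed within finitely many rounds, say by round $R'$. After $R'$, nothing touches $l(v)$ again, so $l(v)$ is fixed. Vertex $v$ is enqueued only when its own $l(v)$ changes, via line 6 of lower or line 13 of raise. So $v$ gets no new queue entries after $R'$, and any entry it already has is consumed in finitely many further rounds.

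\emph{Conclusion.} Taking $k=n$, the queue eventually receives no new entries and is drained. The while loop therefore exits after finitely many rounds, and each round does finite work.

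\emph{Main obstacle.} The hard step is justifying that entries are enqueued only when the vertex's own elimination time changes. The lower routine enqueues $v$ only if $l(v)=u$ after the writeMin, and raise enqueues $v$ only when it sets $l(v)=v$. Both therefore require an actual modification of $l(v)$. I also need to check the case where concurrent lower calls on the same vertex $v$ in one round produce several enqueues of $v$. These are harmless: they are finitely many per round, and they fall within the rounds before $R'$.
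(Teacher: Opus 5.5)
Your key fact is correct: processing an entry for $u$ only triggers \texttt{lower}/\texttt{raise} calls, and hence enqueues, for vertices later than $u$. This is the same monotonicity the paper relies on.

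\textbf{The gap.} The inductive step contains a circular inference: ``Since every round selects at least one vertex, all these entries are processed within finitely many rounds.'' The same problem recurs in ``any entry it already has is consumed in finitely many further rounds.'' The hypothesis only guarantees that \emph{some} entry is selected each round. It does not guarantee that the particular entries $(u,\pm)$ with $u<_\pi \pi(k)$ sitting in $Q$ at round $R$ are ever selected. A legal strategy such as ``always select only the $\pi$-latest entry of $Q$'' could leave them pending for as long as later vertices keep generating new entries. At stage $k$ you have not shown that later vertices ever stop doing so, and that is essentially the lemma you are proving.

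\textbf{The repair.} It is short, and you do not need these entries to be processed at all. After round $R$, no entry for $\pi(1),\dots,\pi(k-1)$ is ever created, and every processed entry is removed. So the number of processing events of such entries after $R$ is at most the number of such entries in $Q$ at round $R$. Let $R'$ be the round of the last such event, or $R'=R$ if there is none. After $R'$ no call targets $\pi(k)$, so $l(\pi(k))$ is frozen and $\pi(k)$ receives no new entry. Your final step is then the one place where the selection hypothesis is legitimately used: once no new entries arrive, each round removes at least one, so $Q$ drains. Equivalently, you can bound the total number $c_k$ of entries ever created for $\pi(k)$ by $O(1)+\sum_{j<k}c_j$; the number of rounds is then at most $\sum_k c_k$.

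\textbf{A smaller point.} Your last paragraph claims that \texttt{lower} enqueues $v$ only after an actual change of $l(v)$. This is not right: the test $l(v)=u$ on line 4 also passes when $l(v)$ already equaled $u$. That happens, e.g., if $(u,+)$ is processed a second time while $(v,-)$ is still pending, so $O(v)$ is still true. You do not need the claim; it suffices that $v$ is enqueued only inside a \texttt{lower}/\texttt{raise} call targeting $v$, and such calls arise only from processing entries of earlier vertices.

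\textbf{Comparison with the paper.} The paper avoids the scheduling issue with a single potential: list the queue contents in $\pi$-order and compare lexicographically. In any round, the selected entries of the $\pi$-earliest selected vertex are removed, and every newly created entry belongs to a strictly later vertex. So the contents strictly increase in a finite order, whatever subset $A$ is chosen. Your repaired induction is an unrolled form of the same argument. It additionally shows that each vertex individually stabilizes, at the cost of the bookkeeping above.
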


\begin{proof}
    Consider the lexicographic ordering of all possible queue contents (writing out the contents from early to late vertex). Note that processing a vertex $v$ increases the lexicographic order (of the entire contents of the queue). Because processing a vertex removes it from the queue, our lexicographic order gets later over time, and so we eventually terminate. 

\end{proof}

\begin{lemma}
    Our algorithm is correct (finishes with an \rlfmis{} on the new graph), for any policy for choosing $S$ that chooses at least one vertex each while loop iteration.
\end{lemma}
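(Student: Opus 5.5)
The proof combines three results already established: termination, the queue invariant, and the characterization of the LFMIS as a conflict-free marking.

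First, by the preceding termination lemma, any policy that selects at least one vertex per while-loop iteration makes the loop exit after finitely many iterations. At exit we have $Q=\emptyset$.

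Second, we use the invariant lemma. At the beginning of each while-loop iteration, and between each lowering and raising stage, every too-early conflict $(u,x)$ has $(u,-)\in Q$ and every too-late conflict $(u,x)$ has $(u,+)\in Q$. The loop exits only when the guard finds $Q=\emptyset$, and the guard is checked at the beginning of an iteration, which is one of the points where the invariant holds. Hence at termination there are no too-early and no too-late conflicts. One point needs care. The invariant is stated with respect to conflicts on the current graph $G_{new}$, with inserted edges present and deleted edges absent. I would confirm that the initial insert/delete phase (lines 25--29) is exactly the base case of that lemma. I would also confirm that the orientation lemmas guarantee the sets $N^h(u)$ used in Lemma~\ref{lem:conflictFixed} are correct at every stage.

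Third, I would show that a conflict-free elimination-time assignment equals the LFMIS of $G_{new}$ under $\pi$. This is the verification fact cited from Section~\ref{sec:lfmisVerify}. I would prove it directly by induction on $\pi$. The invariant $l(u)\le_\pi u$ is maintained: lowering sets $l(v)$ to some $u<_\pi v$, and raising sets it to $\min R\le_\pi v$ or to $v$ itself. Suppose every vertex earlier than $x$ is correctly marked, meaning $l(y)=y$ iff $y\in M(G_{new})$. We then compare $x$ with its earliest marked earlier neighbor $u$.
\begin{itemize}
\item If $u$ exists and $l(x)>_\pi u$, then $(u,x)$ is a too-late conflict.
\item If $l(x)=w<_\pi x$ but $w$ is unmarked ($l(w)<w$), then $(w,x)$ is a too-early conflict.
\item In the remaining case $l(x)$ is the earliest marked earlier neighbor, or $l(x)=x$ when no such neighbor exists.
\end{itemize}
The remaining case needs one more check: if $l(x)=w$ is marked but later than $u$, then $(u,x)$ is again too-late. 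So $l(x)$ is exactly the correct elimination time, and $x$ is marked iff it is in $M(G_{new})$.

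The main obstacle is the subtlety in the third step: the conflict definitions only constrain $l(x)$ relative to marked neighbors and to the vertex named by $l(x)$. The induction must therefore make sure that "$l(x)$ names a marked earlier neighbor" really holds, including that $l(x)$ is a neighbor at all (after deletions it might not be). I would handle this with the deletion phase. Lines 28--29 raise every $v$ whose eliminator edge was deleted. Later assignments only set $l(v)$ via lower along an edge, or via findReplacement from $N^l(v)$, so $l(x)$ always refers to an actual neighbor in $G_{new}$ or to $x$ itself. With that auxiliary invariant established, the three pieces combine to show the output is the \rlfmis{} of the new graph for any selection policy.
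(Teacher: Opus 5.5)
Your proposal is correct and follows the paper's own argument: termination forces $Q=\emptyset$, the queue invariant then rules out all too-early and too-late conflicts, and a conflict-free assignment is the LFMIS of $G_{new}$ (the paper gets this last step from the verification lemma in Section~\ref{sec:lfmisVerify}). Your explicit check that $l(x)$ always names either $x$ itself or a genuine neighbor in $G_{new}$ is a worthwhile addition. The paper only asserts in passing that conflicts are ``necessarily also edges in the graph.'' Without that check, an unmarked $x$ whose $l(x)$ names a marked non-neighbor would never register as a conflict.
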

\begin{proof}
Because we terminate, we end with an empty queue. Therefore, no vertex has a conflict, and so we have the \rlfmis{}. 
\end{proof}

\section{Efficiency \label{sec:efficiency}}
\begin{table}
\centering
\begin{tabular}{c|c}
Symbol & Meaning \\
\hline
     $Y_j$ & $\frac{n \log n}{j}$ \\
    $Z_j$ & $\min(\Delta^2,(\frac{n \log n}{j})^2)$ \\
    $G_i$ & remaining graph immediately before \rlfmis{} step $i$ \\
    $\Delta$ & maximum degree in the graph (union of old and new graph) \\
    $b$ & size of update batch \\
    \end{tabular}
    \caption{Notation used in Section \ref{sec:efficiency}.}
\end{table}

In this section, we will prove that our algorithm is efficient (has low work and depth). 

First, we describe how we choose the subset $A$ (the subset of the queue we process each iteration). We will separate the vertices into shells by permutation value, and handle the shells sequentially (but in parallel within a shell). We will choose our shells such that a vertex is unlikely to have a neighbor within a shell, and so in expectation a vertex is only processed a constant number of times (compared to $O(\log n)$ times). 

Then, we will show that the cost of our algorithm is bounded in expectation by $O(\sum_{v \in S} Z_v)$, where $Z_v = \min(\Delta^2,(\frac{n \log n}{v})^2)$.\footnote{Note that $\Delta$ is the maximum vertex degree in the union of the old and new graphs.}
We conclude by using a modified version of the batch influence set analysis to directly bound $O(\sum_{v \in S} Z_v)$ by $O(\log^2 n \log \Delta)$.

\subsection{Shells}

Suppose that we handle the algorithm in shells (by permutation value), where the $i^{th}$ shell has size $s_i=(1+\frac{1}{\log^2 n})^{i-1}$.
The first shell starts at permutation time $1$, and for $i \ge 2$ note that the $i^{th}$ shell starts at permutation rank $p_i = 1 + \sum_{j=1}^{i-1} s_j = 1+\log^2 n(s_{i}-1)$. These shells are mainly for work analysis, but do affect the depth (for the worse). For notational convenience, let $Y_j = \frac{n \log n}{j}$ and $Z_j = \min(\Delta^2,(\frac{n \log n}{j})^2)$.

\begin{lemma}
    Observe that there are $O(\log^3 n)$ shells, and that the largest shell has size $O(\frac{n}{\log^2 n})$. \label{lem:largestShell}
\end{lemma}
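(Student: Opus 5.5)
The plan is a direct calculation from the definitions $s_i=(1+\frac{1}{\log^2 n})^{i-1}$ and $p_i = 1+\log^2 n\,(s_i-1)$. Write $r = 1+\frac{1}{\log^2 n}$. First I would verify the closed form for $p_i$ by the geometric sum
\[ \sum_{j=1}^{i-1} r^{j-1} = \frac{r^{i-1}-1}{r-1} = \log^2 n\,(s_i - 1). \]
The last shell $k$ is the first one with $p_{k+1} > n$, i.e.\ the first $k$ with $\log^2 n\,(r^{k}-1) \ge n$. Hence the number of shells is the least $k$ with $r^{k} \ge 1+\frac{n}{\log^2 n}$, namely
\[ k = \left\lceil \frac{\ln(1+n/\log^2 n)}{\ln r}\right\rceil. \]

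Next I bound the denominator using $\ln(1+x)\ge x/2$ for $x\in(0,1]$. This gives $\ln r \ge \frac{1}{2\log^2 n}$, so
\[ k \le 2\log^2 n\cdot \ln(1+n) + 1 = O(\log^3 n). \]
This settles the first claim.

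For the largest shell, sizes increase with $i$, so the largest nominal shell is $s_k$. By minimality of $k$, we have $\log^2 n\,(r^{k-1}-1) < n$, i.e.\ $s_k < 1+\frac{n}{\log^2 n}$. Therefore $s_k = O(\frac{n}{\log^2 n})$. If the final shell is truncated at $n$, its actual size is only smaller, so the bound still holds. Shell sizes also need not be integers; rounding the boundaries $p_i$ to integers changes each size by at most one, which does not affect the asymptotics.

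I do not expect a real obstacle. The only points needing care are the constant in $\ln(1+x)\ge x/2$, which requires $\log^2 n\ge1$ (true for $n\ge 2$), and stating that shells are cut off at permutation rank $n$.
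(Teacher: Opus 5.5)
Your proof is correct and follows essentially the same route as the paper: bound $s_k$ via the condition $p_k \le n$, and bound $k$ by taking logarithms and using $\ln(1+x)\ge x/2$. One small slip: the condition $p_{k+1}>n$ is actually $\log^2 n\,(r^k-1) > n-1$, not $\ge n$. Since consecutive values of $\log^2 n\,(r^k-1)$ differ by $r^k \ge 1$, this shifts $k$ by at most one and does not affect either bound.
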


\begin{proof}
    
Let $k$ be the number of shells, i.e., the largest $i$ such that $p_i \le n$.

\myparagraph{Largest shell size}
From $p_i \le n$, we have $1+\log^2 n(s_{i}-1) \le n$, so $s_{i} \le \frac{n-1}{\log^2 n}+1 = O(\frac{n}{\log^2 n})$.

\myparagraph{Number of shells}
Since $p_i = ((1+\frac{1}{\log^2 n})^{i-1} - 1) \log^2 n + 1$, the condition $p_k \le n$ requires $(1+\frac{1}{\log^2 n})^{k-1} \le \frac{n-1}{\log^2 n}+1 \le \frac{2n}{\log^2 n}$. Taking logarithms,
\[
  (k-1) \cdot \ln\!\left(1+\tfrac{1}{\log^2 n}\right) \le \ln\!\left(\tfrac{2n}{\log^2 n}\right) \le \ln n.
\]
Using $\ln(1+x) \ge x/2$ for $x \in (0,1)$, we have $\ln(1+\frac{1}{\log^2 n}) \ge \frac{1}{2\log^2 n}$, so
\[
  (k-1) \le \frac{\ln n}{1/(2\log^2 n)} = 2\log^2 n \cdot \ln n = O(\log^3 n)
\]
\[
    k=O(\log^3 n) \qedhere
\]

\end{proof}

\begin{lemma}
Let $v \in V$ with $p_i \le n/2$. The probability that $v$ has any (in or out) neighbor inside its shell in $G_{p_i}$ is at most $O(\frac{1}{\log n})$. \label{lem:lonelyShellLow}
\end{lemma}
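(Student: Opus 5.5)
We condition on the prefix $\pi[1:p_i)$ and use the high-probability degree bound of Lemma~\ref{lem:degreeBound} for $G_{p_i}$. The shell has $s_i$ slots, and the vertices placed in them are a uniformly random subset of the $n-p_i+1$ unchosen vertices. This randomness is independent of the prefix, and hence independent of the structure of $G_{p_i}$.

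Concretely, fix $v$ and condition on $v$ landing in shell $i$ and on $\pi[1:p_i)$. If $v$ is dead in $G_{p_i}$, it has no neighbors there and there is nothing to prove. Otherwise, by Lemma~\ref{lem:degreeBound}, $\deg_{G_{p_i}}(v) = O(Y_{p_i}) = O(\frac{n\log n}{p_i})$ with high probability; the failure event contributes only $1/n^c$. Conditioned on the prefix, each of $v$'s alive neighbors $w$ lands in one of the remaining $s_i-1$ shell slots with probability at most $\frac{s_i}{n-p_i} \le \frac{2s_i}{n}$, using $p_i\le n/2$. A union bound over the neighbors then gives probability at most $O(\frac{n\log n}{p_i}\cdot\frac{s_i}{n}) = O(\frac{s_i\log n}{p_i})$.

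The remaining step is the shell arithmetic. For $i\ge 2$ we have $p_i = 1+\log^2 n\,(s_i-1)$, so we want $s_i/p_i = O(1/\log^2 n)$, which yields $O(\frac{1}{\log n})$. When $s_i \ge 2$, we get $s_i-1\ge s_i/2$, so $p_i \ge \frac{1}{2}s_i\log^2 n$ and the ratio holds. The main obstacle is the early shells, where $s_i<2$. There $s_i-1$ is tiny and $p_i$ may be only slightly larger than $1$, so the ratio is not small. However, the shell then contains at most one vertex, since a shell's size is effectively $\lfloor s_i\rfloor$ or similar integer rounding. It is then impossible for $v$ to have a neighbor inside its own shell, and the probability is $0$. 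I would handle this by stating that shells with fewer than two integer positions trivially satisfy the claim, and that once a shell contains at least two positions we have $s_i \ge 1$ with $p_i \ge \log^2 n\,(s_i-1)$. To make the early-versus-late split clean, I would note that all cases with $p_i \le \log^2 n$ have shells of size $1$.

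Finally, the definition of the shells and the conditioning must be compatible. The slots of shell $i$ are a fixed range of positions, and given the prefix, the vertices placed in those slots form a uniformly random subset, so the per-neighbor probability $\frac{s_i-1}{n-p_i}$ is exact. The whp degree bound is applied to the prefix alone and does not depend on which vertices fill the shell. Combining the $O(1/\log n)$ union bound with the $n^{-c}$ failure probability gives the lemma.
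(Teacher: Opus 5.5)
\textbf{Assessment.} Your main argument is the paper's own proof. You condition on $\pi[1:p_i)$ and bound $v$'s degree in $G_{p_i}$ by $O(\frac{n\log n}{p_i})$ whp via Lemma~\ref{lem:degreeBound}. Since the unchosen vertices fill the remaining ranks uniformly, you charge $\frac{s_i}{n-p_i}$ per surviving neighbor, take a union bound, and reduce to $s_i/p_i=O(1/\log^2 n)$. You are right that this last reduction needs $s_i$ bounded away from $1$. The paper's step $\frac{1}{1+\log^2 n(s_i-1)}\le\frac{2}{s_i\log^2 n}$ silently requires $s_i\ge 2-2/\log^2 n$, so the paper's proof does not cover the early shells either.

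\textbf{The gap is in your patch for $s_i<2$.} The claim that such a shell contains at most one rank is not compatible with shell $i$ starting at rank $p_i=1+\log^2 n(s_i-1)$. There are roughly $0.69\log^2 n$ shells with $s_i<2$, and together they span about $\log^2 n$ ranks, so a constant fraction of them contain two ranks. Taken literally, shell $2$ is $[2,\,3+1/\log^2 n)$, i.e.\ ranks $2$ and $3$.
\begin{itemize}
\item For a two-rank shell your bound is only $O(\log n/p_i)$, which is not $O(1/\log n)$ once $p_i=o(\log^2 n)$.
\item For shell $2$ the statement itself fails. In a typical $G(n,\frac12)$, conditioned on $v$ being at rank $2$, with probability about $\frac18$ both $v$ and $\pi(3)$ avoid $N(\pi(1))$ and $v\sim\pi(3)$.
\end{itemize}

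\textbf{What the fix requires.} The patch has to be an explicit redefinition, e.g.\ shell $i$ gets exactly $\lfloor s_i\rfloor$ ranks. But then shell $i$ starts at $1+\sum_{j<i}\lfloor s_j\rfloor$, which is smaller than $1+\log^2 n(s_i-1)$. So you cannot reuse that closed form in the $s_i\ge 2$ case. Your sentence ``$s_i\ge 1$ with $p_i\ge\log^2 n(s_i-1)$'' would not give the ratio anyway. You need to re-derive $p_i=\Omega(s_i\log^2 n)$, for example from these three facts:
\begin{itemize}
\item $\sum_{j<i}\lfloor s_j\rfloor\ge\log^2 n(s_i-1)-(i-1)$;
\item $i-1\le(1+o(1))\log^2 n\ln s_i$;
\item $s-1-\ln s\ge s/7$ for $s\ge 2$.
\end{itemize}
With that bound in place, your union bound goes through.
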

\begin{proof}
Let $v$ be in shell $i$, which starts at permutation rank $p_i$ and has size $s_i$. We know that the number of surviving neighbors of $v$ at rank $p_i$ is at most $d \le c \frac{n \log n}{p_i}$ whp for some constant $c$ by Lemma \ref{lem:degreeBound}. Each surviving neighbor's rank is uniform over the $n - p_i$ remaining positions, and the shell (excluding $v$) occupies at most $s_i - 1$ of these. By a union bound,
\[
    \Pr[\text{$v$ has a neighbor in its shell}] \le d \cdot \frac{s_i}{n - p_i} \le \frac{cn\log n}{p_i} \cdot \frac{s_i}{n - p_i}.
\]
Substituting $p_i = 1+\log^2 n (s_{i}-1)$ and simplifying,
\[
    \frac{cn\log n}{1+\log^2 n (s_{i}-1)} \cdot \frac{s_i}{n - p_i}
    \le \frac{2cn\log n}{s_{i} \log^2 n} \cdot \frac{s_i}{n - p_i}
    = \frac{2cn}{(n - p_i) \log n},
\]
For $p_i \le n/2$, we have $n - p_i \ge n/2$, giving a bound of $\frac{4c}{\log n} = O(\frac{1}{\log n})$.
 \end{proof}

 \begin{lemma} Let $v \in V$ where $\pi(v) \ge n/2$. Then the probability that $v$ has any (in or out) neighbor in its shell in $G_{\pi(v)}$ is at most $O(\frac{1}{\log n})$.  \label{lem:lonelyShellHigh}
 \end{lemma}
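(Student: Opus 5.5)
The plan is to bound the number of surviving neighbors of $v$ at time $\pi(v)$ and then charge each one against the size of $v$'s shell, using the fact that all shells are small. The high-rank case is easier than Lemma~\ref{lem:lonelyShellLow}: the degree bound becomes $O(\log n)$, while each shell has size $O(n/\log^2 n)$.

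\textbf{Step 1: degree of $v$.} Apply Lemma~\ref{lem:degreeBound} with $i = \pi(v) \ge n/2$. Whp the remaining graph $G_{\pi(v)}$ has maximum degree at most $c\frac{n\log n}{n/2} = 2c\log n$. So, except on an event of probability $n^{-c'}$ that we absorb into the bound, $v$ has at most $d = O(\log n)$ surviving neighbors in $G_{\pi(v)}$.

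\textbf{Step 2: each neighbor rarely lands in the shell.} Let $v$ lie in shell $i$, which spans ranks $[p_i, p_i+s_i)$. The subtlety is that some of $v$'s shell-mates precede $v$, so they are already removed or processed by time $\pi(v)$. The claim concerns neighbors alive in $G_{\pi(v)}$, and these necessarily have rank greater than $\pi(v)$. I would therefore condition on the prefix $\pi[1:\pi(v)]$, which fixes $G_{\pi(v)}$ and hence the set of $d$ surviving neighbors. Conditioned on this, the ranks of the unplaced vertices are a uniformly random arrangement of the $n-\pi(v)$ remaining positions, and at most $s_i - 1$ of those positions lie in $v$'s shell. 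Each surviving neighbor therefore lands in the shell with probability at most $\frac{s_i}{n-\pi(v)}$.

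\textbf{Step 3: union bound, and the obstacle.} The main obstacle is that $n-\pi(v)$ can be tiny when $v$ is near the end, so the ratio $\frac{s_i}{n-\pi(v)}$ alone does not give the bound. I would handle this by instead bounding the count of neighbors directly. At most $\min(d,\; n-\pi(v))$ neighbors are alive, and the shell holds at most $s_i$ positions.

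Summing over the surviving neighbors gives
\[
\Pr[\text{$v$ has a neighbor in its shell}] \le d\cdot \frac{s_i}{n-\pi(v)} .
\]
If $n-\pi(v) \ge n/\log n$, this is at most $O(\log n)\cdot O(n/\log^2 n)\cdot \frac{\log n}{n}$. That is only $O(1)$, which is too weak, so I would sharpen the shell size in this range. If $p_i \le n$ then $s_i \approx p_i/\log^2 n \le n/\log^2 n$. The quantity to control is $d\, s_i/(n-\pi(v)) = O(\log n)\cdot n/(\log^2 n\,(n-\pi(v)))$, which is $O(1/\log n)$ whenever $n-\pi(v) = \Omega(n)$.

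For $v$ in the last $o(n)$ positions, I would use the degree lemma applied more carefully at $i = \pi(v)$ with the remaining graph restricted to the $n-\pi(v)$ unplaced vertices. The idea is to invoke the BFS argument in its stronger form: a vertex alive at step $i$ has fewer than $\frac{c\,(n-i)\log n}{i}$ alive neighbors among later vertices. With that bound the product becomes $O(\frac{\log n}{n})\cdot \frac{n}{\log^2 n} = O(1/\log n)$, uniformly in $\pi(v)$.

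Adding the whp failure probability completes the bound. Verifying that stronger degree form, or otherwise controlling the tail positions, is the step I expect to need the most care.
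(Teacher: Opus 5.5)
\textbf{There is a genuine gap.} Your argument works when $n-\pi(v)=\Omega(n)$, but the range $n-\pi(v)=o(n)$ is not established, and the tool you propose for it is false.

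By conditioning on the prefix through $\pi(v)$, you look only at neighbors placed after $v$, so the denominator becomes $n-\pi(v)$. If $v$ lies in the last shell, every later position is in $v$'s shell. The conditional probability is then just the probability that $v$ has any surviving neighbor at all.

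The ``stronger form'' of Lemma~\ref{lem:degreeBound} you invoke says that whp a vertex alive at step $i$ has fewer than $c(n-i)\log n/i$ alive neighbors. Once $(n-i)\log n<i/c$, this asserts degree $0$ whp, which fails already for a perfect matching. Take $n-i=n/\log^2 n$. A vertex is alive exactly when it and its partner both lie in the unprocessed suffix, and $\Theta(n/\log^4 n)$ such pairs survive in expectation, each vertex having an alive neighbor. Even for a single fixed $v$ with $\pi(v)=t$, the probability that its partner survives is about $(n-t)/n$, not $n^{-c}$. So your final step does not go through as written.

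The missing idea is to condition earlier. The paper conditions on $\pi[1:n/2)$. By Lemma~\ref{lem:degreeBound}, $v$ has $O(\log n)$ neighbors in $G_{n/2}$ whp. Since $G_{\pi(v)}\subseteq G_{n/2}$, every neighbor of $v$ in $G_{\pi(v)}$ is among them. Given the prefix, $v$ and these neighbors are placed uniformly among at least $n/2$ remaining positions. So each neighbor lands \emph{anywhere} in $v$'s shell with probability $O(s_i/n)=O(1/\log^2 n)$ by Lemma~\ref{lem:largestShell}. A union bound then gives $O(1/\log n)$, uniformly in $\pi(v)$. Bounding this superset event, without caring whether the neighbor comes before or after $v$, keeps the denominator $\Omega(n)$.

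Alternatively, you can keep your conditioning on $\pi(v)=t$ and replace the whp degree bound by an expectation bound. Each step $j<t$ picks one of $v$'s $D$ alive neighbors with probability at least $D/(n-j)$, which eliminates $v$. Hence $\Pr[\deg_{G_t}(v)\ge D\mid \pi(v)=t]\le ((n-t+1)/n)^D$, and so $E[\deg_{G_t}(v)\mid\pi(v)=t]=O((n-t+1)/n)$. Your conditional union bound is linear in the degree, so this yields $O(s_i/n)=O(1/\log^2 n)$ for every $t$.
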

\begin{proof}
Note that $v$ has degree $O(\log n)$ whp in $G_{n/2}$, and that $G_{\pi(v)} \subseteq G_{n/2}$. By Lemma \ref{lem:largestShell}, each shell contains at most $O(\frac{n}{\log^2 n})$ vertices. Since the neighbors of $v$ in $G_{n/2}$ are distributed uniformly at random among $\pi[n/2,n]$, we have that each neighbor has a $O(\frac{1}{\log^2 n})$ chance of appearing in $v$'s shell, so by a union bound, there is a $O(\frac{1}{\log n})$ chance of any neighbor appearing in $v$'s shell.
\end{proof}

\begin{lemma}
    Given a set of ready vertices $Q$, subselecting the elements $X$ in a particular shell can be done in $O(|X|)$ work and $O(\log n)$ depth.
\end{lemma}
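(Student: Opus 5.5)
The plan is to exploit the data-structure layout described in the footnote to the main algorithm: the queue $Q$ is not a flat set but an array of bags, one bag per shell, indexed by shell number. With this layout, ``subselecting the elements $X$ in a particular shell'' amounts to locating the right bag and extracting its contents. It is not a filter over all of $Q$, which would cost $O(|Q|)$ work and could exceed $O(|X|)$.

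\textbf{Indexing.} First I will show that the shell index of a vertex $v$ can be computed in $O(1)$ work from $\pi(v)$. Since $p_i = 1+\log^2 n\,\bigl((1+\tfrac{1}{\log^2 n})^{i-1}-1\bigr)$ has a closed form, the shell of $v$ is
\[
i=\Bigl\lfloor \log_{1+1/\log^2 n}\!\bigl(1+\tfrac{\pi(v)-1}{\log^2 n}\bigr)\Bigr\rfloor+1 .
\]
Alternatively, I can precompute the array of boundaries $p_1,\dots,p_k$ once, which has $k=O(\log^3 n)$ entries by Lemma~\ref{lem:largestShell}, together with a lookup table from permutation rank to shell. The lookup table costs $O(n)$ preprocessing, done once and not per batch. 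Either way, every enqueue of $(v,\pm)$ can place the element directly into the bag for its shell.

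\textbf{Insertions.} Enqueues happen in batches, through the lower/raise calls in a round. I will group the new elements by shell index, using a semisort on the shell key in $O(\text{batch})$ expected work and $O(\log n)$ depth. Each group is then inserted into its bag using the batch-parallel bag of~\cite{blelloch26faster}, in linear work and $O(\log b)$ depth. This insertion cost is charged to the enqueue operations themselves, so it does not count against the subselection.

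\textbf{Subselection.} To subselect shell $i$, I will index the array at position $i$ in $O(1)$. I then read the bag's size and output all of its elements, which the bag supports in $O(|X|)$ work and $O(\log |X|)\subseteq O(\log n)$ depth. Finally I reset the bag to empty, either by deleting the elements given their pointers or by replacing it with a fresh bag. The returned set is exactly $X$.

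\textbf{Main obstacle.} The main subtlety is making sure nothing in this process scans unrelated parts of $Q$. Finding the current shell to process must not cost $\Theta(k)$ per selection if that is not amortized. I would handle this by processing shells in increasing order with a pointer that only ever advances. This works because, within a shell, lowering and raising only enqueue vertices of later $\pi$ value, so no element is ever enqueued into an earlier shell. Under that invariant, skipping empty shells is charged to the $O(\log^3 n)$ total number of shells, not to an individual subselection.
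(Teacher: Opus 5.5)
Your proposal is correct and matches the paper's proof: $Q$ is stored as an array of $O(\log^3 n)$ bags indexed by shell, so extracting shell $i$ amounts to outputting (and emptying) that one bag in $O(|X|)$ work and $O(\log n)$ depth. Your advancing-pointer discussion goes beyond what the lemma asks; the paper handles finding nonempty shells separately with $O(\log^2 n)$ bit-words, costing $O(T+\log^2 n)$ per batch rather than the $O(\log^3 n)$ your pointer scan can incur, which matters for keeping the $O(b\log^2 n\log\Delta)$ work bound when $\Delta$ is small.
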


\begin{proof}
We implement our set of ready vertices $Q$ as an array of $O(\log^3 n)$ bags, one per shell. Because shells are by permutation value, we can easily determine which bag a vertex belongs in. To extract a particular shell, we output all of the elements in that bag.
\end{proof}

Note that to store which shells are nonempty, we store $O(\log^2 n)$ words, each with $O(\log n)$ bits. We can thus iterate through $T$ nonempty shells in $O(T + \log^2 n)$ work and depth using standard table lookup.

\subsection{Depth}

Suppose we are processing shell $i$. Note that all earlier shells are finalized (their marks do not change), and vertices in shells later than $i$ do not affect the behavior of the queue when processing shell $i$. We will process all vertices in $Q$ in shell $i$ every round. Consider the set of vertices in the shell $V_i$, and the set of marked vertices in shell $i$ at the beginning of round $r$, $M_r$. Note that a while loop contains two rounds, one round of lowering followed by one round of raising. 

In this subsection, we use $M$ to denote whether or not a vertex is marked, rather than $l(u)=u$ as in Section \ref{sec:algorithm}. For the purposes of this subsection, we count raising and lowering as separate rounds (i.e. two rounds per while loop iteration). We consider round 1 to be the lowering stage of the first while loop iteration: marks and unmarks directly from the batch endpoints occur before the first round.

\begin{definition} We say that a vertex $v$ {\bf finalizes} at round $i$ if for all $x \ge i$, it holds that $v \in M_i \iff v \in M_{x}$, $v \notin M_i \iff v \notin M_x$ ($v$ does not change during round $i$ or any future round), and if $v \in M_{i-1} \Delta M_i$ ($v$ changed during round $i-1$). This means that the vertex stops switching between in and out of the marked set. Similarly, we call a set finalized if all vertices in the set are finalized. We say that a vertex finalizes in round 1 if it never changes mark during the while loop.
\end{definition} 

\begin{lemma}
    Suppose that $v$ ends marked and finalizes in round $r \ge 4$. Then there exists earlier neighbor $u$ that finalizes in round $r-1$.  \label{lem:u_exists}
\end{lemma}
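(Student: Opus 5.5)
Since $v$ finalizes in round $r \ge 4$, its mark changed during round $r-1$, so it entered $M_r$ at that round. Vertices become marked only in a raising stage, and a raise call on $v$ is triggered only by processing some $(u,-)$ with $u$ an earlier neighbor and $l(v)=u$. The round-$1$ batch raises happen before round $1$, so they do not count. We take this $u$ as the candidate and must show that $u$ finalizes in round $r-1$.

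\textbf{Step 1: $u$ changed in round $r-2$.} For $(u,-)$ to be processed at round $r-1$, $u$ must have become unmarked in the previous lowering stage, round $r-2$, and been enqueued there. So $u \in M_{r-2}\setminus M_{r-1}$. This uses the fact that within a shell every queued vertex is processed in the very next round. Here $r\ge 4$ guarantees that round $r-2\ge 2$ is a genuine lowering round inside the while loop, rather than the pre-loop batch processing.

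\textbf{Step 2: $u$ stays unmarked from round $r-1$ on.} Suppose instead that $u$ later re-marks at some round $x\ge r-1$. Then $(u,+)$ is enqueued and processed in the next lowering stage. Since $u<_\pi v$ and $u$ is an earlier neighbor of $v$, that processing calls lower $v$ to $u$ whenever $l(v)>u$. This would unmark $v$ after round $r$, or, if $l(v)$ had already been lowered, $v$ would not be marked at all. Either way this contradicts that $v$ ends marked and finalizes at round $r$. The case where $u$ re-marks during round $r-1$ itself cannot occur, because round $r-1$ is a raising stage in which $u$ is being processed as $(u,-)$, and $u$ is not raised in that same stage.

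\textbf{Main obstacle.} The hardest part is making Step 2 airtight. We must show that $u$ cannot flip back and forth and return to unmarked without $v$ ever seeing it. Any re-marking of $u$ enqueues $(u,+)$ (by Lemma~\ref{lem:conflictsAdded}), and processing it lowers every neighbor in $N^h(u)$ whose elimination time exceeds $u$. Since $v$ is marked, $l(v)=v>u$, so $v$ would be lowered and hence would change mark after round $r$, contradicting finalization. With this established, $u$ is unmarked at every round $\ge r-1$ and changed at round $r-2$, which by definition means $u$ finalizes in round $r-1$.
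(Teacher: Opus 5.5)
Your proof is correct in substance but is organized differently from the paper's. The paper splits on $v$'s state at the start of round $r-2$. If $v$ was marked then and was lowered in round $r-2$, the paper takes $u$ to be the earlier marked neighbor that lowered it. If $v$ was already unmarked, it takes $u$ to be an earlier marked neighbor at the start of round $r-3$. In both cases it argues semantically that $u$ is marked at the start of round $r-2$ and must be unmarked by round $r-1$, since otherwise $u$ would block $v$ from marking.

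You instead read $u$ off the mechanism: $u=l(v)$ at the start of round $r-1$, the unique vertex whose $(u,-)$ entry triggers the raise of $v$. You then locate $u$'s change in round $r-2$ from when that entry could have been enqueued. This avoids the case split. Your Step 2 also explicitly rules out $u$ flipping back later: any re-marking of $u$ at a round $x\ge r-1$ enqueues $(u,+)$, whose processing lowers the marked $v\in N^h(u)$. The paper leaves that point implicit (``$u$ settles'', ``$u$ will end up unmarked'').

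The price is a heavier reliance on queue timing. Step 1 needs every queued entry in the shell to be handled in the very next stage of its type, which is the depth section's ``process all vertices in $Q$ every round.'' Under the literal pseudocode, $A$ is fixed once per while-iteration, so a $(u,-)$ enqueued in a lowering stage waits until the next iteration's raising stage. You should therefore state this as the round model being assumed, not as a fact about the code. The paper's proof uses the same timing implicitly.

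Two small repairs are needed. First, in Step 1, ``$u$ must have become unmarked in round $r-2$'' requires $u\in M_{r-2}$, which you do not justify. Use the fact that $(u,-)$ is enqueued only when $O(u)$ is true. In a lowering round other than round 1 (here $r-2\ge 3$), $O(u)$ true means $u$ was marked at the start of that round.

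Second, drop the aside that $u$ cannot re-mark in round $r-1$; it is false. Suppose $u$'s eliminator $w$ was itself lowered in round $r-2$. Then $(w,-)$ is processed in round $r-1$ alongside $(u,-)$, and it can raise $u$ with no replacement. The case is harmless because your general argument with $x=r-1$ already covers it: $(u,+)$ is then processed in round $r$ and lowers $v$, contradicting finalization.
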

\begin{proof}
Because $v$ finalizes in round $r$, and a vertex can only finalize as marked at the beginning of a lowering round, it follows that $r-1$ and $r-3$ are raising rounds, and that $r$ and $r-2$ are lowering rounds.

We note that it is either the case that $v$ was marked at the beginning of round $r-2$, but unmarked during round $r-2$ (Case 1), or that $v$ was unmarked at the beginning of round $r-2$ (Case 2), because if neither was the case, then $v$ would have finalized in round $r-2$ or earlier. 

\begin{enumerate}
        \item[Case 1:] $v$ was marked at beginning of round $r-2$, but unmarked during round $r-2$. Then at the beginning of round $r-2$, $v$ had an earlier marked neighbor $u$. Note that $u$ will end up unmarked (independence property), but finalizes no earlier than round $r-1$ because $u$ is marked at the beginning of $r-2$. Also note that $u$ finalizes no later than round $r-1$ because if $u$ was marked at the beginning of round $r-1$, then $u$ would prevent $v$ from marking, pushing back $v$'s finalization time. Therefore, $u$'s finalization time is $r-1$. 
        \item[Case 2:] At the beginning of round $r-2$, $v$ was not marked. Therefore, at the beginning of round $r-3$, $v$ has an earlier marked neighbor $u$. Note that $u$ settles by the beginning of round $r-1$ so that $v$ may finalize, but no earlier than round $r-1$ (as $u$ can only finalize at the beginning of a raising round), therefore $u$ finalizes in round $r-1$.  
    \end{enumerate}
\end{proof}

\begin{lemma} Suppose that $v$ finalizes at the beginning of round $r$. Then there exists an AE-path of length at least $r-4$ (in the new graph) that ends with $v$. \label{lem:exists_AE} 
\end{lemma}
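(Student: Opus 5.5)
We argue by strong induction on $r$, building the AE-path backwards from $v$ by repeatedly invoking Lemma~\ref{lem:u_exists} and a marked/unmarked analogue of it. For $r \le 4$ the claim is vacuous, since a path consisting of $v$ alone has length $0 \ge r-4$. For the inductive step we split on whether $v$ ends marked or unmarked in the new graph.

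\emph{Case $v$ ends marked.} Lemma~\ref{lem:u_exists} gives an earlier neighbor $u$ that finalizes in round $r-1$. From the case analysis in that lemma, $u$ is marked at the start of a raising round and later unmarks. So $u$ ends unmarked, and it was the eliminator blocking $v$ right until $v$ could mark. By induction, $u$ ends an AE-path $P$ of length at least $r-5$. Appending $v$ gives length at least $r-4$, provided the edge $(u,v)$ is a valid step of an AE-path: the path ends at an unmarked vertex, we extend it to a later marked neighbor, and $\pi$ stays increasing because $u <_\pi v$.

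\emph{Case $v$ ends unmarked.} Here $v$ finalizes at the start of a raising round and has final eliminator $w = l(v)$, its earliest marked neighbor in the new graph. We need the companion of Lemma~\ref{lem:u_exists}: if $v$ ends unmarked and finalizes in round $r$, then its final eliminator $w$ finalizes in round $r-1$ or later. The reason is that $v$'s mark status, and its elimination time, are last changed in response to $w$ becoming marked (a lower call) or to an earlier eliminator unmarking (a raise call whose replacement is $w$). In either situation the triggering vertex changed in round $r-1$. I plan to prove this by the same two-case analysis as Lemma~\ref{lem:u_exists}, looking at rounds $r-2$ and $r-3$. The edge $w \to v$ is exactly an eliminator--target pair, so we extend the path $P$ ending at $w$ given by induction, which has length at least $r-5$.

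Three points need checking, and I expect the main obstacle to be the second one.
\begin{enumerate}
\item The constructed path must alternate marked/unmarked with respect to the \emph{final} marks, and every marked$\to$unmarked step must be an eliminator--target pair in the final (new-graph) LFMIS. This works because all vertices on the path are finalized, so their marks equal the final ones, and the eliminator of a finalized unmarked vertex is its final $l$-value.
\item The unmarked$\to$marked steps are more delicate, which is the main obstacle. In Lemma~\ref{lem:u_exists}, $u$ is only some earlier neighbor of $v$ that was marked at some round. We must confirm that the AE-path definition only requires $u$ to be an earlier neighbor of the marked vertex $v$ in the new graph, with no eliminator condition on that edge. If the definition instead demands more, for example that $u$ be the latest-finalizing earlier neighbor, I would choose $u$ as the earlier neighbor of $v$ with maximum finalization round. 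I would then argue that this round is at least $r-1$, since otherwise $v$ could have marked earlier.
\item The off-by-constant slack of $4$ must cover the base rounds: we lose one round per step and need $r \ge 4$ to apply Lemma~\ref{lem:u_exists}.
\end{enumerate}
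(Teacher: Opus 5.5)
Your skeleton is exactly the paper's: induction on the finalization round, the trivial base case $r\le 4$, the marked case via Lemma~\ref{lem:u_exists}, and the unmarked case via the final eliminator $w$. The unmarked case does not close as written, though. You only claim that $w$ finalizes in round $r-1$ \emph{or later}, yet you use strong induction on $r$, whose hypothesis covers only vertices finalizing strictly before round $r$.

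The paper bridges this by asserting that $w$ finalizes no later than round $r-1$ ``for $v$ to finalize in round $r$.'' You should not try to prove that, because it is not forced. Here $v$ goes from marked to unmarked in the lowering round $r-1$. If an earlier neighbor $w$ marks in some raising round $t\ge r$, then processing $(w,+)$ merely lowers $l(v)$ to $w$ and does not change $v$'s mark, so $w$ can finalize after $v$. Two alternating chains of different lengths, feeding $w$ and a later non-adjacent $w'$, both adjacent to $v$, realize this.

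The clean fix is to induct on $\pi(v)$ instead of on $r$. Both $u$ (marked case) and $w$ (unmarked case) precede $v$ in $\pi$. The lower bound $r_w\ge r-1$ then suffices on its own: it gives a path of length at least $r_w-4\ge r-5$ ending at $w$, to which you append $(w,v)$.

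Your ``triggering vertex'' justification for that lower bound does not reach $w$ in the replacement scenario. There the trigger is the old eliminator, and a raise that finds a replacement does not change $v$'s mark at all. Use the paper's contrapositive instead. If $w$ had settled as marked by round $r-3$, processing $(w,+)$ would have unmarked $v$ by then. Every later raise of $v$ would find a marked replacement (at worst $w$ itself), so $v$ would have finalized by round $r-2$. A vertex can only finalize as marked at a lowering round, so $r_w\neq r-2$, and hence $r_w\ge r-1$.

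Your point (2) is a non-issue. The AE-path definition constrains only the marked-to-unmarked steps to be eliminator--target pairs. So attaching any earlier neighbor $u$ that ends unmarked is legitimate, and $u$ ends unmarked by independence since $v$ ends marked. This is what the paper does.
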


\begin{proof}
    We prove by induction on the round that $v$ finalizes.

    Base Case: $r \le 4$: A single vertex is a path of length 0.
    
    Inductive step $(r > 4)$: we case on whether $v$ ends marked. 
    \begin{enumerate}
    \item[Case 1:] $v$ ends marked. By Lemma \ref{lem:u_exists}, there exists an earlier neighbor $u$ with finalization time $r-1$. 
    By the IH, there is an AE-path to $u$ of length $r-5$, so attaching $(u,v)$ yields an AE-path of length $r-4$ as desired. 

      \item[Case 2:] $v$ ends unmarked. Note that $r$ is a raising round. Let $w$ be the (final) eliminator of $v$. If $w$ finalized (at the beginning of) round $r-3$ or earlier, then $v$ would have finalized in round $r-2$ or earlier, a contradiction. Therefore $w$ finalized in round $r-1$ or later. Note that $w$ finalized no later than round $r-1$, for $v$ to finalize in round $r$. Applying the IH yields an AE-path of length $r-5$, which attached to $(w,v)$ gives an AE-path of length $r-4$.
    \end{enumerate}
    
\end{proof}

\begin{lemma}
    A single shell takes $O(\log n)$ rounds of the while loop to finish whp.  \label{lem:shell_log_rounds}
\end{lemma}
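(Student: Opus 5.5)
The plan is to combine Lemma~\ref{lem:exists_AE} with the Fischer--Noever bound on the longest AE-path. I will argue by contradiction on the last finalization round. Suppose processing shell $i$ runs for $R$ rounds, counting lowering and raising stages separately as in this subsection, so the number of while-loop iterations is at most $R/2 + O(1)$. Because the loop runs while $Q$ restricted to shell $i$ is nonempty, some vertex of the shell changes mark in round $R-1$ or later (up to an additive constant). That vertex therefore finalizes at round at least $R - O(1)$.

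By Lemma~\ref{lem:exists_AE}, this vertex is the endpoint of an AE-path in the new graph of length at least $R - O(1)$. I take this as the key step and then apply the bound on the longest AE-path of the \rlfmis{}. This bound is $O(\log n)$ whp by Fischer and Noever~\cite{FN20}, with the caveat in the footnote that our AE-paths may start at an unmarked vertex, which changes lengths by at most one. It applies because $\gnew$ is fixed by the oblivious adversary independently of $\pi$, so $\pi$ is a uniformly random permutation relative to $\gnew$. Hence $R = O(\log n)$ whp. A union bound over the $O(\log^3 n)$ shells (Lemma~\ref{lem:largestShell}), and over polynomially many batches if desired, keeps the guarantee whp.

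I expect the main obstacle to be justifying the step ``the loop is still running, so some vertex finalizes late.'' The queue can contain entries $(u,\pm)$ whose processing changes nothing: the check $l(u)=_\pi u$ fails, or the lower or raise calls find no conflicts. A round in which no mark changes could therefore, in principle, still leave the queue nonempty.

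My first attempt is to show that entries are enqueued only when a vertex actually changes mark. This should follow from lines~6 and~13, which enqueue only on an unmark of a previously marked vertex ($O(v)$ true) or on a mark. Every queued entry in shell $i$ thus witnesses a mark change in the preceding round. Since all of $Q\cap$ shell $i$ is processed each round, a round with no mark changes leaves the shell's queue empty. The number of rounds is then at most one more than the last round in which some vertex of the shell changes mark, which is exactly its finalization round minus one.

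A second point to check is that Lemma~\ref{lem:exists_AE} is stated for the process restricted to shell $i$. This holds because earlier shells are final and later shells do not influence shell $i$ while it is processed. The resulting AE-path is therefore a genuine AE-path in $\gnew$, possibly extending into earlier shells, and this only helps, since the global bound applies to it.
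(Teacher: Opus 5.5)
Your proposal is correct and follows essentially the same route as the paper. Lemma~\ref{lem:exists_AE} turns a late finalization round into a long AE-path in $\gnew$, which the Fischer--Noever $O(\log n)$ whp bound rules out; since queue entries arise only from mark changes and all of the shell's queue is processed every round, the shell empties within $O(1)$ rounds after its last finalization. Your extra justifications, namely that the oblivious adversary makes $\pi$ independent of $\gnew$ and that lowering and raising stages are counted as separate rounds, match the paper's argument and make it slightly more explicit.
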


\begin{proof}[Proof of Lemma \ref{lem:shell_log_rounds}]
Since the length of all AE-paths is $c \log n$ whp for some constant $c$, whp there does not exist any AE-path of length $c \log n + 1$ or greater, so by Lemma \ref{lem:exists_AE} no vertex finalizes in rounds $c \log n + O(1)$ or later. Therefore, all vertices finalize within the first $O(\log n)$ rounds whp. 

Within a shell, note that all vertices in $Q$ are processed every round. For a vertex to add to the queue, by construction of our propagation algorithm, it must have changed mark. But a finalized vertex cannot change mark. Therefore, after $O(\log n)$ rounds, no more vertices are added to the queue, so the shell finishes after $O(\log n)$ rounds. \end{proof} 

\begin{lemma}
The total depth of the algorithm is $O(\log^5 n)$ whp. 
\end{lemma}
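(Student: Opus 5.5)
The depth bound follows by multiplying three factors, as sketched in item (6) of the technical overview: the number of shells, the number of while-loop rounds per shell, and the depth of a single round. I would establish each factor separately and then combine them with a union bound over the high-probability events.

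\textbf{Number of shells.} By Lemma~\ref{lem:largestShell} there are $O(\log^3 n)$ shells, and they are processed sequentially. Empty shells are skipped using the bitmask of nonempty shells. I would charge every shell pessimistically, including the $O(\log^2 n)$ table-lookup overhead for iterating over nonempty shells; this adds only lower-order depth.

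\textbf{Rounds per shell.} By Lemma~\ref{lem:shell_log_rounds}, each shell finishes within $O(\log n)$ while-loop rounds whp. That lemma rests on the Fischer--Noever bound on AE-path length, applied to the new graph. The AE-path bound is a single whp event about the random permutation on $\gnew$, so it holds simultaneously for all shells. No separate union bound over the $O(\log^3 n)$ shells is needed, though one would also be harmless.

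\textbf{Depth of one round.} This is where most of the checking lies. In a single lowering or raising stage, every operation must have $O(\log n)$ depth whp:
\begin{itemize}
\item extracting $A$ from the shell's bag costs $O(\log n)$ by the shell-subselection lemma;
\item the parfor loops over $N^h(u)$ and over $N$ are implemented with flatten and filter, each of depth $O(\log n)$;
\item writeMin is implemented via semisort, with depth $O(\log n)$ whp;
\item findReplacement computes a minimum over $R$ by a reduction of depth $O(\log n)$;
\item fetching $N^-(v,u)$ means flattening $O(\log n)$ bags, and bag batch updates and outputs have $O(\log b)$ depth;
\item orient calls update bag membership via batch bag operations, with depth $O(\log n)$.
\end{itemize}
Nested parfors (a parfor over $A$, each spawning a parfor over neighbors, each calling orient) are flattened into a constant number of levels of primitives. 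Their depth therefore adds rather than multiplies, and the total is $O(\log n)$ per round.

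\textbf{Initial step and combination.} The preprocessing of $E_+$ and $E_-$ (the filters defining $L$, and the lower and raise calls) consists of a constant number of such rounds. It costs $O(\log n)$ depth, and the initialization of the in-edge bag arrays costs $O(\log\log n)$. Multiplying the factors gives $O(\log^3 n)\cdot O(\log n)\cdot O(\log n)=O(\log^5 n)$. Taking a union bound over the polynomially many whp events (the semisorts in each round, and the AE-path bound) shows the total depth is $O(\log^5 n)$ whp.

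The main obstacle is the per-round $O(\log n)$ bound. One must verify that the nested parallel calls inside lower and raise, in particular the concurrent writeMin and the subsequent orient calls that move edges between bags, can all be batched into $O(1)$ sequential phases of standard primitives. If that batching fails, a naive nesting would give $O(\log^2 n)$ per round, and I would then need to argue that the nesting depth is constant.
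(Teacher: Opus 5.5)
Your proposal is correct and matches the paper's proof, which simply multiplies the $O(\log^3 n)$ shells of Lemma~\ref{lem:largestShell}, the $O(\log n)$ rounds per shell of Lemma~\ref{lem:shell_log_rounds}, and $O(\log n)$ depth of parallel primitives per round; your per-primitive check, and your observation that the AE-path bound is a single whp event on $\gnew$, only make explicit what the paper leaves implicit. Your closing worry about $O(\log^2 n)$ per round is unnecessary: the parfor nesting in the pseudocode has constant depth, and in the fork-join model depths add along a nesting chain, so each round is $O(\log n)$ even without explicit batching.
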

\begin{proof}
    We have $O(\log^3 n)$ shells, $O(\log n)$ rounds per shell, and $O(\log n)$ \depth{} for parallel primitives per round, for a total of $O(\log^5 n)$. 
\end{proof}

Note that our \depth{} bounds are loose upper bounds, for example, some shells will be empty if the batch size is small. Therefore, for $b=1$, our work bound (that we will achieve) of $O(\log^2 n \log \Delta)$ is smaller than our depth bound of $O(\log^5 n)$. For clarity we do not attempt to parameterize our depth by $b$. 

\subsection{Work}

We now will use the shells, and additional techniques, to bound the work of our algorithm. 

\begin{lemma}
   We have the following bounds.
  \begin{enumerate}
      \item Calling lower $j$ to $i$ costs $O(\min(\Delta,\frac{n \log n}{i}))$ whp.
      \item Calling raise $j$ from $i$ costs $O(\min(\Delta,\frac{n \log n}{i}))$ whp.
      \item The number of vertices in $N^h(i)$ is bounded by $O(\min(\Delta,\frac{n \log n}{i}))$ whp.
      \item Processing a vertex $v$ in the queue costs $O(\min(\Delta^2,(\frac{n \log n}{v})^2))$ work whp. 

  \end{enumerate}
\end{lemma}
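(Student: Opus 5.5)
The key structural fact I would use is that every quantity the four items touch lives in the remaining graph of \rlfmis{} at a known permutation time. That lets me charge it to the degree bound of Lemma~\ref{lem:degreeBound}. I apply that lemma at all $n$ steps, with a union bound, and do so for both the old graph and the new graph (and their union where needed). The result is a single whp event on which, for every $i$, every vertex alive in $G_i$ has at most $O(\min(\Delta,\frac{n\log n}{i}))$ neighbors alive in $G_i$. The $\Delta$ in the minimum is trivial, since no vertex has more than $\Delta$ neighbors at all. All four items are then deterministic consequences of this event together with the orientation invariant of Section~\ref{sec:correctness}: an edge points $v\to w$ iff $(l(v),v)<_\pi(l(w),w)$.

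For item 3, a vertex $w\in N^h(i)$ satisfies $w>_\pi i$ and $(l(w),w)>(l(i),i)$. If $i$ is marked, then $l(i)=i$ and $l(w)\ge i$. So $w$ is not eliminated before step $i$, meaning $w$ is alive in $G_i$, and the degree event bounds the count. If $i$ is unmarked, its out-neighbors are by definition the neighbors still alive when $i$ is deleted, i.e.\ alive in $G_{l(i)}$. I would argue that the queue only calls for $N^h(i)$ when $i$ is (or was, one round before) marked, so the marked case is the one that matters. I would also note that $l(i)\le i$ only makes the graph larger. This is the delicate point, and I would handle it by bounding against the quantity the algorithm actually accesses.

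Items 1 and 2 reduce to item 3.
\begin{itemize}
\item \texttt{lower} $j$ to $i$ fetches $N^-(j,i)$. These are the in-neighbors with elimination time at least $i$, all alive in $G_i$. Thanks to the doubling bags, extracting them costs $O(\log n)$ overhead plus their number.
\item \texttt{raise} $j$ from $i$ scans $N^l(j)\cup N^h(j)$. These are out-neighbors of $j$ with elimination time above $l(j)$. Before the raise we have $l(j)=i$, so they are alive in $G_i$. The scan therefore costs $O(\min(\Delta,\frac{n\log n}{i}))$, and the orient calls cost $O(1)$ per edge plus $O(1)$ bag updates.
\end{itemize}

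Item 4 then follows by composition. Processing $(v,\pm)$ iterates over $N^h(v)$, which has size $O(\min(\Delta,Y_v))$ by item 3, and issues one lower or raise per such neighbor $x$ with argument $v$. Each of these costs $O(\min(\Delta,Y_v))$ by items 1 and 2. The product is $O(Z_v)$.

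The main obstacle I expect is the correct-time issue. During propagation the elimination times $l(\cdot)$ are in an intermediate state that matches neither the old nor the new LFMIS. So the claim that the neighbors we touch are alive in $G_i$ has to be argued relative to the graph implied by the current labels, not the true $G_i$. My first attempt would be to show that, restricted to vertices at or before the current shell, the current labels agree with the final new-graph \rlfmis{} for earlier shells. Earlier shells are finalized, as noted in the depth subsection. Any touched neighbor is then either alive in the new graph's $G_i$ or alive in the old graph's $G_i$, and applying the degree bound to both graphs covers both cases.
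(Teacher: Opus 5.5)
Your skeleton matches the paper's: every vertex an operation touches has current elimination time at least $i$, hence lies in $G_i$, Lemma~\ref{lem:degreeBound} bounds it, and item 4 is the product of items 1--3. The step that fails is the one where you set aside the unmarked case of item 3. The raising stage processes $(u,-)$ exactly when $l(u)\neq_\pi u$, and it scans all of $N^h(u)$ to filter for $l(v)=_\pi u$. By then \texttt{lower} $u$ to $t=l(u)$ has reoriented $u$'s in-edges, so $N^h(u)=\{w>_\pi u : l(w)\ge_\pi t\}$. These are the higher neighbors alive in $G_t$, not in $G_u$. That $u$ was marked a round earlier says nothing about the current orientation, and $t$ may even lie in an earlier shell. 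This is not a proof artifact. Let $u$ and $x$ each be adjacent to exactly $w_1,\dots,w_D$, with $x$ first among these vertices, so $u$ is marked and typically $\pi(u)=\Theta(n)$. Inserting $(a,u)$ for an isolated $a<_\pi x$ puts $\Theta(D)$ of the $w_j$ into $N^h(u)$, while $\min(\Delta,\frac{n\log n}{u})=O(\log n)$. This event has probability about $D^{-2}$, which is not small for, say, $D=n^{0.1}$. So item 3, and item 4 for $(u,-)$, do not hold pointwise whp for unmarked vertices. The paper's proof has the same hole: it asserts that every vertex of $N^h(i)$ has deletion time later than $i$, which holds only for marked $i$. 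What is true is that the vertices actually raised, $\{v\in N^h(u): l(v)=_\pi u\}$, lie in $G_u$. The rest of the scan can be charged to the preceding \texttt{lower} $u$ to $t$ call, which already paid $|N^-(u,t)|$ to orient those same edges; that amortized charge is what the work bound needs.

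Your handling of intermediate labels is also wrong as stated: a touched vertex need not be alive at step $i$ in either the old or the new graph. For instance, $w$ may have been raised to $l(w)=_\pi w$ after its old eliminator $z<_\pi i$ was unmarked in an earlier shell. Meanwhile its new eliminator $y\in[p_k,i)$ in the current shell may have just marked, with $(y,+)$ still queued. Then $w\in N^h(i)$ for a marked $i$ in shell $k$, yet $w$ is dead at step $i$ in both graphs. The version that works uses the shells. Once the shells before $k$ are finished their bags are empty, so by the queue invariant there is no conflict $(y,x)$ with $y<_\pi p_k$. Hence, for any $q\le p_k$, every vertex whose current label is at least $q$ is alive at step $q$ of the \rlfmis{} of $\gnew$. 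Since $s_k\le 1+(p_k-1)/\log^2 n$, each $i$ in shell $k$ has $i\le 3p_k$. The degree bound at step $\min(p_k,2^{\lfloor\log_2 i\rfloor})\ge i/3$ therefore still gives $O(\frac{n\log n}{i})$. Using $2^{\lfloor\log_2 i\rfloor}$ also pays for the partially read bag of $N^-$, which your ``$O(\log n)$ overhead'' remark ignores. The pre-loop batch step, where labels are old labels possibly lowered, is covered by the old graph's bound, as you suggest.
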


\begin{proof}
In each of these statements, we take the min with $\Delta$, because the cost is proportional to the number of neighbors accessed, which is always bounded by $\Delta$.

Note that we can fetch the in-neighbors after time $i$, in constant time per neighbor after time $i$ by outputting from each bag in the desired range. Even though the earliest bag will contain neighbors earlier than $i$, we can filter these out efficiently, because the number of filtered neighbors is within a constant factor of the number of total neighbors fetched, because the bags are of doubling elimination time range. 

\begin{enumerate}
    \item Note that all neighbors of $j$ with elimination time from $i$ to $j$ have permutation value at least $i$, since elimination time is no more than permutation time. Thus, $N^-(j,i)$ existed entirely in the graph snapshot right before time $i$, which by Lemma \ref{lem:degreeBound} has $O(\min(\Delta,\frac{n \log n}{i}))$ max degree whp. 
    
    \item  The set of neighbors of $j$ with elimination time at least $i$ all have permutation time at least $i$, and therefore existed in the graph right before time $i$. Therefore, $O(\min(\Delta,\frac{n \log n}{i}))$ is a bound on the number of such vertices.
    \item Note that all neighbors in $N^h(i)$ have later deletion and permutation time than $i$, and therefore existed right before time $i$, so $O(\min(\Delta,\frac{n \log n}{i}))$ bounds the number of these.
    \item When vertex $v$ is processed, up to $O(\frac{n \log n}{v})$ neighbors (the size of $N^h(v)$) are raised/lowered, and a raise/lower costs $O(\min(\Delta,\frac{n \log n}{v}))$, for total 2-hop cost of $O(\min(\Delta^2,(\frac{n \log n}{v})^2))$.
\end{enumerate}
\end{proof}

\begin{lemma} If a vertex $v$ in shell $i$ has no neighbors in shell $i$, then it is processed (added and removed from queue) at most once by our algorithm. Otherwise, $v$ is processed at most $O(\log n)$ times whp. \label{lem:processCounts} \end{lemma}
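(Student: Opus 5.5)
The plan is to argue directly from how shells are processed: shells are handled strictly in order, and the vertices of earlier shells are finalized before shell $i$ begins. Only vertices whose mark changes (lines 6 and 13) ever enter the queue. A vertex $v$ is re-enqueued after its first processing only if its mark changes again, and that change must come from a \texttt{lower} or \texttt{raise} call on $v$. So the first step is to identify who can issue such calls to $v$ while shell $i$ is being processed.

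The lower/raise calls that reach $v$ come from processing some $(u,\pm)$ with $v \in N^h(u)$, so $u$ is an earlier neighbor of $v$ that is still present in the oriented graph. The argument splits by where $u$ sits.
\begin{itemize}
\item If $u$ lies in an earlier shell, then $u$ was finalized and processed before shell $i$ started. Its effect on $v$ was therefore applied before any processing in shell $i$, and $u$ never re-enters the queue.
\item If $u$ lies in shell $i$, then $u$ is a neighbor of $v$ inside its shell. By hypothesis no such neighbor exists.
\end{itemize}
Neighbors in the graph $G_{p_i}$ are the relevant ones, since vertices eliminated earlier are finalized. This matches how Lemmas \ref{lem:lonelyShellLow} and \ref{lem:lonelyShellHigh} phrase the condition.

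The consequence is that, once shell $i$ starts, $v$'s elimination time is determined by finalized earlier vertices only. Hence $v$ changes mark at most once, reaching its correct final state, and it is enqueued and processed at most once.

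For the second claim, I would use the round structure from the depth analysis. Within a shell every queued vertex is processed each round. By Lemma \ref{lem:shell_log_rounds}, all vertices of the shell finalize within $O(\log n)$ rounds whp, and afterwards nothing is enqueued. Each round processes $v$ at most once, giving at most $O(\log n)$ processings of $v$ whp. If $v$ was also touched by the initial batch lowering or raising, that adds only one more.

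The main obstacle is the first claim. I have to justify carefully that earlier-shell vertices cannot trigger further changes to $v$ after shell $i$ begins. In particular, a vertex of a later shell must never call lower or raise on $v$, since that would need $v \in N^h(w)$ with $w >_\pi v$, which is impossible. I would also have to handle the subtle case where $v$ has an in-neighbor in its shell that is not an earlier neighbor. That case is harmless, because only earlier neighbors (through $N^h$) issue calls to $v$.
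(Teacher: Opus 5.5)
Your core argument is the paper's. During shell $i$ only shell-$i$ entries are processed, and a \texttt{lower} or \texttt{raise} on $v$ can only come from processing an earlier neighbor $u$ with $v\in N^h(u)$. So if no such $u$ lies in shell $i$, nothing enqueues $v$ once the shell starts, and $v$ is processed only in the shell's first iteration. The $O(\log n)$ part follows from Lemma~\ref{lem:shell_log_rounds} exactly as in the paper. With the hypothesis read literally (no neighbor of $v$ in shell $i$ at all), this is complete up to wording. While \emph{earlier} shells run, $v$ can change mark and be enqueued several times, leaving e.g.\ both $(v,-)$ and $(v,+)$ in its bag. So ``changes mark at most once'' and ``enqueued at most once'' are false as stated. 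What you actually show is that $v$ is untouched during shell $i$, and in the first iteration at most one of its entries passes its check.

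The step that fails is your reduction to neighbors in $G_{p_i}$, justified by ``vertices eliminated earlier are finalized.'' A vertex whose mark never changes during shell $i$ can still be \emph{processed} there: entries queued before the shell began are processed in its first iteration, and a $(u,-)$ entry issues raise calls. Here is a counterexample.
\begin{itemize}
\item Take $x<_\pi u<_\pi v$ with $x$ in an earlier shell and $u,v$ in shell $i$. Let all three be isolated (hence marked) in the old graph, and let the batch insert $(x,u)$ and $(u,v)$.
\item The insertion step lowers $u$ to $x$ and $v$ to $u$, enqueueing $(u,-)$ and $(v,-)$.
\item In the old graph, $(u,v)$ is absent. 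In the new and union graphs, $x$ eliminates $u$ before $p_i$. Either way, $v$ has no neighbor of its shell in $G_{p_i}$.
\item Yet in the first iteration of shell $i$, both $(v,-)$ and $(u,-)$ pass their checks, since lines run in lock step. Then $u$ raises $v$, because $l(v)=u$ and $v\in N^h(u)$. $v$ finds no replacement, marks, enqueues $(v,+)$, and is processed again in the second iteration.
\end{itemize}

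What does hold under the $G_{p_i}$ hypothesis is ``at most twice.'' Take a shell-$i$ neighbor $u<_\pi v$ outside $G_{p_i}$.
\begin{itemize}
\item When the shell starts, $l(u)$ equals $u$'s final eliminator $y<p_i$, because all conflicts from earlier shells are resolved.
\item $u$ keeps this value throughout the shell: \texttt{writeMin} with a shell-$i$ value does nothing, and only processing $(y,-)$ could raise $u$.
\item So $u$ is never enqueued during the shell, its $(u,+)$ entries fail their check, and a stale $(u,-)$ acts only in the first raising stage.
\end{itemize}
Hence nothing lowers $v$ during the shell, $v$ can be raised only in the first iteration, and $v$ is processed in at most the first two iterations. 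This constant is what Lemma~\ref{lem:zbound} actually needs, because Lemmas~\ref{lem:lonelyShellLow} and~\ref{lem:lonelyShellHigh} only control neighbors in the remaining graph. The paper's proof sidesteps this only by reading the hypothesis literally.
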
 

\begin{proof}
First suppose that $v$ has no neighbors in the shell. Note that a vertex can only be added to the ready bag by a neighbor being processed. Furthermore, when we are processing a shell, although we may add vertices from later shells, we only will process vertices in the current shell. Therefore, no vertex will add $v$ to the queue while processing shell $i$, so $v$ can only be processed if it was already in the queue (in which case, will be processed at most once). 

Now suppose that $v$ does have neighbors in the shell. By Lemma \ref{lem:shell_log_rounds}, we can spend at most $O(\log n)$ rounds processing a single shell. At worst, we will process $v$ in each of these rounds. Therefore, $v$ gets processed at most $O(\log n)$ times.

\end{proof}

\begin{lemma}
    Let $S_i$ be the influenced vertices in shell $i$. Processing shell $i$ will take $O(\sum_{v \in S_i} Z_v)$ expected work. Thus, summing over shells, we have $O(\sum_{v \in S} Z_v)$ expected work within all of the shells. \label{lem:zbound}
\end{lemma}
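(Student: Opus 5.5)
The plan is to charge all work done while processing shell $i$ to the processing events of queue entries $(v,\pm)$ with $v$ in shell $i$. Each processing event is then bounded by the cost estimate of the preceding lemma, and the number of events per vertex is controlled with Lemma~\ref{lem:processCounts} together with the lonely-shell lemmas.

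\textbf{Step 1: only influenced vertices are processed.} First I show that every vertex $v$ processed while handling shell $i$ lies in $S_i$. A vertex enters $Q$ only when it changes mark: via \texttt{lower} when it was previously marked, or via \texttt{raise} when it becomes marked. I would argue that, at any intermediate moment, the set of vertices whose mark differs from the old MIS is contained in the influence set $I$. The argument is an induction on $\pi$ order mirroring Lemma~\ref{lemma:batchinfluenceworks} and the symmetric-difference lemma: a vertex can flip only if an earlier neighbor flipped or an incident batch edge is present, and in either case InfluenceMIS places it in $V^?$, hence in $I$. The entry for $v$ can also be the final state, which lies in $M(G)\bigtriangleup M(\gnew)\subseteq I$.

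\textbf{Step 2: cost per event.} By item 4 of the previous lemma, a single processing of $v$ costs $O(Z_v)$ whp. The raises and lowers applied to $v$'s neighbors are charged to $v$'s processing, since each such call is issued from line 36 or line 41 on behalf of $v$. The initial batch lowers and raises are charged to the batch endpoints, which are in $I$.

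\textbf{Step 3: number of events.} Let $X_v$ be the indicator that $v$ has a neighbor in its shell, measured in the remaining graph at the shell start. By Lemma~\ref{lem:processCounts}, $v$ is processed at most $1+X_v\cdot O(\log n)$ times whp, so the work for shell $i$ is
\[
\sum_{v\in S_i} O(Z_v)\,(1+O(\log n)X_v).
\]
By Lemmas~\ref{lem:lonelyShellLow} and~\ref{lem:lonelyShellHigh}, $\Pr[X_v=1]=O(1/\log n)$, which gives $E[\log n\cdot X_v]=O(1)$. Summing, and adding the negligible $1/\mathrm{poly}(n)$ failure probability times a polynomial worst-case cost, gives $O(\sum_{v\in S_i}Z_v)$.

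\textbf{Main obstacle.} The hard part is the expectation step, because $X_v$ and the event $v\in S_i$ are not independent. Both depend on $\pi$, and I cannot simply multiply $\Pr[v\in S]$ by $\Pr[X_v]$. I would first try conditioning on the prefix $\pi[1:p_i)$, which fixes $G_{p_i}$. I would then argue that the shell-membership bound of the lonely-shell lemmas still holds conditionally, because the positions of the surviving neighbors within the remaining suffix stay uniformly random. If that fails, I would instead absorb the extra factor by bounding $\sum_v Z_v\,\Pr[v\in S,\,X_v=1]$ directly inside the later influence-set calculation that bounds $\sum_{v\in S}Z_v$. There, the event that $v$ is processed and has a shell neighbor can be charged within the same per-step analysis.
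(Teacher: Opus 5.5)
Your plan is the paper's proof. The paper charges $O(Z_v)$ per processing via item~4 of the cost lemma, uses Lemma~\ref{lem:processCounts} for ``once versus $O(\log n)$ times,'' takes the $O(1/\log n)$ probability from Lemmas~\ref{lem:lonelyShellLow} and~\ref{lem:lonelyShellHigh}, and handles the failure event with a polynomial worst case ($2n$ rounds per shell). It does not prove your Step~1; it simply identifies the vertices processed in shell $i$ with $S_i$. At the point you call the main obstacle, it just says ``by linearity of expectation,'' i.e.\ it multiplies the unconditional $O(1/\log n)$ into the charge for $v\in S_i$, which is exactly the step you decline to take. So relative to the paper nothing is missing, and the correlation you flag is a genuine omission in the paper's own argument.

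Your route (a) would not repair it, however. Conditioning on $\pi[1:p_i)$ fixes $G_{p_i}$, but the event $v\in S_i$ is not determined by that prefix. It depends on the order inside the shell, and an earlier in-shell neighbor $w$ already in $V^?$ is precisely one way for $v$ to become undecided. If that is $v$'s only route to influence, then $\Pr[X_v=1\mid v\in S_i,\ \pi[1:p_i)]=1$, even though $\Pr[X_v=1\mid \pi[1:p_i)]=O(1/\log n)$.

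Route (b) does work if you charge to the \emph{earlier} endpoint. Processing $(w,\pm)$ touches only $N^h(w)$, so during shell $i$ the vertex $v$ can be re-enqueued only by an earlier in-shell neighbor $w$ that is itself processed. By your Step~1 this gives $w\in S_i$, and this is where Step~1 is really needed. Charge $v$'s extra $O(\log n)$ processings to such a $w$, using $Z_v\le Z_w$ because $Z_j$ is nonincreasing in $j$, and condition on $\pi[1:\pi(w)]$. This prefix already decides whether $w$ enters the influence set, while $w$'s remaining neighbors are still placed uniformly over the unfilled slots. So the computation of Lemma~\ref{lem:lonelyShellLow} (or~\ref{lem:lonelyShellHigh}), applied to $w$ at its own iteration, gives $O(1/\log n)$ expected later in-shell neighbors. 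Hence $E[\sum_{v\in S_i}\log n\cdot X_vZ_v]=O(E[\sum_{w\in S_i}Z_w])$, with $X_v$ redefined as ``has an earlier processed in-shell neighbor.''

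As a side remark, your Step~2 claim that the initial batch lowers and raises are charged to endpoints lying in $I$ is false in general. Take $u\in M(G)$ gaining an edge to an unmarked $v$ whose old eliminator $w$ satisfies $u<_\pi w<_\pi v$ across an unchanged edge, with neither $u$ nor $w$ influenced. Then \texttt{lower} $v$ to $u$ runs, yet $w$ deletes $v$ in InfluenceMIS, so $v\notin I$. Those calls lie outside shell processing, so this does not affect the lemma.
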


\begin{proof}
By the cost bounding lemma on processes, each time we process a vertex $v$ we pay at most $O( Z_v)$. 
By Lemmas \ref{lem:lonelyShellLow} and \ref{lem:lonelyShellHigh}, a vertex has a $O(\frac{1}{\log n})$ chance of having any neighbors in its shell. By Lemma \ref{lem:processCounts}, a vertex with no neighbors will be processed once, and a vertex with neighbors will be processed at most $O(\log n)$ times. Thus by linearity of expectation we pay $O(Z_v)$ per vertex in expectation when the dependence depth is shallow. 

Consider the (very unlikely) event where the dependence depth is not bounded by $O(\log n)$. Because within a shell, we process every vertex (in the shell) in every round, the worst possible number of rounds for a shell is $2n$. Therefore, our overall bound is this case is polynomial, which is negligible against a whp bound. 
Therefore, in expectation, we pay $O(Z_v)$ per vertex.

\end{proof}

\begin{lemma}
    Suppose that $E_1^?$ is the set of batch update edges (of size $b$), and that $V_1^?$ is empty. For $i < n/4$, we have $E[V_i^?] \le \frac{4bi}{n}$.  \label{lem:call-growth}
\end{lemma}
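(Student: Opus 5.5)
The plan is to run a drift argument on $|V^?|$ alone, refining the edge-centric accounting in the proof of Lemma~\ref{lem:expected_change}. That lemma only controls the potential $|V^?|+|E^?|$, which gives $|V_i^?|\le b$ in expectation and is too weak here. Instead I will show that the only net source of growth for $|V^?|$ is the undecided edges, and that these never increase in number. Concretely, I aim to prove that conditioned on any reachable state before step $i$,
\[
E\bigl[\,|V_{i+1}^?| - |V_i^?| \,\bigm|\, \pi[1:i)\bigr] \;\le\; \frac{2|E_i^?|}{n-i+1}.
\]
Since $|E_i^?|\le |E_1^?| = b$ (edges only ever leave $E^?$), summing over $i-1$ steps with $n-i+1 \ge 3n/4$ for $i<n/4$ gives $E|V_i^?| \le (i-1)\cdot\frac{8b}{3n} \le \frac{4bi}{n}$.

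To establish the one-step bound, I use the same fact as in Lemma~\ref{lem:expected_change}: $u=\pi(i)$ is uniform over the $n-i+1$ unassigned vertices, so each alive vertex is picked with probability $\frac{1}{n-i+1}$. I then charge every change in $V^?$ to a single edge of the current graph $G_i$ (all edges join alive vertices), splitting by edge type.
\begin{enumerate}
\item \emph{Undecided edges.} An undecided edge $(u,v)$ adds at most one endpoint to $V^?$, and only when its other endpoint is picked. This happens with probability at most $\frac{2}{n-i+1}$ and produces the positive term.
\item \emph{Decided edges with exactly one undecided endpoint $x$.} The other endpoint $y$ is alive and decided. Picking $x$ adds $y$ to $V^?$, which is $+1$ at most. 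Picking $y$ means $y$ is in the MIS, so it eliminates $x$ across the decided edge, which is exactly $-1$. These two events are equally likely, so the net contribution is nonpositive.
\item \emph{Decided edges with both endpoints undecided.} Such an edge never adds a vertex, so its contribution is nonpositive.
\item \emph{Decided edges with neither endpoint undecided.} Such an edge never touches $V^?$.
\end{enumerate}
The picked undecided vertex also removes itself, contributing $-\frac{|V_i^?|}{n-i+1}$, which I simply drop.

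The step I expect to be the main obstacle is making this per-edge pairing rigorous, since the additions and eliminations interact across edges. For the upper bound, counting an addition once per edge can only overcount: a vertex added via several edges, or already undecided, only helps. For the eliminations I need them to be counted exactly, with no double counting. This holds because in a single step only one vertex $y$ is picked, and the undecided vertices that $y$ eliminates through decided edges are distinct neighbors, one per edge. I also need that picking a decided $y$ really does eliminate $x$ rather than merely making it undecided. This is exactly the decided-edge branch of InfluenceMIS, and it requires $y\notin V^?$, which I must check is the branch taken.

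Once the conditional drift bound holds for every reachable prefix, taking expectations over the prefix and telescoping from $|V_1^?|=0$ yields the claim.
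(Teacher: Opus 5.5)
Your proposal is correct and follows essentially the same route as the paper. Both arguments bound the one-step drift of $|V^?|$: decided edges with exactly one undecided endpoint cancel in expectation, and each of the at most $b$ undecided edges contributes at most $\frac{2}{n-i+1}$; the bound then follows by induction (equivalently, telescoping) from $V_1^? = \emptyset$. Your edge-by-edge case split and the estimate $n-i+1 \ge 3n/4$, in place of the paper's $n-i+1 \ge n/2$, are simply a more careful write-up of that same argument.
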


We will prove this lemma by instrumenting \mbox{InfluenceMIS}: instead of adding $u$ to the influence set, we will add $Z_u$ to a cost counter. 
Before proving this lemma, we make two important remarks.

\begin{enumerate}
\item If we added $1$ to our counter instead of $Z_v$, then we would be counting the size of the influence set.

\item This proof will proceed similarly to that of Lemma \ref{lem:expected_change}. However, there is a critical difference. The point of Lemma \ref{lem:expected_change} was to bound the sum of the undecided vertices and edges, without caring about the distribution between these two sets. In this lemma on the other hand, we are trying to show that given no initially undecided vertices and some undecided edges, the growth of the undecided vertices (transfer from undecided edges to vertices) is slow. 
\end{enumerate}

\begin{proof}[Proof of Lemma \ref{lem:call-growth}]

Base Case, $i=1$ (before any iteration), true because $V_1^?$ empty.

Inductive step: Consider $E[V_{i+1}^?]$. By the IH we have that $E[V_i^?] \le \frac{4bi}{n}$. We will consider the effect of decided edges in growing the undecided vertex set, then consider the effect of undecided edges. 

Suppose that a vertex $v \in V_i^?$ has $d_v$ edges incident to alive vertices not in $V_i^?$ right before time $i$ via decided edges. Picking $v$ as the next vertex would add $d_v$ vertices to the undecided set (and remove itself), but picking a vertex $u$ where $u \not\in V_i^?$ and $(u,v) \in E \setminus E_i^?$ would remove $v$ from the undecided set. Therefore, the net effect along decided edges is 0.

Along undecided edges, at best, we have a $\frac{2}{n-i+1}$ chance of selecting the edge, which would add $1$ vertex to the undecided set. Note that there are at most $b$ undecided edges.

Therefore, the expected undecided vertex set size after time $i$ is upper bounded by $\frac{4bi}{n} + \frac{2b}{n-i+1} \le \frac{4b(i+1)}{n}$. \end{proof}

\begin{lemma}
 $E[\sum_{v \in S} Z_v] \le O(b \log^2 n \log \Delta)$. \label{lem:bounding}
\end{lemma}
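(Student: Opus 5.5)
We bound $E[\sum_{v \in S} Z_v]$ by instrumenting \mbox{InfluenceMIS}: whenever step $i$ selects an undecided vertex $u=\pi(i)$, that vertex is added to $S$ and contributes $Z_i$ (we identify $u$ with its permutation time $i$). By linearity of expectation,
\[ E\Big[\sum_{v \in S} Z_v\Big] = \sum_{i=1}^{n} Z_i \cdot \Pr[\pi(i) \in V_i^?]. \]
We compute $\Pr[\pi(i)\in V_i^?]$ by conditioning on the prefix $\pi[1:i)$, exactly as in Lemma~\ref{lem:expected_change}. Given the prefix, $\pi(i)$ is uniform over the $n-i+1$ unchosen vertices, and every undecided vertex is alive and unchosen. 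Hence
\[ \Pr[\pi(i) \in V_i^?] = E\Big[\frac{|V_i^?|}{n-i+1}\Big]. \]

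\textbf{Early steps, $i<n/4$.} Here Lemma~\ref{lem:call-growth} gives $E[|V_i^?|]\le 4bi/n$, and $n-i+1 \ge 3n/4$. So the probability is at most $\frac{16bi}{3n^2}$, and step $i$ contributes at most
\[ O\Big(\frac{bi}{n^2}\Big)\cdot \min\Big(\Delta^2,\Big(\frac{n\log n}{i}\Big)^2\Big). \]
We split this range at $i_0 = n\log n/\Delta$.
\begin{itemize}
\item For $i\le i_0$ we use the bound $\Delta^2$. The sum $\sum_{i\le i_0} \frac{bi\Delta^2}{n^2}$ is $O\big(\frac{b i_0^2\Delta^2}{n^2}\big)=O(b\log^2 n)$.
\item For $i_0<i<n/4$ we use the bound $(n\log n/i)^2$. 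The contribution becomes $\sum_i \frac{b\log^2 n}{i}$, which is $O(b\log^2 n \cdot \ln(n/i_0)) = O(b\log^2 n\log(\Delta/\log n))$.
\end{itemize}
Together these give $O(b\log^2 n\log\Delta)$, with a lower-order $O(b\log^2 n)$ term. The case $\Delta \le \log n$, where $i_0\ge n$, is handled by the $\Delta^2$ bound throughout and is trivially fine.

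\textbf{Late steps, $i\ge n/4$.} Here $Z_i = O(\log^2 n)$, since $Y_i\le 4\log n$. So the contribution is at most $O(\log^2 n)\cdot E[|S \cap \pi[n/4,n]|] \le O(\log^2 n)\cdot E[|S|]$. By Lemma~\ref{lem:influenceBound} with $|V_1^?|+|E_1^?| = b$, this is $O(b\log^2 n)$.

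\textbf{Expected main obstacle.} The degree-based quantity $Z$ in the per-step work bound is only a whp statement (Lemma~\ref{lem:degreeBound}), but here it multiplies a probability. I would argue as in Lemma~\ref{lem:zbound}: on the failure event of probability $n^{-c}$, the cost is at most $\Delta^2\cdot n \le n^3$. This adds $o(1)$ to the expectation, so we may take $Z_i$ as the charge.

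A subtler point is that Lemma~\ref{lem:call-growth} bounds the unconditional expectation $E[|V_i^?|]$, whereas we need $E\big[|V_i^?|/(n-i+1)\big]$. This is fine because $n-i+1$ is deterministic, so the ratio's expectation is just $E[|V_i^?|]/(n-i+1)$. I would double-check that the instrumented count agrees with the set $S$ in Lemma~\ref{lem:zbound}: $S$ is the influence set of $G_{all}$ with undecided edges $E_+\cup E_-$, and every vertex the algorithm processes lies in it.
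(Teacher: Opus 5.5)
Your proposal is correct and follows the paper's proof essentially step for step. You use linearity over iterations with $\Pr[\pi(j)\in V_j^?]=E[|V_j^?|]/(n-j+1)$, then Lemma~\ref{lem:call-growth} with a split at $j=n\log n/\Delta$ for $j<n/4$, and $Z_j=O(\log^2 n)$ with Lemma~\ref{lem:influenceBound} for $j\ge n/4$. The whp obstacle you anticipated does not arise in this lemma: $Z_j=\min(\Delta^2,(n\log n/j)^2)$ is a deterministic function of $j$, so $E[\sum_{v\in S}Z_v]=\sum_j Z_j\Pr[\pi(j)\in V_j^?]$ holds exactly, and the degree bound is only needed in Lemma~\ref{lem:zbound}.
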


\begin{proof}

Note that we have two ways of counting $E_{\pi}[\sum_{v \in S} Z_v]$. The first is to sum over each vertex that actually enters the influence set. The second is to sum over the batch influence MIS construction, adding the 2-hop neighborhood charge if a vertex is added during that iteration. The latter view helps significantly for bounding the work. 

We note that in this proof, when we write $j$, we mean iteration $j$, we do not mean the vertex at the permutation position $j$. This is because the vertex at permutation position $j$ is a random choice, which is critical to our bound.

Observe that 
\begin{equation}E[\sum_{v \in S} Z_v]  = 
\sum_{j=1}^n Pr[\text{$\pi(j)$ enters influence set}] Z_{j}
\end{equation}

We will consider separately iterations $n/4$ through $n$ and iterations 1 through $n/4$. For iterations $n/4$ through $n$, note that there is a $O(\log n)$ degree bound whp on each vertex, so $Z_v \le O(\log^2 n)$ for all $v \in \pi[n/4,n]$ whp. Therefore, $E[\sum_{v \in S | \pi(v) \ge n/4} Z_v] \le O(|S| \log^2 n)$, which by Lemma \ref{lem:influenceBound} is bounded by $O(b \log^2 n)$.

Now consider iterations 1 through $n/4$. 
Each vertex not present in the permutation has an equal chance of being $\pi(j)$. A vertex will enter the influence set (and pay $Z_j$) iff it is in the undecided vertex set $V_j^?$. 
Therefore, there is a $\frac{|V_j^?|}{n-j+1} \le \frac{2|V_j^?|}{n}$ chance that a vertex joins the influence set during iteration $j$. Thus we have that $Pr[\pi(j) \text{ enters influence set}] \le \sum_{\pi[1:j)} \frac{1}{{n \choose j-1}} \frac{2|V_j^?|}{n} = \frac{2 E[|V_j^?|]}{n}$.

Thus we have that 
$$\sum_{j=1}^{n/4} Pr[\text{$\pi(j)$ enters influence set}] Z_{j} \le \sum_{j=1}^{n/4} \frac{2 E[|V_j^?|]}{n} Z_j.$$

Now by Lemma \ref{lem:call-growth}, we have that $E[|V_j^?|] \le \frac{4bj}{n}$, so we have that $$\sum_{j=1}^{n/4} \frac{2 E[|V_j^?|]}{n} Z_j \le \sum_{j=1}^{n/4} \frac{8bj}{n^2} \min \left(\Delta^2,\left(\frac{n \log n}{j} \right)^2 \right).$$

Note that $\Delta^2 < (\frac{n \log n}{j})^2$ when $j < \frac{n \log n}{\Delta}$, and so we split the summation accordingly.

Simplifying, we have that
\begin{align*}
\sum_{j=1}^{\frac{n \log n}{\Delta}} \frac{8bj}{n^2} \min\!\left(\Delta^2,\left(\frac{n \log n}{j}\right)^2\right)
  &= \sum_{j=1}^{\frac{n \log n}{\Delta}} \frac{8bj}{n^2} \Delta^2 \\
  &\le \left(\frac{n \log n}{\Delta}\right)^2 \frac{8b\Delta^2}{n^2} \\
  &= O(b \log^2 n).
\end{align*}

Using the log approximation of a partial sum, observe that
\begin{align*}
\sum_{j=\frac{n \log n}{\Delta}}^{n/4} \frac{8bj}{n^2} \min\!\left(\Delta^2,\left(\frac{n \log n}{j}\right)^2\right)
  &= \sum_{j=\frac{n \log n}{\Delta}}^{n/4} \frac{8bj}{n^2} \left(\frac{n \log n}{j}\right)^2 \\
  &= 8b \log^2 n \sum_{j=\frac{n \log n}{\Delta}}^{n/4} \frac{1}{j} \\
  &\le O\!\left(b \log^2 n \left(\log (n/4) - \log \frac{n \log n}{\Delta}\right)\right) \\
  &= O(b \log^2 n \log \Delta).
\end{align*}

Therefore, the overall bound for the sum until $j=n/4$ is $O(b \log^2 n \log \Delta)$. Thus the overall cost is $O(b \log^2 n \log \Delta)$. 

\end{proof}

\begin{theorem}
The total work of our algorithm is expected $O(b \log^2 n \log \Delta)$ on a batch of size $b$. 
\end{theorem}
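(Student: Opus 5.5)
The plan is to split the work of one call to propagate into three parts: the initial handling of the batch, the processing of queued vertices inside the shells, and the bookkeeping for moving between shells. We bound each part separately and then combine them with the earlier lemmas.

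\textbf{Initial phase (lines 25--29).} Building the two filtered sets $L$ takes $O(b)$ work with a filter. Each endpoint then receives at most one lower or raise call. By the cost lemma, a lower $j$ to $i$ or raise $j$ from $i$ call costs $O(\min(\Delta,\frac{n\log n}{i}))$ whp. This is not automatically small when $i$ is early, so we charge these calls to the influence set. Any endpoint that is lowered or raised here is in $V^?$ at its own iteration of \mbox{InfluenceMIS}, so it belongs to $S$. Its one-hop cost is at most its two-hop charge $Z_v$, so the initial phase is absorbed into $O(\sum_{v\in S} Z_v)$. The first-time initialization of the in-edge arrays costs $O(\log n)$ per newly touched endpoint, which gives $O(b\log n)$ in total.

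\textbf{Shell phase.} First we check that every vertex that is ever enqueued and processed is in $S$, or at least is charged to a processed vertex in $S$. A vertex is enqueued only when its mark changes, and every raise or lower call is made on a vertex whose eliminator changed. We use the conservative property: a vertex whose mark or elimination time changes at an intermediate stage must be undecided at its iteration of \mbox{InfluenceMIS}. This should follow by the same induction as in the proof that $M(G)\bigtriangleup M(\gnew)$ is contained in the influence set. The inductive fact is that every vertex touched by the algorithm has an earlier touched neighbor that is undecided, or is a batch endpoint.

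With that in hand, Lemma~\ref{lem:zbound} gives expected work $O(\sum_{v\in S} Z_v)$ across all shells. This covers the extra processing caused by neighbors sharing a shell, via Lemmas~\ref{lem:lonelyShellLow}, \ref{lem:lonelyShellHigh} and \ref{lem:processCounts}. The whp degree bounds fail with probability $n^{-c}$, and in that case the work is polynomial, so this failure contributes $o(1)$ to the expectation. The bag operations and writeMin (semisort) are linear in the number of touched items, so they are dominated by the same charge. Lemma~\ref{lem:bounding} then gives $E[\sum_{v\in S}Z_v]=O(b\log^2 n\log\Delta)$.

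\textbf{Shell iteration overhead.} Iterating over the nonempty shells costs $O(T+\log^2 n)$, where $T$ is the number of nonempty shells. We have $T\le$ the number of processed vertices, which is $O(|S|)$ in expectation, so $E[T]\le O(b)$ by Lemma~\ref{lem:influenceBound}. The $\log^2 n$ term is $O(b\log^2 n)$ for every $b\ge 1$.

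\textbf{Combining.} Adding the three parts gives $O(b\log^2 n\log\Delta)$. The main obstacle is the claim that every vertex actually processed by the algorithm, including the transient marks and unmarks inside a shell, belongs to the influence set $S$ computed on $G_{all}$. The paper established this only for final states. If this claim proves hard, my fallback is to argue directly that a processed vertex always has a chain of mark changes back to a batch endpoint along edges of $G_{all}$ with increasing $\pi$. Each link of that chain propagates undecidedness in \mbox{InfluenceMIS}, because a marked vertex that is not undecided eliminates its decided neighbors and so cannot create a spurious change.
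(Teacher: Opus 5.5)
Your core argument for the shell phase is the paper's: the paper proves the theorem only by combining Lemma~\ref{lem:zbound} with Lemma~\ref{lem:bounding}. You are right that Lemma~\ref{lem:zbound} tacitly needs every enqueued vertex, including transient flips inside a shell, to lie in $S$, and the paper only shows this for final states. The gap is in the initial phase. You claim that an endpoint lowered or raised there lies in $S$, and this is false: being put into $V^?$ by a batch edge can be cancelled by a later decided eliminator. Take $u<_\pi x<_\pi v$ with the single old edge $\{x,v\}$, so $u,x$ are marked and $l(v)=x$, and insert $\{u,v\}$. Line 25 selects this edge, and the algorithm calls \texttt{lower} $v$ to $u$, whose cost is governed by the degree of $v$ at time $\pi(u)$. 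In \mbox{InfluenceMIS}, $u$ puts $v$ into $V^?$, but then $x\in V_{\pi(x)}\setminus V^?_{\pi(x)}$ deletes $v$ across the decided edge $\{x,v\}$. So $S=\emptyset$ and $\sum_{v\in S}Z_v=0$, yet work was done. Deleting $\{u,v\}$ when $v$ also has a decided marked neighbor $x$ with $u<_\pi x<_\pi v$ gives the same failure for \texttt{raise}.

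The same example refutes the ``mark or elimination time'' form of your invariant: $l(v)$ changes, but $v\notin S$. In the shell phase this overstatement is harmless, because raise and lower costs already sit inside $Z_w$ of the processed vertex $w$. In the initial phase, however, there is no processed vertex to charge.

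The repair is to bound the initial phase directly. By the paper's cost lemma, the call for a batch edge costs $O(\min(\Delta,\frac{n\log n}{\pi(u)}))$ whp, where $u$ is the earlier endpoint. Because the adversary is oblivious, the batch is independent of $\pi$, so $\Pr[\pi(u)=k]\le 2/n$. The expected cost per edge is then $\sum_k \frac{2}{n}\min(\Delta,\frac{n\log n}{k})=O(\log n\log\Delta)$, which is within budget.

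For the shell phase, the true invariant is restricted to mark changes: by induction on $\pi$, a vertex outside $S$ never changes mark during the algorithm. There are two cases.
\begin{itemize}
\item If $v\in V_{\pi(v)}\setminus V^?_{\pi(v)}$, every earlier $G_{all}$-neighbor $w$ has $w\notin V_{\pi(w)}$. By induction such $w$ is never marked, so nothing ever lowers $v$.
\item If $v$ was deleted by a decided $x$ across a decided edge, then $x$ is always marked and $\{x,v\}$ is never deleted. Hence $l(v)\le_\pi x$ is invariant, and every raise of $v$ finds a replacement.
\end{itemize}
Your chain-of-mark-changes fallback would not suffice by itself, for exactly the reason in the counterexample: undecidedness passed along a chain can be erased by a later decided eliminator.
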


\begin{proof}
    By Lemma \ref{lem:zbound}, the total cost of our algorithm is $O(\sum_{v \in S} Z_v)$, which by Lemma \ref{lem:bounding} is bound by $O(b \log^2 n \log \Delta)$.
\end{proof}

Note that if we did not use shells, we would significantly improve our depth bound, but at the cost of potentially processing every influenced vertex in every round. 
This would lead to an $O(\log^2 n)$ depth bound (whp) and $O(b \log^3 n \log \Delta)$ (expected) work bound. Note that the depth would be the same asymptotically as the depth for parallel static \rlfmis{}.

\begin{corollary}
There exists an algorithm for dynamically maintaining an \rlfmis{} with expected $O(b \log^3 n \log \Delta)$ work on a batch of size $b$, and $O(\log^2 n)$ depth whp.

\end{corollary}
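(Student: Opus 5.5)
The plan is to run the algorithm of Figure~\ref{alg:main} with the maximally parallel policy $A = Q$: every while-loop iteration processes every queued vertex, with no shells. Correctness holds for any policy that selects at least one vertex per iteration, by the correctness lemmas of Section~\ref{sec:correctness}. So only work and depth remain to be bounded.

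\textbf{Depth.} I will reuse the finalization argument of Lemmas~\ref{lem:u_exists} and~\ref{lem:exists_AE}, applied to the whole graph rather than to a single shell. Those proofs never used the shell structure. They only used three facts: all currently queued vertices are processed each round, lowering and raising alternate, and earlier vertices influence later ones only through edges. Hence, with $A=Q$, a vertex that finalizes at round $r$ ends an AE-path of length at least $r-4$ in the new graph. By Fischer and Noever, AE-paths have length $O(\log n)$ whp, so every vertex finalizes within $O(\log n)$ rounds whp. After that no vertex changes mark, so nothing is enqueued and the loop terminates, exactly as in Lemma~\ref{lem:shell_log_rounds}.

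Each round uses scans, filters, semisorts (for writeMin) and batched bag operations, each of depth $O(\log n)$ whp. Taking a union bound over the $O(\log n)$ rounds gives $O(\log^2 n)$ depth whp. The initial batch-endpoint lowering and raising contribute only $O(\log n)$ more.

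\textbf{Work.} A vertex can only be enqueued when it changes mark, and any vertex that changes mark at any point during propagation is in the influence set $S$. I need to justify this claim. The approach is the same induction as in the connecting lemma for \InfluenceMIS: an intermediate flip of $v$ is triggered by an earlier neighbor flipping or by a batch edge. Such a trigger makes $v$ undecided in \InfluenceMIS, and Lemma~\ref{lem:in_intersection} then shows that $v$ stays undecided until its own iteration. This step is the main obstacle, because intermediate (transient) states are not LFMIS states of either graph. If it resists, I would fall back on the weaker statement that every processed vertex has a processed earlier neighbor or is a batch endpoint, and argue membership in $S$ by induction on $\pi$.

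Given the claim, each processed vertex $v$ is processed at most once per round. By the preceding cost lemma, each processing costs $O(Z_v)$ whp. Since there are $O(\log n)$ rounds whp, the total work is $O(\log n \sum_{v\in S} Z_v)$ whp. On the failure event, the work is still polynomial, because there are at most $2n$ rounds and polynomial work per round, so it contributes negligibly to the expectation. Finally, Lemma~\ref{lem:bounding} gives $E[\sum_{v\in S} Z_v] = O(b\log^2 n\log\Delta)$, so the expected work is $O(b\log^3 n\log\Delta)$.
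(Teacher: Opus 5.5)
Your overall route is the paper's. The paper justifies this corollary only by the remark that dropping shells ($A=Q$) gives $O(\log n)$ rounds of $O(\log n)$ depth each, while every influenced vertex may be processed in every round. That costs one extra $O(\log n)$ factor on $\sum_{v\in S}Z_v$, and Lemma~\ref{lem:bounding} finishes. Your depth argument and your treatment of the failure event match this. The step you single out, that every enqueued vertex lies in $S$, is also used without comment by the paper (Lemma~\ref{lem:zbound} charges processing to influenced vertices). You are right to flag it, but the justification you sketch does not go through.

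\textbf{Where the sketch fails.} The problem is the appeal to Lemma~\ref{lem:in_intersection}, whose hypothesis is $v\in M(G)$.
\begin{itemize}
\item A vertex pushed into $V^?$ by an earlier influenced neighbor, or by a batch edge, may later be deleted from $V$ by a \emph{decided} earlier vertex $w$ across a decided edge.
\item In that case $v\notin S$ even though your ``trigger'' occurred. Lemma~\ref{lem:remove_everywhere} only gives ``$v\in V_i^?$ or $v\notin V_i$''.
\item Your fallback fails for the same reason: having an influenced earlier neighbor, or being a batch endpoint, does not place $v$ in $S$.
\end{itemize}
What is missing is an argument that such a $v$ never actually flips.

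\textbf{A fix.} Prove by induction on $\pi$ that if $v\notin S$, the mark of $v$ never changes during \texttt{propagate}. Let $v=\pi(i)\notin S$; there are two cases.
\begin{itemize}
\item Case (a), $v\notin V_i$. Its killer $w$ is decided and the edge $(w,v)$ is decided, so it is present in both the old and new graph. Hence $w\in M(G)$, and by induction $w$ is marked throughout. This gives the invariant $l(v)\le w$:
  \begin{itemize}
  \item it holds initially, since $(w,v)$ is an old edge;
  \item lowering only decreases $l(v)$;
  \item any raise of $v$ is from $u=l(v)$, and $u\neq w$ because $w$ never unmarks and $(w,v)\notin E_-$. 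So $u<w$, the marked $w$ lies in $N^l(v)$, and \texttt{findReplacement} returns a vertex $\le w$.
  \end{itemize}
  So $v$ is never marked.
\item Case (b), $v\in V_i\setminus V_i^?$. As in the proof of Lemma~\ref{lem:lfmisg_markings}, every earlier neighbor of $v$ in $G_{all}$ is of type (a), hence never marked. So $v$ is never lowered; in particular $O(u)$ is false for every inserted edge $(u,v)$ with $u<_\pi v$. Thus $v$ stays marked.
\end{itemize}
Only vertices whose mark changes are ever enqueued, so every processed vertex is in $S$. The rest of your argument then goes through unchanged.
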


On the other hand, making our algorithm fully sequential (processing the vertices one at a time, in priority queue order) would not further improve our work bound. Thus, our parallel algorithm is work-efficient. 

%% file: 7_conclusion.tex
\section{Conclusion}

-In this work, we have given an algorithm for maintaining a maximal independent set that supports batch-dynamic edge updates in $O(b \log^2 n \log \Delta)$ expected work and polylog depth \whp{} on a batch of size $b$.  
Noting that the expected neighborhood size of the $i^{th}$ vertex is $O(n/i)$ in expectation (in addition to being $O(\frac{n \log n}{i})$ whp), it would be interesting to try to reduce the number of log factors in our work. We note that a recent Masters' thesis by Krekelberg attempted to develop a $O(\log^2 n)$ time sequential dynamic algorithm for MIS, but appears to contain unproven technical lemmas and is incomplete \cite{krekelberg23fully}. 

Despite the high-level similarity between dynamic maximal matching and MIS, there is a gap currently between the state of the art dynamic MIS and maximal matching algorithms: the current best sequential dynamic lexicographic-first maximal matching algorithm has $O(\log \log^2 n)$ runtime \cite{das2026history}. Past work has also discussed the gap between these algorithms \cite{assadi2018fully}. Future work could attempt to find better bounds for MIS, or show lower bounds.

There has also been work on dynamic MIS on uniformly sparse graphs that relies on maintaining a low out-degree orientation \cite{onak2020fully}. Noting that parallel batch-dynamic algorithms were recently developed for low out-degree orientation \cite{blelloch26faster}, it would be interesting to try to parallelize Onak et al.'s techniques to improve the logarithmic term on uniformly sparse graphs.

%% file: 8_appendix.tex
\section{Appendix}

\subsection{Batch processing operations \label{sec:atomics}}

In our algorithms, we often will apply a map and filter to a list of pairs, as shown below. Let $L$ be an array, where each element is a pair $(u,n)$, for some variable $u$ and value $n$. In the function \mbox{setMins}, we want to set a field of $u$, $l(u)$, to be the minimum of the given pairs. If $u$ is not present in a pair, then we want to leave $l(u)$ unchanged. A first attempt at such a function would look like the following.

\begin{lstlisting}[basicstyle=\small\sffamily, keywordstyle=\bfseries, mathescape=true, escapeinside={@}{@}, morekeywords={parfor,for,if,else,foreach,return,while}, columns=flexible]
  setMinsUnsafe$(L)$:
   parfor $(u,n) \in L$:
       $l(u) \gets \min(l(u),n)$

\end{lstlisting}

However, because concurrent writes have an arbitrary output, and this involves concurrent writes, we need an alternative. Note that we can do the same effect in a concurrency-safe manner by using a groupBy (which is implemented with a semisort), as follows.

\begin{lstlisting}[basicstyle=\small\sffamily, keywordstyle=\bfseries, mathescape=true, escapeinside={@}{@}, morekeywords={parfor,for,if,else,foreach,return,while}, columns=flexible]
  setMinsSafe$(L)$:
    $L' \gets \mbox{groupBy}(L)$
    parfor $(v,N) \in L'$:
      $l(v) \gets \min(N)$
\end{lstlisting}

This has the same expected work and (whp) span as above (in binary forking), but is concurrency-safe. However, in pseudocode this is more complicated, and confuses reading. Thus, in this work, we will write pseudocode in the first form, with the understanding that to avoid concurrency issues an actual implementation would look more like the second form. We note that prior work in the field does the same \cite{BB25}.

\subsection{Challenge: Transferring Runtime Arguments of Prior Work \label{sec:transferHard}}

First, note that the batch influence set is not the union of the individual edge update influence sets, and so the work of the batch algorithm is not bounded by the sum of the work of the sequential updates. 
However, not even the proof approaches of CZ and BDHSS for bounding the runtime transfer to the batch setting. 
We will discuss in detail the issue of bounding the 2-level neighborhood of influence set vertices. 

When a vertex early in the permutation is in the influence set, its surviving 2-level neighborhood size is large, and processing this vertex is more expensive than processing a vertex late in the permutation. This neighborhood size is partially bounded in CZ and BDHSS by the time of the later endpoint of the dynamic edge update. 
Consider a set of updates $(u_i,v_i)$, where $u_i < v_i$, and $v_1 = \text{argmin}_i v_i$. Although we could process $O(1)$ vertices at time $v_1$, processing all $O(b)$ vertices expected to be in the batch influence set at time $v_1$ would be too costly. It is technically difficult to assign responsibility for influenced vertices to the edge that directly influenced them (or influenced via some path) because some vertices are influenced by more than one edge and since the influence set itself is randomly selected by the same randomness that is giving the vertices their relative positions in the permutation.

\subsection{Verifying an LFMIS \label{sec:lfmisVerify}}

\begin{lemma}
    We have that a set $M$ is the LFMIS on $\pi$ iff the following two conditions hold:
    \begin{enumerate} 
    \item for all $u \not\in M$, there exists $v \in N(u) \cap M$ with $\pi(v) < \pi(u)$ (all unmarked vertices have an earlier marked neighbor). This encapsualtes maximality and the lexicographic first condition. Note that $N(u)$ means the neighbors of $u$. 
    \item $M$ is independent (for all $u, v \in M$, $(u,v) \not\in E$) 
    \end{enumerate}
    \label{lem:lfmis}
\end{lemma}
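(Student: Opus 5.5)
We prove the two directions separately, comparing $M$ with the greedy LFMIS $M^*$ produced by scanning $\pi$ in order. Throughout, we use induction on $\pi$-position.

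\emph{($\Rightarrow$).} Suppose $M = M^*$. Independence (condition 2) is immediate: when the greedy algorithm adds $v$ to $M^*$, it eliminates every neighbor of $v$. A later neighbor of $v$ is therefore never added. An earlier neighbor $w$ of $v$ cannot be in $M^*$ either, since otherwise $w$ would have eliminated $v$ before $v$ was reached.

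For condition 1, take $u \notin M^*$. The greedy algorithm skipped $u$ only because $u$ had already been eliminated when it was reached. Elimination of $u$ happens only when some vertex $v$ with $\pi(v) < \pi(u)$ is added to $M^*$ and $v \in N(u)$. So $u$ has an earlier marked neighbor.

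\emph{($\Leftarrow$).} Suppose $M$ satisfies conditions 1 and 2. We show by strong induction on $i$ that $\pi(i) \in M \iff \pi(i) \in M^*$. Assume this holds for all positions $< i$, and let $u = \pi(i)$. By the greedy rule, $u \in M^*$ iff no earlier neighbor of $u$ lies in $M^*$. By the inductive hypothesis, this holds iff no earlier neighbor of $u$ lies in $M$. We now check both cases.
\begin{itemize}
\item If $u \in M$, condition 2 says no neighbor of $u$ is in $M$, in particular no earlier one. Hence $u \in M^*$.
\item If $u \notin M$, condition 1 supplies an earlier neighbor $v \in M$. By the inductive hypothesis $v \in M^*$, so the greedy algorithm eliminated $u$ and $u \notin M^*$.
\end{itemize}
The base case $i = 1$ is the same argument with no earlier vertices. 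Condition 1 then cannot hold for $u = \pi(1)$, which forces $\pi(1) \in M$; likewise $\pi(1) \in M^*$.

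Both directions are routine. The one point needing care is in the ($\Leftarrow$) direction: the inductive hypothesis is applied only to vertices earlier in $\pi$, which is legitimate because both the greedy rule and condition 1 refer only to earlier neighbors. Maximality of $M$ follows from condition 1 and need not be checked separately.
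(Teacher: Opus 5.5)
Your proof is correct and follows the paper's argument. The forward direction is the same direct observation, and the backward direction is the same induction on $\pi$-position using condition 1, condition 2, and the inductive hypothesis on earlier vertices; the only cosmetic differences are that you split cases on membership in $M$ rather than in the greedy LFMIS, so your two cases are the contrapositives of the paper's two proof-by-contradiction cases, and you spell out the forward direction in more detail.
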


\begin{proof}
For the forward direction of the implication, suppose that $M$ is the LFMIS. Then all unmarked vertices have an earlier marked neighbor (an eliminator). Furthermore, no two adjacent vertices are marked. 

Now for the backward direction of the implication, suppose that for all $u \not\in M$, there exists $v \in N(u) \cap M$ with $\pi(v) < \pi(u)$, and that for all $u,v\in M$, that $(u,v) \not\in E$. We perform induction on the loop of the LFMIS construction.

Base case: the first vertex. Note that $u=\pi(1) \in \ $LFMIS. Suppose BWOC that $u \not\in M$. Then there exists an earlier marked neighbor of 1, but there are no earlier neighbors of 1 at all, resulting in a contradiction. Therefore $u \in M$.

Inductive step: suppose that the LFMIS and $M$ agree on vertices with $\pi$ values 1 through $r-1$, consider $v_r=\pi(r)$. 

\begin{enumerate}
    \item[Case 1:] $v_r$ is not in the LFMIS. Suppose BWOC that $v_r$ is in $M$. By construction of the LFMIS, note that there exists an earlier neighbor $v$ in the LFMIS. By the IH, $M$ and the LFMIS agree on $v$, so $v \in M$. But $v$ and $v_r$ being adjacent and both marked this contradicts the condition that $M$ is independent, thus $v_r \in M$. 
    \item[Case 2:] $v_r$ is in the LFMIS. Suppose BWOC that $v_r$ not in $M$. Then by our requirement we have an earlier neighbor $v \in M$. By the IH, $v$ would be in the LFMIS, but then $v$ would have eliminated $v_r$, which is a contradiction. Thus $v_r \in M$.
\end{enumerate}

Therefore the LFMIS and $M$ agree. \end{proof}